\newcommand{\eps}{\varepsilon}
\newcommand{\HAM}{\mathsf{Ham}}
\newcommand{\sh}{\mathsf{sh}}
\newcommand{\MP}{\mathsf{MP}}
\newcommand{\PM}{\mathsf{P}}
\newcommand{\MI}{\mathsf{MI}}
\newcommand{\minrot}{\mathsf{minrot}}
\newcommand{\shft}{\mathsf{cyc}}
\newcommand{\rot}{\mathsf{rot}_n}
\newcommand{\modulo}{\operatorname{mod}}
\newcommand{\floor}[1]{\left\lfloor{#1}\right\rfloor}
\newcommand{\fnp}{f_{\mathsf{n}}}
\newcommand{\fp}{f_{\mathsf{c}}}
\renewcommand{\P}{\mathsf{C}}
\newcommand{\NP}{\mathsf{N}}
\newcommand{\cmod}{\hspace{-0.002\textwidth}\circlearrowright\hspace{-0.002\textwidth}}
\newcommand{\per}{\mathsf{per}}
\newcommand{\prr}{\mathsf{root}}
\newcommand{\dd}{\mathinner{\ldotp\ldotp}}
\newcommand{\Oh}{O}
\newcommand{\Ohtilde}{\tilde{\Oh}}
\newcommand{\Exp}{\mathbb{E}}
\theoremstyle{plain}
\newtheorem{definition}{Definition}[section]
\newtheorem{lemma}[definition]{Lemma}
\newtheorem{theorem}[definition]{Theorem}
\newtheorem{problem}[definition]{Problem}
\newtheorem{proposition}[definition]{Proposition}
\newtheorem{observation}[definition]{Observation}
\newtheorem{corollary}[definition]{Corollary}
\newtheorem{construction}[definition]{Construction}
\newtheorem{fact}[definition]{Fact}
\newtheorem{claim}[definition]{Claim}
\newtheorem{remark}[definition]{Remark}
\renewcommand{\H}{\mathcal{H}}
\newcommand{\sm}{\setminus}
\newcommand{\sub}{\subseteq}
\renewcommand{\S}{\Pi}
\newcommand{\dec}{\mathtt{dec}}
\newcommand{\sk}{\mathtt{sk}}
\renewcommand{\circ}{\mathtt{circ}}
\title{Improved Circular $k$-Mismatch Sketches}
\author[1]{Shay Golan}
\author[1]{Tomasz Kociumaka}
\author[1]{Tsvi Kopelowitz}
\author[1]{Ely Porat}
\author[2]{Przemys\l{}aw~Uzna\'nski}
\affil[1]{Department of Computer Science, Bar-Ilan University, Ramat Gan, Israel\thanks{Supported in part by ISF grants no.\ 1278/16 and 1926/19, by a BSF grant no.\ 2018364, and by an ERC grant MPM under the EU's Horizon 2020 Research and Innovation Programme (grant no. 683064).}}
\affil[2]{Institute of Computer Science, University of Wrocław, Poland\thanks{Supported by Polish National Science Centre grant 2019/33/B/ST6/00298.}}
\date{}
\begin{document}

\maketitle

\begin{abstract}
The shift distance $\sh(S_1,S_2)$ between two strings $S_1$ and $S_2$ of the same length is defined as the minimum Hamming distance between $S_1$ and any rotation (cyclic shift) of $S_2$. We study the problem of sketching the shift distance, which is the following communication complexity problem: Strings $S_1$ and $S_2$ of length $n$
are given to two identical players (encoders), who independently compute sketches (summaries) $\sk(S_1)$ and $\sk(S_2)$, respectively, so that upon receiving the two sketches, a third player (decoder) is able to compute (or approximate) $\sh(S_1,S_2)$ with high probability.

This paper primarily focuses on the more general $k$-mismatch version of the problem, where the decoder is allowed to declare a failure
if $\sh(S_1,S_2)>k$, where $k$ is a parameter known to all parties.
Andoni et al. (STOC'13) introduced exact circular $k$-mismatch sketches of size $\Ohtilde(k+D(n))$, where $D(n)$ is the number of divisors of $n$. Andoni et al.\ also showed that their sketch size is optimal in the class of linear homomorphic sketches.

We circumvent this lower bound by designing a (non-linear) exact circular $k$-mismatch sketch of size $\Ohtilde(k)$; this size matches communication-complexity lower bounds. We also design $(1\pm \eps)$-approximate circular $k$-mismatch sketch of size $\Ohtilde(\min(\eps^{-2}\sqrt{k}, \eps^{-1.5}\sqrt{n}))$, which improves upon an $\Ohtilde(\eps^{-2}\sqrt{n})$-size sketch of Crouch and McGregor (APPROX'11).
\end{abstract}

\section{Introduction}
The \emph{Hamming distance}~\cite{H50} is a fundamental metric for strings,
and computing the Hamming distances in various settings is a central task in text processing.
The Hamming distance of two length-$n$ strings $S_1$ and $S_2$ is defined
as the number of aligned mismatches between $S_1$ and $S_2$.
In the $k$-mismatch variant~\cite{Abrahamson87,AmirLP04,k-mismatch,GU18,K:1987}, the problem is parameterized by an integer $1\le k \le n$,
and the task is relaxed so that if $\HAM(S_1,S_2)>k$, then instead of computing $\HAM(S_1,S_2)$, the algorithm is only required to report that this is the case, without computing the distance.
Since computing the exact Hamming distance, both in the classic version and the $k$-mismatch version, is challenging under some efficiency constraints, a large body of research~\cite{k-mismatch,Karloff93,KP:15,KopelowitzP18} focused on the approximation version of both problems.
Formally, in the $(1\pm \eps)$-approximation variant of either problem, the problem is parameterized by $\eps > 0$, and whenever the algorithm should report $\HAM(S_1,S_2)$ in the original problem, in the approximation variant, the algorithm may report a $(1\pm\eps)$-approximation of $\HAM(S_1,S_2)$.

\subparagraph*{Sketching.} Sketching is one of the settings of sublinear algorithms designed for space-efficient and time-efficient processing of big data, with applications in streaming algorithms, signal processing, network traffic monitoring, and other areas~\cite{DBLP:journals/cacm/Cormode17,cormode2011sketch,nelson2012sketching}.
The task of sketching the Hamming distance boils down to constructing two (randomized) functions $\sk : \Sigma^n \to \{0,1\}^*$ and $\dec : \{0,1\}^*\times \{0,1\}^* \to \mathbb{N}$
such that $\dec(\sk(S_1),\sk(S_2))=\HAM(S_1,S_2)$ holds with high probability\footnote{An event $\mathcal E$ is said to happen with high probability if $\Pr[\mathcal E] \ge 1-n^{-\Omega(1)}$.}.
The communication-complexity interpretation of this problem involves three players sharing public randomness:
two identical encoders and a decoder.
The first encoder receives a string $S_1$, while the second encoder receives a string $S_2$.
Each of the encoders needs to independently summarize its string.
The summaries (sketches) are then sent to the decoder, whose task is to retrieve $\HAM(S_1,S_2)$
based on the summaries alone, without access to $S_1$ or $S_2$.
The sketching complexity of Hamming distance, which is the size of the sketch, is well understood:
the optimal sketch size is $\tilde{\Theta}(n)$ for the base variant~\cite{PL07,DBLP:conf/soda/Woodruff04}, $\tilde{\Theta}(k)$ for the $k$-mismatch variant~\cite{DBLP:journals/ipl/HuangSZZ06,PL07},
and $\tilde{\Theta}(\eps^{-2})$ for the $(1\pm \eps)$-approximate variants~\cite{ALON1999137,DBLP:journals/siamcomp/KushilevitzOR00,DBLP:conf/soda/Woodruff04}.\footnote{Throughout this paper, the $\tilde\Theta(\cdot),\tilde \Omega(\cdot)$, and $\Ohtilde(\cdot)$ notations suppress $\log^{\Oh(1)}n$ factors.}
Much less is known about the sketching complexity of edit distance: it is $\tilde{\Theta}(n)$
for the base variant and $\Ohtilde(k^8)$ for the $k$-error variant~\cite{BZ16}.
Approximate edit distance sketches with super-constant approximation ratios are also known; see e.g.~\cite{CGK16,OR07}.

\subparagraph*{The shift distance.}  We consider the \emph{shift distance}~\cite{AGMP13,AIK08,CM11}, which is a cyclic variant of Hamming distance.
For two strings $S_1,S_2\in \Sigma^n$, the shift distance is defined as the minimum
Hamming distance between $S_1$ and any cyclic shift (rotation) of $S_2$.
Formally, if $\shft$ is a function cyclically shifting a given string (by one position to the left), then  $\sh(S_1,S_2)=\min\{\HAM(S_1,\shft^m(S_2)) \mid m \in \mathbb{Z}\}$ is the shift distance between $S_1$ and $S_2$.
The research on shift distance for sublinear algorithms is mostly motivated by the observation that the shift distance shares many similarities with the fundamental  Hamming distance.
At the same time, shift distance  inherits some of the challenges exhibited in the edit distance, e.g., in the context of low-dimensional embeddings to $\ell_1$~\cite{KN06} and asymmetric query complexity~\cite{AKO10}.

The first sketching scheme for shift distance, by Andoni et al.~\cite{AIK08},
allows for $\Oh(\log^2 n)$-approximation using sketches of size $\Ohtilde(1)$.
Crouch and McGregor~\cite{CM11} showed $(1\pm\eps)$-approximate sketches for shift distance that use $\Ohtilde(\eps^{-2}\sqrt{n})$ space.
Andoni et al.~\cite{AGMP13} designed exact $k$-mismatch circular sketches that use $\Ohtilde(D(n)+k)$ space, where $D(n)$ is the number of divisors of $n$, which is $n^{\Theta(1/\log \log n)}$ in the worst case. In~\cite{AGMP13}, it is proven that $\tilde \Omega(D(n))$ is a lower bound for any linear homomorphic sketch for the shift distance $k$-mismatch problem.\footnote{A sketch is \emph{homomorphic} if $\sk(\shft(S))$ can be retrieved from $\sk(S)$ and \emph{linear} if $\sk$ is a linear mapping.}

\subparagraph*{Our results.}
We consider a (slight) generalization of the problem of sketching the shift distance, where the decoder needs to retrieve $\HAM(S_1,\shft^m(S_2))$ for every $m\in \mathbb{Z}$.
We consider the problem both in the exact setting and in the $(1\pm\eps)$-approximation version.

\begin{problem}\label{def:circksk}
	An \emph{exact circular $k$-mismatch sketch} ($k$-ECS) for $\S\sub \Sigma^n$ is a pair of randomized functions\footnote{
	A randomized function $f : X \to Y$ is a random variable whose values are functions from $X$ to $Y$.}
	$\sk : \S \to \{0,1\}^{*}$ and $\dec : \{0,1\}^{*} \times \{0,1\}^{*} \times \mathbb{Z} \to \mathbb{N}$ such that,
	for every $S_1,S_2\in \S$ and $m\in \mathbb{Z}$, the following holds with high probability:
	\begin{itemize}
		\item if $\HAM(S_1,\shft^m(S_2)) \le k$, then $\dec(\sk(S_1),\sk(S_2),m)=\HAM(S_1,\shft^m(S_2))$,
		\item otherwise, $\dec(\sk(S_1),\sk(S_2),m)> k$.
	\end{itemize}
\end{problem}

\begin{problem}\label{def:circepssk}
	A \emph{$(1\pm \eps)$-approximate circular $k$-mismatch sketch} ($(\eps,k)$-ACS) for $\S\sub \Sigma^n$
	is  a pair of randomized functions $\sk : \S \to \{0,1\}^{*}$ and $\dec : \{0,1\}^{*} \times \{0,1\}^{*} \times \mathbb{Z} \to \mathbb{R}$ such that, for every $S_1,S_2\in \S$ and $m\in \mathbb{Z}$,
	the following holds with high probability:
	\begin{itemize}
		\item if $\HAM(S_1,\shft^m(S_2)) \le k$, then $\dec(\sk(S_1),\sk(S_2),m)\in (1\pm\eps)\HAM(S_1,\shft^m(S_2))$,
		\item otherwise, $\dec(\sk(S_1),\sk(S_2),m)>(1-\eps)k$.
	\end{itemize}
\end{problem}

In this paper,  a sketch for $\Pi \subseteq \Sigma^n$ is of size $s$ if for every $S\in \S$, we have $|\sk(S)|\le s$  with high probability.
Our results are stated in the following theorems.

\begin{restatable}{theorem}{thmexact}\label{thm:exact-circular-sketch}
	There exists a $k$-ECS sketch for $\Sigma^n$ of size $\Ohtilde(k)$.
\end{restatable}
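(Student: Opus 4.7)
The plan is to augment the standard (non-circular) $k$-mismatch sketch $\sigma(\cdot)$ of Porat and Lipsky~\cite{PL07}, which already occupies only $\Ohtilde(k)$ bits and resolves Hamming distance at a fixed alignment, with a small ``self-alignment profile'' of $S$ that captures the ways in which $S$ is close to its own cyclic shifts. The $\tilde{\Omega}(D(n))$ lower bound of~\cite{AGMP13} holds only for linear homomorphic sketches, so the freedom to compute and store a non-linear structural summary of $S$ at encoding time is the natural route to $\Ohtilde(k)$. Given $\sk(S_1)$ and $\sk(S_2)$, the decoder first inspects the two profiles and the query shift $m$ to identify a regime (aperiodic vs.\ periodic) and then invokes the appropriate reconstruction to output $\HAM(S_1,\shft^m(S_2))$ whenever this value is at most $k$.

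The structural backbone is the classical self-alignment argument: whenever $\HAM(S_1,\shft^{m_1}(S_2))\le k$ and $\HAM(S_1,\shft^{m_2}(S_2))\le k$, the triangle inequality yields $\HAM(S_2,\shft^{m_1-m_2}(S_2))\le 2k$, so $m_1-m_2$ is a \emph{$2k$-approximate period} of $S_2$. Consequently, if $S_2$ has no non-trivial $2k$-approximate period, at most one shift can be a candidate and it suffices to locate it (by fingerprint matching) and certify its distance using $\sigma$. In the opposite regime, $S_2$ admits a short approximate period $p\le 2k$, so $S_2$ is determined, up to $\Oh(k)$ local deviations, by a pattern of length $p=\Ohtilde(k)$; both the pattern and the deviations fit in $\Ohtilde(k)$ bits. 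The intermediate case, where $S_2$ has approximate periods of length between $2k$ and $n$, must be handled by a finer but analogous decomposition, exploiting the fact that by Fine--Wilf-type arguments for approximate periodicity such periods have a restricted structure with only $\Ohtilde(k)$ generators.

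Accordingly, $\sk(S)$ would consist of $\sigma(S)$, a description of the set of short approximate periods of $S$ together with their defining patterns, the $\Oh(k)$ positions where $S$ deviates from each such period, and a handful of Karp--Rabin fingerprints used to pinpoint candidate shifts in the aperiodic regime. I expect the main obstacle to lie in reconciling the two independently computed periodic summaries: since the encoders do not communicate, the approximate periods chosen for $S_1$ and $S_2$ need not be commensurate, and combining them at query time to read off $\HAM(S_1,\shft^m(S_2))$ \emph{exactly}---uniformly in $m$ and without double-counting deviations---requires a careful case analysis on how the periods of $S_1$ and $S_2$ interact with the queried shift $m$. This is where the non-linearity of the sketch is essential and the $\Ohtilde(k)$ budget becomes tight, so it is the step I would expect to consume most of the technical work when converting this sketch into a rigorous proof.
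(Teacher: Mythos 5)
Your high-level framing (a periodic/aperiodic dichotomy, exploiting non-linearity to beat the $\tilde\Omega(D(n))$ bound) points in the right direction, but two concrete steps fail. First, in the aperiodic regime you propose to locate the unique candidate shift $m^*$ and then ``certify its distance using $\sigma$,'' but this cannot be done with a single Porat--Lipsky sketch per string: $\sigma(S_1)$ and $\sigma(S_2)$ are anchored at the (arbitrary, unaligned) starting positions of $S_1$ and $S_2$, and $m^*$ depends on the \emph{other} party's input, so neither encoder can precompute $\sigma(\shft^{m^*}(S_2))$. Storing sketches of all rotations is too expensive, and canonical anchors such as the minimal rotation are not stable under $k$ substitutions. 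The paper's central technical contribution exists precisely to fill this hole: a rotation-covariant selection function $f$ with $|f(S)|=\Ohtilde(k)$ and $|f(S_1)\cap f(S_2)|\ge k$ whenever $\HAM(S_1,S_2)\le k$ (Theorem~\ref{thm:points-set}), so that the encoder can obliviously store non-circular sketches anchored at $f(S)$ and the decoder is guaranteed $k$ aligned pairs. Nothing in your proposal supplies these shared anchors, and this is not a detail one can defer --- it is the crux of the problem.

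Second, your reduction in the periodic regime rests on a false implication: $\HAM(S_2,\shft^{p}(S_2))\le 2k$ does \emph{not} mean that $S_2$ is within $\Oh(k)$ substitutions of a $p$-periodic string. For example, $S=\texttt{0}^{n/2}\texttt{1}^{n/2}$ satisfies $\HAM(S,\shft(S))=2$, yet it is at Hamming distance $n/2$ from every constant string; the few ``defects'' of an approximate period can propagate. This is why the paper defines pseudo-periodicity directly as small Hamming distance to an actual high power (so that the base is unique by Lemma~\ref{lem:highly-periodic-same-root} and the mismatch information to the base fits in $\Ohtilde(k)$), and why the residual class of strings that are locally highly periodic but \emph{not} pseudo-periodic requires the elaborate construction of $\fp$ in Section~\ref{sec:positions-p}, which extends each periodic run until its mismatch density against its own period crosses a threshold and selects those mismatch positions. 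The ``intermediate case'' you wave at with a Fine--Wilf-type argument is exactly this regime, and it is where most of the technical work of the paper lives; as written, your proposal does not contain an argument for it.
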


\begin{restatable}{theorem}{thmappx}\label{thm:approx_circular_sketch}
	There exists an $(\eps,k)$-ACS sketch for $\Sigma^n$ of size $\Ohtilde(\min(\eps^{-2}\sqrt k,\eps^{-1.5}\sqrt n))$.
\end{restatable}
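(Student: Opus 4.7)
The plan is to derive the two bounds $\Ohtilde(\eps^{-2}\sqrt k)$ and $\Ohtilde(\eps^{-1.5}\sqrt n)$ separately; the smaller one is chosen based on the relation between $k$ and $\eps^2 n$. Both bounds will use the exact $k$-ECS of Theorem~\ref{thm:exact-circular-sketch} as a black box.

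For the $\Ohtilde(\eps^{-2}\sqrt k)$ bound, I would reduce to the exact case via random subsampling of positions. The high-level idea is that if one samples each position independently with probability $p$, then for each fixed shift $m$ the subsampled Hamming distance concentrates around $p\cdot h_m$ (where $h_m:=\HAM(S_1,\shft^m(S_2))$), so applying the exact $k'$-ECS of Theorem~\ref{thm:exact-circular-sketch} with a suitable threshold to the subsampled strings and rescaling by $1/p$ gives a $(1\pm\eps)$-approximation. Choosing $p$ is a trade-off: a larger $p$ gives tighter concentration (allowing a union bound over all $n$ shifts and the $\log n$ dyadic distance levels) while a smaller $p$ gives a smaller downstream $k'$. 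The analysis is non-trivial because sampling misaligns the cyclic structure: for shift $m$, the positions of $S_2$ that matter are $T-m \bmod n$, not $T$, so the ``both endpoints sampled'' event only has probability $p^2$, and the plain padded-subsampled Hamming distance also accumulates $|T\triangle (T-m)|$ spurious null mismatches that must be corrected. I would address this by using a \emph{structured} sample (e.g.\ a union of $\Ohtilde(\sqrt k)$ intervals, or a set drawn from a pairwise-independent hash on indices) so that the decoder can compute or accurately estimate $|T\cap(T-m)|$ for every $m$ and subtract the null contribution. Plugging this through and setting $p = \Theta(1/\sqrt k)$ (up to polylog factors) should yield the $\Ohtilde(\eps^{-2}\sqrt k)$ target.

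For the $\Ohtilde(\eps^{-1.5}\sqrt n)$ bound, I would adopt the block-decomposition of Crouch and McGregor~\cite{CM11}: partition $[n]$ into $\Theta(\sqrt n)$ blocks of length $\Theta(\sqrt n)$, so that every shift $m$ decomposes into a block-offset $q$ and an intra-block offset $r\in[0,\sqrt n)$. For each of the $\sqrt n$ values of $r$, one maintains a $(1\pm\eps)$-approximate sketch aggregating the $\sqrt n$ block-aligned Hamming contributions. Plugging in a standard $\Ohtilde(\eps^{-2})$-size approximate Hamming sketch reproduces the Crouch--McGregor bound $\Ohtilde(\eps^{-2}\sqrt n)$; the improvement to $\eps^{-1.5}$ would come from exploiting the $\sqrt n$-fold averaging across blocks to relax each per-block sketch's accuracy requirement, shrinking the $\eps$-dependency by a factor of $\sqrt\eps$.

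The hardest part, I expect, is the first bound: correcting for the spurious null mismatches that arise from the misalignment between $T$ and $T-m$ without blowing up the sketch size. The natural tool is a shift-structured sampling scheme whose intersection pattern with shifts is controllable from a succinct description, combined with a multi-level dyadic argument that stratifies the true distance $h_m$ so that the decoder picks the right scale at query time.
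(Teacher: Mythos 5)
Both halves of your plan have genuine gaps, and in both cases the paper takes a different route.

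For the $\Ohtilde(\eps^{-2}\sqrt k)$ bound, the reduction ``subsample positions, run the exact circular sketch, rescale'' does not go through, and the obstacle you flag as the hardest part is in fact fatal to this approach. If $T\subseteq[n]$ is the sampled set (rate $p$) and unsampled positions are padded with a null character, then at shift $m$ the subsampled instance has $|T\triangle(T-m)|=\Theta(pn)$ spurious mismatches in addition to the $\approx p^2 h_m$ genuine ones. The exact $k'$-ECS of Theorem~\ref{thm:exact-circular-sketch} has size $\Ohtilde(k')$ and must have capacity for \emph{all} mismatches present in the instance it decodes, so $k'=\Omega(pn)$; with the $p\gtrsim \eps^{-1}/\sqrt{k}$ needed for concentration at distance $k$, this is $\Omega(\eps^{-1}n/\sqrt k)$, far above the target. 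Subtracting the spurious mismatches at decode time requires knowing the characters of $S_1$ (resp.\ $S_2$) at positions of $T\setminus(T-m)$ against which the nulls are aligned --- exactly the information not stored --- and no choice of ``structured'' $T$ fixes this, because no nontrivial subset of $[n]$ is invariant under all rotations. The paper's actual construction for this bound never subsamples string positions and never invokes the exact sketch: it builds a \emph{rotation-covariant} selection function $f$ with $|f(S)|=\Ohtilde(k)$, $f(\shft(S))=\rot(f(S))$, and $|f(S_1)\cap f(\shft^m(S_2))|\ge k$ whenever $\HAM(S_1,\shft^m(S_2))\le k$ (Theorem~\ref{thm:points-set}; this is where all the work on cubic positions and the separate treatment of pseudo-periodic strings lives), then samples $\Ohtilde(\sqrt k)$ anchors $i\in f(S)$ and stores a non-circular $\Ohtilde(\eps^{-2})$-size approximate Hamming sketch of $\shft^i(S)$ for each; a birthday-paradox argument ($p^2k=\Theta(\log n)$ over the $\ge k$ common anchors) guarantees one aligned anchor pair per query shift. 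The rotation-covariance of the anchor set is the key idea your proposal is missing.

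For the $\Ohtilde(\eps^{-1.5}\sqrt n)$ bound, the step ``exploit $\sqrt n$-fold averaging across blocks to relax each per-block sketch's accuracy by $\sqrt\eps$'' is not an argument: the per-block distances are adversarial, and when a single block carries essentially all of $\HAM(S_1,\shft^m(S_2))$ there is no averaging, so each block sketch still needs full $(1\pm\eps)$ accuracy and you only recover Crouch--McGregor's $\Ohtilde(\eps^{-2}\sqrt n)$. The paper instead uses a much simpler construction (Construction~\ref{def:alternative}): store the raw characters at positions of $A\cup B$ for two independent samples of rate $\sqrt{\log n/(\eps^2k)}$, so that for every shift $m$ the set $A\cap\rot^m(B)$ is a rate-$\log n/(\eps^2k)$ sample on which the distance is estimated directly. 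This only approximates distances that are $\Omega(k)$, i.e.\ it is a \emph{relaxed} sketch, and the final theorem is assembled by layering $\Oh(\log k)$ relaxed sketches over dyadic values $k'$, switching between this construction and the $\Ohtilde(\eps^{-2}\sqrt{k'})$ one at the threshold $k'=\eps n$; its size $\Ohtilde(n/(\eps\sqrt{k'}))$ equals $\Ohtilde(\eps^{-1.5}\sqrt n)$ precisely in the regime $k'\ge\eps n$ where it is invoked. Your closing mention of a dyadic stratification is the right instinct, but neither of your two core constructions is salvageable as written.
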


Notice that Theorem~\ref{thm:exact-circular-sketch} circumvents the lower bound of Andoni et al.~\cite{AGMP13} by using non-linear sketches (however, the sketches are still homomorphic).
Moreover, Theorem~\ref{thm:approx_circular_sketch} improves upon the $\tilde O(\eps^{-2}\sqrt n)$ size sketches of Crouch and McGregor~\cite{CM11}, and also addresses the more general $k$-mismatch variant of the problem.

\subparagraph*{Decoding efficiency.}
We also discuss the efficiency of evaluating $\dec(\sk(S_1),\sk(S_2),m)$ for a given $m\in \mathbb{Z}$ and the efficiency of evaluating or approximating $\sh(S_1,S_2)$ based on our sketches.
We show that the naive solution of minimizing $\dec(\sk(S_1),\sk(S_2),m)$ across all $m\in [n]$
can be sped up significantly. Formally, this yields solutions to the following problems.

\newcommand{\shdec}{\dec^{\sh}}
\begin{problem}\label{def:shiftsk}
	An \emph{exact $k$-mismatch shift distance sketch} ($k$-ESDS) for $\S\sub \Sigma^n$ is a pair of randomized functions
	$\sk : \S \to \{0,1\}^{*}$ and $\shdec : \{0,1\}^{*} \times \{0,1\}^{*} \to \mathbb{N}$ such that,
	for every $S_1,S_2\in \S$, the following holds with high probability:
	\begin{itemize}
		\item if $\sh(S_1,S_2) \le k$, then $\shdec(\sk(S_1),\sk(S_2))=\sh(S_1,S_2)$,
		\item otherwise, $\shdec(\sk(S_1),\sk(S_2))> k$.
	\end{itemize}
\end{problem}

\begin{problem}\label{def:shiftepssk}
	A \emph{$(1\pm \eps)$-approximate $k$-mismatch shift distance sketch} ($(\eps,k)$-ASDS) for $\S\sub \Sigma^n$ is a pair of randomized functions
	$\sk : \S \to \{0,1\}^{*}$ and $\shdec : \{0,1\}^{*} \times \{0,1\}^{*} \to \mathbb{R}$ such that,
	for every $S_1,S_2\in \S$, the following holds with high probability:
	\begin{itemize}
		\item if $\sh(S_1,S_2) \le k$, then $\shdec(\sk(S_1),\sk(S_2))\in (1\pm\eps)\sh(S_1,S_2)$,
		\item otherwise, $\shdec(\sk(S_1),\sk(S_2))>(1-\eps)k$.
	\end{itemize}
\end{problem}

The task of designing efficient algorithms for computing our sketches is left open.

\subparagraph*{Related work.}
A problem closely related to the \emph{circular Hamming distances} problem, asking to determine $\HAM(S_1,\shft^m(S_2))$ for all $0\le m < n$, is the \emph{text-to-pattern Hamming distances} problem, where the input consists of a pattern $P$ (of length $m$) and a text $T$ (of length $n$),
and the task is to compute the Hamming distances between $P$ and every length-$m$ substring of $T$.
A straightforward reduction from the circular Hamming distances problem to the text-to-pattern Hamming distances problem is given by $P=S_1$ and $T=S_2\cdot S_2$.

In the offline setting, including the exact and approximate $k$-mismatch variants, we are not aware of any separation between the two problems.
The state-of-the-art exact solution combines an $\Ohtilde(n\sigma)$-time solution for small alphabets (of size $\sigma$)~\cite{FP:1974}
with an $\Ohtilde(n+\frac{nk}{\sqrt{m}})$-time algorithm~\cite{GU18}, which culminates a long line of research~\cite{Abrahamson87,AmirLP04,k-mismatch,K:1987}.
The approximate variant can be solved in $\Ohtilde(\eps^{-1} n)$ time~\cite{KP:15,KopelowitzP18}; these results improve upon~\cite{Karloff93}.
On the other hand, sketches for text-to-pattern Hamming distances need to be much larger than circular sketches:
already recovering exact occurrences requires $\Omega(n-m)$ space~\cite{DBLP:conf/approx/Bar-YossefJKK04}.

Interestingly, both in the exact and in the approximate setting, the sizes of our circular $k$-mismatch sketches coincide with the current upper bounds for space usage in the \emph{streaming} $k$-mismatch problem.
In that model, the text arrives in a stream, one character at a time, and the goal is to compute, or estimate, after the arrival of each text character, the Hamming distance between $P$ and the current suffix of $T$.
The state-of-the-art exact algorithm~\cite{CKP19} uses $\Ohtilde(k)$ space and costs $\Ohtilde(\sqrt{k})$ time per character, which improves upon~\cite{k-mismatch,DBLP:conf/icalp/GolanKP18,Porat:09,RS}.
A recent approximate streaming algorithm~\cite{approxk} uses $\Ohtilde(\min(\eps^{-2}\sqrt{k},\eps^{-1.5}\sqrt{n}))$ space and costs $\Ohtilde(\eps^{-3})$ time per character, which improves upon~\cite{HDstream,DBLP:journals/corr/abs-1907-04405}.

\section{Algorithmic Overview and Organization}
The central technical contribution of our work is a randomized scheme of selecting positions in a given string $S\in \Sigma^n$ so that if $f(S)\subseteq\{1,\ldots,n\}$ is the set of selected positions,
 then the following properties hold:  $|f(S)|=\Ohtilde(k)$ with high probability, the selection is preserved by rotations (the selected positions are shifted along with the underlying characters), and $|f(S)\cap f(T)|\ge k$ with high probability for every $T\in \Sigma^n$ such that $\HAM(S,T)\le k$.

Unfortunately, for integer exponents $\alpha\gg k$, such a selection of positions is infeasible for strings of the form $S=Q^\alpha$ (that we call \emph{high powers}), which are fixed points of $\shft^{n/\alpha}$. Moreover, the selection of positions is also infeasible for strings with a relatively small Hamming distance to some high power.
Hence, we define the problematic strings to be \emph{pseudo-periodic},
exclude them from the selection scheme, and deal with them~separately.

\subparagraph*{Sketches for non-pseudo-periodic strings.}
In Section~\ref{sec:nap}, we construct sketches for non-pseudo-periodic strings using a selection function $f$
satisfying the aforementioned properties.

Our $(\eps,k)$-ACS sketch stores (non-circular) approximate Hamming distance sketches of $\shft^i(S)$
for a random sample of $\Ohtilde(\sqrt{k})$ positions $i\in f(S)$.
Given the $(\eps,k)$-ACS sketches of two strings $S_1,S_2$ and a shift value $m$ such that $\HAM(S_1,\shft^{m}(S_2))\le k$, with high probability, there is a shift $i$ such that
the non-circular sketches of both $\shft^i(S_1)$ and $\shft^{i+m}(S_2)$ are available.
The decoder uses these approximate Hamming distance sketches to approximate $\HAM(\shft^i(S_1),\shft^{i+m}(S_2))=\HAM(S_1,\shft^{m}(S_2))$;
see Section~\ref{sec:approx-nap} for details.

Our $k$-ECS sketch, for each position $i\in f(S)$, stores a (non-circular) sketch of $\shft^i(S)$ capable of retrieving each mismatch with probability $\Theta(\frac {\log n}k)$, but no more than $\Oh(\log n)$ mismatches in total.
Given circular sketches of two strings $S_1,S_2$ and a shift value $m$ such that $\HAM(S_1,\shft^{m}(S_2))\le k$, with high probability,
there are at least $k$ shifts $i$ such that
the non-circular sketches of both $\shft^i(S_1)$ and $\shft^{i+m}(S_2)$ are available.
Each of these $k$ pairs of non-circular sketches yields  random mismatches
between $S_1$ and $\shft^{m}(S_2)$. Consequently, with high probability, each mismatch between $S_1$ and $\shft^{m}(S_2)$
is reported at least once, which allows for the exact computation of $\HAM(S_1,\shft^{m}(S_2))$; see Section~\ref{sec:exact-nap} for details.

\subparagraph*{Selection function.}
The selection function $f$ for non-pseudo-periodic strings is constructed in Section~\ref{sec:positions}.
Our baseline solution is to sample strings of length $\frac{n}{\gamma k}$ (for a constant $\gamma$ fixed in Section~\ref{sec:nap})
with rate $\Ohtilde(\frac{k}{n})$ and, for each sampled string $u$, to add to $f(S)$ the positions where $u$ occurs in $S$.
Unfortunately, since substrings could have much more than $\gamma k$ occurrences, the variance of $|f(S)|$ could be rather large,
and thus substrings with a large number of occurrences need to be excluded from the sample.
This workaround is feasible unless highly periodic regions cover most positions of $S$; see Section~\ref{sec:positions-np}, where the properties of $f$ are proved using concentration arguments (the Chernoff--Hoeffding bound).

In the complementary case of strings mostly covered by highly periodic regions,
we utilize the structure of these regions to deterministically select positions.
If there are many disjoint regions, it suffices to select the boundaries of the regions.
However, in general  we follow a more involved approach inspired by~\cite{BWK19,CKW20}: periodic regions are extended as long as the number of mismatches between the extended region and the period of the region is relatively small compared to the length of the extended region.
The positions of these mismatches are also added to $f(S)$.
Selection of $f$ in this case is the most technically challenging component of our construction; see Section~\ref{sec:positions-p} for~details.

\subparagraph*{Sketches for pseudo-periodic strings.}
Each pseudo-periodic string can be assigned to the nearest high power (the \emph{base})
so that two pseudo-periodic strings $S_1,S_2$  satisfy $\HAM(S_1,S_2)\le k$ only if they share the same base.
Thus, we first design a $0$-mismatch circular sketch (of size $\tilde O(1)$) to be used for comparing the bases
both in the exact and approximate variants.

Our exact $k$-mismatch circular sketch stores the mismatches between the string and its base.
Once the decoder verifies that  $S_1$ and $\shft^m(S_2)$ share the same base,
the mismatches between $S_1$ and $\shft^m(S_2)$ are reconstructed from the mismatches between each of the strings $S_1,\shft^m(S_2)$ and their common base.
The $(\eps,k)$-ACS sketch stores only the mismatches between the string and its base
at $\Ohtilde(\frac{n}{\eps \sqrt{k}})$ sampled positions (so that  $\Ohtilde(\eps^{-1}\sqrt{k})$ mismatches are stored with high probability). Once the decoder verifies that  $S_1$ and $\shft^m(S_2)$ share the same base, the mismatches between $S_1$ and $\shft^m(S_2)$ at $\Ohtilde(\frac{n}{\eps^2 k})$
jointly sampled positions are retrieved to estimate $\HAM(S_1,\shft^m(S_2))$; see~Section~\ref{sec:ap-sketches}.

\subparagraph*{Organization.}
In Section~\ref{sec:nap} and Section~\ref{sec:positions}, we describe the main novel ideas and techniques of this paper, which are used in sketches for strings that are not pseudo-periodic. In Section~\ref{sec:ap-sketches}, we provide sketches for pseudo-periodic strings, and in Section~\ref{sec:summary} we combine the sketches of Section~\ref{sec:nap}
with the sketches of Section~\ref{sec:ap-sketches} in order to prove the main theorems.
Notice that these two cases require a slight overlap so that whenever $\HAM(S_1,\shft^m(S_2))\le k$,
one of the cases accommodates \emph{both} $S_1$ and $S_2$.
In Section~\ref{sec:summary}, we also develop another $(\eps,k)$-ACS sketch, tailored to approximating large distances.
This simple construction improves the size of $(\eps,k)$-ACS sketches (for $k\ge \eps n$) from  $\Ohtilde(\eps^{-2}\sqrt{k})$ to $\Ohtilde(\eps^{-1.5}\sqrt{n})$.
Finally, in Section~\ref{sec:shift-decoding}, we describe efficient decoding algorithms for retrieving the shift distance from the encodings developed for the circular $k$-mismatch sketches.

\section{Preliminaries}\label{sec:preliminaries}
For integers $\ell \le r$, we denote  $[\ell\dd r]=\{\ell,\ell+1,\ldots, r\}$.
Moreover, $[n] = [1\dd n]$.

A string $S$ of length $|S|=n$ is a sequence of characters $S[1]S[2]\cdots S[n]$ over an alphabet $\Sigma$;
in this work, we assume that $\Sigma=[\sigma]$. The set of all length-$n$ strings over $\Sigma$ is denoted by $\Sigma^n$.
A string $T$ is a \emph{substring} of a string $S\in \Sigma^n$ if $T=S[i]S[i+1]\cdots S[j]$ for $1\le i\le j \le n$.
In this case, we denote the occurrence of $T$ at position $i$ by $S[i\dd j]$.
Such an occurrence is a \emph{fragment} of $S$.
A fragment $S[i\dd j]$ is a \emph{prefix} of $S$ if $i = 1$ and a \emph{suffix} of $S$ if $j=n$.


\subparagraph*{Hamming distance.}
The \emph{Hamming distance} $\HAM(S,T)$ of two strings $S,T\in \Sigma^n$ is defined as the number of positions $i\in [n]$ such that $S[i]\ne T[i]$. We denote $\MP(S,T)=\{i\in [n] \mid S[i]\ne T[i]\}$ to be the set of \emph{mismatch positions} and $\MI(S,T)=\{(i,S[i],T[i]) \mid i\in [n], S[i]\ne T[i]\}$ to be the underlying \emph{mismatch information}. Note that $\HAM(S,T)=|\MP(S,T)|=|\MI(S,T)|$.

For a subset $A\sub [n]$, we denote $\MI_A(S,T)=\{(i,a,b)\in \MI(S,T) \mid i\in A\}$
and $\HAM_A(S,T)=|\MI_A(S,T)|$.
The following result, based on the Chernoff bound, shows that $\HAM_A(S,T)$ for random $A$ yields an approximation of $\HAM(S,T)$.
\begin{lemma}\label{lem:chernoff}
	Let $A$ be a random subset of $[n]$ with elements chosen independently at rate $p$.
	For $0<\eps < 1$, we have $\Pr[\HAM_A(S,T)\in (1\pm \eps)p\HAM(S,T)] \ge 1- 2\exp\left(-\tfrac{p\HAM(S,T)\eps^2}3\right)$.
\end{lemma}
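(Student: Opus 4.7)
The plan is to recognize this as a direct application of the standard multiplicative Chernoff bound. Define, for each $i \in \MP(S,T)$, the indicator $X_i = \mathbb{1}[i \in A]$. Since $A$ includes each element of $[n]$ independently with probability $p$, the variables $\{X_i\}_{i\in \MP(S,T)}$ are mutually independent Bernoullis with $\Exp[X_i]=p$, and their sum equals
\[
\HAM_A(S,T) = |A \cap \MP(S,T)| = \sum_{i \in \MP(S,T)} X_i.
\]
By linearity of expectation, $\Exp[\HAM_A(S,T)] = p \cdot \HAM(S,T)$. Note that positions $i \notin \MP(S,T)$ contribute nothing, so only the $\HAM(S,T)$ mismatch positions are relevant.

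Next I would invoke the two-sided multiplicative Chernoff bound for sums of independent $[0,1]$-valued (here, Bernoulli) random variables: for any $0 < \eps < 1$, if $X = \sum_{j=1}^N X_j$ with $\mu = \Exp[X]$, then
\[
\Pr[|X - \mu| > \eps\mu] \le 2\exp\left(-\tfrac{\eps^2 \mu}{3}\right).
\]
Applying this with $N = \HAM(S,T)$ and $\mu = p\,\HAM(S,T)$ immediately gives
\[
\Pr\bigl[\HAM_A(S,T) \notin (1\pm\eps)\,p\,\HAM(S,T)\bigr] \le 2\exp\left(-\tfrac{\eps^2 p\,\HAM(S,T)}{3}\right),
\]
which is exactly the complementary form of the claimed inequality.

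There is no real obstacle here; the only nuance is making sure to restrict the sum to $\MP(S,T)$ rather than all of $[n]$, so that the deviation is controlled by the actual number of mismatches rather than by $n$. The choice of constant $3$ in the denominator of the exponent corresponds to the standard two-sided form of the Chernoff bound (with the weaker lower-tail constant dominating); no extra care is needed since the statement uses that weaker constant on both sides.
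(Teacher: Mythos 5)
Your proof is correct and follows essentially the same route as the paper: reduce $\HAM_A(S,T)$ to a sum of independent Bernoulli indicators over the mismatch positions, compute the mean $p\,\HAM(S,T)$, and apply the standard two-sided multiplicative Chernoff bound. The paper formally sums indicators over all of $[n]$ (with zero-probability indicators off $\MP(S,T)$), but this is the same argument.
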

\begin{proof}
	For each index $i\in[n]$, let $x_i$ be an indicator variable such that $x_i=1$ if $i\in\MI_A(S,T)$ and $x_i=0$ otherwise.
	Note that $\HAM_A(S,T)=|\MI_A(S,T)|=\sum_{i=1}^n x_i$ and that $x_i$ are independent variables.
	For every $i\in\MI(S,T)$, we have $\Pr[x_i=1]=p$ and, for every $i\notin\MI(S,T)$, we have $\Pr[x_i=1]=0$.
	Thus, $\mathbb E[\sum_{i=1}^n x_i]=\mathbb E[\sum_{i\in \MI_A(S,T)} x_i]=p\HAM_A(S,T)$.
	Hence, by the Chernoff bound (see, e.g.,~\cite{Doerr})
	\[\Pr\left[\left|\sum_{i=1}^n x_i-p\HAM_A(S,T)\right|>\eps p\HAM_A(S,T)\right]\le 2\exp\left(-\frac{p\HAM_A(S,T)\eps^2}3 \right).\]
	Thus,  $\Pr[\HAM_A(S,T)\in (1\pm \eps)p\HAM(S,T)] \ge 1- 2\exp(-
	\tfrac{p\HAM(S,T)\eps^2}3)$.
\end{proof}
The triangle inequality yields $\HAM(S,U)\le \HAM(S,T)+\HAM(T,U)$ for $S,T,U\in \Sigma^n$.
The underlying phenomenon also allows retrieving $\MI(S,U)$ from $\MI(S,T)$ and $\MI(T,U)$. The following fact is proved in the following.
\begin{fact}\label{fct:retrieve}
	For every $S,T,U\in \Sigma^n$ and every $A\sub [n]$, the mismatch information $\MI_A(S,U)$ can be retrieved from $\MI_A(S,T)$ and $\MI_A(T,U)$ in time $\Ohtilde(\HAM_A(S,T)+\HAM_A(T,U))$.
\end{fact}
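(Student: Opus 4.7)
\medskip

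\noindent\textbf{Proof plan for Fact~\ref{fct:retrieve}.}
The plan is to exploit the pointwise transitivity of equality: if $S[i]=T[i]$ and $T[i]=U[i]$, then $S[i]=U[i]$, so $i\notin\MP(S,U)$. Contrapositively, $\MP(S,U)\subseteq \MP(S,T)\cup \MP(T,U)$, and intersecting with $A$ gives $\MP_A(S,U)\subseteq \MP_A(S,T)\cup \MP_A(T,U)$. Thus it suffices to examine the positions appearing in the latter union, determine the triple $(S[i],T[i],U[i])$ at each such position, and decide whether $S[i]\ne U[i]$.

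First, I would merge the two input lists by position (e.g., after radix-sorting the entries of $\MI_A(S,T)$ and $\MI_A(T,U)$ by their position coordinate, which takes $\Ohtilde(\HAM_A(S,T)+\HAM_A(T,U))$ time). Then, walking through the merged list, I would handle each index $i$ by case analysis: if $i\in\MP_A(S,T)\cap \MP_A(T,U)$, both triples are present, so $S[i]$, $T[i]$, and $U[i]$ are all read off directly; if $i\in\MP_A(S,T)\setminus \MP_A(T,U)$, the first triple supplies $S[i]$ and $T[i]$, and membership of $i$ in $A\setminus\MP_A(T,U)$ forces $U[i]=T[i]$; the symmetric case is analogous. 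In each case, I would emit $(i,S[i],U[i])$ into the output list iff $S[i]\ne U[i]$.

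Correctness then follows from the containment above together with the case analysis, and the running time is dominated by the merging step, which is $\Ohtilde(\HAM_A(S,T)+\HAM_A(T,U))$. There is no real obstacle here; the only thing to be slightly careful about is that positions absent from both input lists need not be enumerated, since the containment $\MP_A(S,U)\subseteq \MP_A(S,T)\cup \MP_A(T,U)$ guarantees that all relevant mismatches are generated from indices that do appear in the merge.
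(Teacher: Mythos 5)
Your proposal is correct and matches the paper's proof: both rest on the same four-way case analysis at each position (transitivity of equality when a position is absent from one or both lists, and the possible cancellation $a=c$ when it appears in both), with your merge-by-position step simply making explicit the bookkeeping behind the stated running time.
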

\begin{proof}
	For each $i\in A$, we have one of the following four cases:
	\begin{itemize}
		\item if $i\notin \MP(S,T)$ and $i\notin \MP(T,U)$, then $S[i]=T[i]=U[i]$, so $i\notin \MP(S,U)$,
		\item if $(i,a,b)\in \MI(S,T)$ and $i\notin \MP(T,U)$, then $S[i]=a\ne b = T[i]=U[i]$, so $(i,a,b)\in \MI(S,U)$,
		\item if $i\notin \MP(S,T)$ and $(i,b,c)\in \MP(T,U)$, then $S[i]=T[i]=b \ne c = U[i]$, so $(i,b,c)\in \MI(S,U)$,
		\item if $(i,a,b)\in \MI(S,T)$ and $(i,b,c)\in \MP(T,U)$, then $S[i]=a \ne b = T[i] = b \ne c = U[i]$, so $(i,a,c)\in \MI(S,U)$ (if $a\ne c$) or $i\notin \MP(S,U)$ (if $a = c$).\qedhere
	\end{itemize}
\end{proof}

\subparagraph*{Periods.}
An integer $p$ is a \emph{period} of $S\in\Sigma^*$ if and only if $S[i] = S[i+p]$ for all $1\le i \le |S|-p$.
The shortest period of $S$ is denoted $\per(S)$.
If $\per(S)\leq \frac12 |S|$, we say that $S$ is \textit{periodic}.

\subparagraph*{Rotations.}
For a string $S=S[1]S[2]\cdots S[n]$, let $\shft(S)=S[2]\cdots S[n]S[1]$.
For $i\in \mathbb{Z}$, we denote $i\cmod n=((i-1)\modulo n)+1$
so that, for $i\in [n]$, the value $(i-1)\cmod n$ is the position of $S[i]$ in $\shft(S)$.%
\footnote{We introduce the $\cmod$ operator because positions in strings are indexed from $1$ rather than from $0$.}
Moreover, for $M\subseteq\mathbb Z$, we denote $M\cmod n=\{i\cmod n\mid i\in M\}$.
For $P\subseteq [n]$, let $\rot(P)=\{(i-1)\cmod n\mid i\in P\}$ be the rotated set $P$.

The \emph{primitive root} of a string $S$ is the shortest string $Q$ such that
$S=Q^\alpha$ for an integer $\alpha\ge 1$. The length of the primitive root is denoted by $\prr(S)$.
Notice that $\per(S)\le \prr(S)$.
Moreover, for every $m,m'\in \mathbb{Z}$, we have that $\prr(\shft^m(S)) = \prr(S)$, and $\shft^{m}(S)=\shft^{m'}(S)$ if and only if $\prr(S) \mid (m-m')$.

\section{Sketches for Non-pseudo-periodic Strings}\label{sec:nap}
We say that a string $S\in \Sigma^n$ is \emph{$(\alpha, \beta)$-pseudo-periodic} if there exists a string $S'\in \Sigma^n$, called an \emph{$(\alpha,\beta)$-base} of $S$, such that $\prr(S')\le \frac{n}{\alpha}$ and $\HAM(S,S')\le \beta$.
\begin{observation}\label{obs:rootrot}
	If $S$ is $(\alpha,\beta)$-pseudo-periodic with an $(\alpha,\beta)$-base $S'$, then every rotation $\shft^m(S)$ with $m\in \mathbb{Z}$ is also $(\alpha,\beta)$-pseudo-periodic and $\shft^m(S')$ is an $(\alpha,\beta)$-base of $\shft^m(S)$.
\end{observation}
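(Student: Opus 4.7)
The plan is to verify the two defining conditions of $(\alpha,\beta)$-pseudo-periodicity for $\shft^m(S)$ by proposing $\shft^m(S')$ as the witness base. This reduces the claim to two elementary invariance facts about cyclic rotations: invariance of the primitive-root length and invariance of Hamming distance under simultaneous rotation of both arguments.

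For the primitive-root condition, I would invoke the property stated in Section~\ref{sec:preliminaries}: for every $T\in\Sigma^n$ and $m\in\mathbb{Z}$, we have $\prr(\shft^m(T))=\prr(T)$. Applied to $T=S'$, this immediately yields $\prr(\shft^m(S'))=\prr(S')\le \tfrac{n}{\alpha}$, matching the first requirement in the definition of an $(\alpha,\beta)$-base.

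For the Hamming-distance condition, I would observe that position-wise, $\shft^m$ acts as the bijection $j\mapsto (j-m)\cmod n$ on $[n]$ (using the paper's convention that $\shft$ shifts by one to the left, so that $\shft^m(T)[j]=T[(j+m-1)\cmod n+1]$ or the analogous identity consistent with the preliminaries). Since the same relabelling is applied to both $S$ and $S'$, positions of agreement map to positions of agreement and positions of disagreement to positions of disagreement; hence $\MP(\shft^m(S),\shft^m(S'))$ is obtained from $\MP(S,S')$ by this bijection, and in particular $\HAM(\shft^m(S),\shft^m(S'))=\HAM(S,S')\le \beta$. Combining the two bounds shows that $\shft^m(S')$ satisfies both clauses of the definition and is therefore an $(\alpha,\beta)$-base of $\shft^m(S)$.

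There is no genuine obstacle here; the observation is a sanity check that $(\alpha,\beta)$-pseudo-periodicity is preserved under the rotation action, which is exactly what is needed later to argue that sketches can be built in a rotation-equivariant manner and that the classification of a string as pseudo-periodic or not does not depend on the rotational representative chosen.
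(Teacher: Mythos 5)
Your proof is correct and is exactly the routine verification the paper leaves implicit (the observation is stated without proof): rotation preserves the primitive-root length by the fact recorded in the preliminaries, and simultaneous rotation of both strings is a position bijection preserving the Hamming distance, so both clauses of the definition carry over to $\shft^m(S)$ with witness $\shft^m(S')$.
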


Let $\mathcal{H}_{n,k}$ be the set of strings in $\Sigma^n$ that are $(3\gamma k,\gamma k)$-pseudo-periodic, where $\gamma$ is the smallest constant such that $\gamma \ge 14$ and $\tfrac{n}{3\gamma k}$ is an integer.
In this section, we present two circular sketches for strings in $\Sigma^n\sm \H_{n,k}$: an $(\eps,k)$-ACS sketch and a $k$-ECS sketch.
Both sketches rely on the following result, proved in Section~\ref{sec:positions}.
\begin{theorem}\label{thm:points-set}
	For every two integers $1\le k \le n$, there exists a randomized function
	$f:\Sigma^n\setminus\mathcal{H}_{n,k} \rightarrow 2^{[n]}$
	such that the following holds for every $S_1,S_2\in \Sigma^n\setminus\mathcal{H}_{n,k}$:
	\begin{enumerate}
		\item\label{prop:bounded-size} $|f(S_1)|=\Ohtilde(k)$  with high probability,
		\item\label{prop:rotation} $f(\shft(S_1))=\rot(f(S_1))$,
		\item\label{prop:large-intersection} if $\HAM(S_1,S_2)\le k$, then $|f(S_1)\cap f(S_2)|\ge k$ with high probability.
	\end{enumerate}
\end{theorem}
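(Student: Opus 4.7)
The plan is to realize $f(S)$ as the union of a randomized component $f_{\mathrm{rand}}(S)$ and a deterministic component $f_{\mathrm{det}}(S)$, each tailored to a different type of local structure in $S$, and with every decision depending only on \emph{cyclic} substrings of $S$ so that Property~\ref{prop:rotation} is automatic. For the randomized component I would fix $\ell\approx\frac{n}{3\gamma k}$ and, from the shared randomness, draw a random hash $h:\Sigma^\ell\to\{0,1\}$ with $\Pr[h(u)=1]=p=\tilde\Theta(k/n)$. A position $i\in[n]$ is placed into $f_{\mathrm{rand}}(S)$ whenever $h(S[i\dd i+\ell-1])=1$ and the length-$\ell$ cyclic window starting at $i$ has at most $\gamma k$ cyclic occurrences in $S$. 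Grouping contributions by window identity, $|f_{\mathrm{rand}}(S)|=\sum_u B_u c_u$ with independent Bernoullis $B_u$ of mean $p$ and truncated occurrence counts $c_u\le\gamma k$; since $\sum_u c_u\le n$, the mean is $\tilde O(k)$ and Bernstein's inequality delivers a high-probability $\tilde O(k)$ size bound.

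For $f_{\mathrm{det}}$ I would follow the approach of~\cite{BWK19,CKW20}: detect every maximal cyclic substring of $S$ whose period is below a suitable threshold, greedily extend each run on both sides as long as the extension's mismatch-to-length ratio against its period stays below a fixed small constant, and then deposit into $f_{\mathrm{det}}(S)$ the two endpoints of every extended region together with all internal mismatch positions. The mismatches contribute at most a constant fraction of the total length of extended regions (by the threshold), which is itself $O(n)$, so their count is $\tilde O(k)$, and the number of extended regions is also $\tilde O(k)$ precisely because $S\notin\H_{n,k}$: a family of long, pairwise disjoint, low-period extensions covering a $\Omega(1)$ fraction of $[n]$ would align with a common high power $Q^\alpha$ up to fewer than $\gamma k$ mismatches, contradicting non-pseudo-periodicity (the excluded regime is defined so as to rule out exactly this).

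Finally I would verify Property~\ref{prop:large-intersection}. Every mismatch between $S_1$ and $S_2$ pollutes at most $\ell$ length-$\ell$ windows, so at least $\bigl(1-\tfrac1{3\gamma}\bigr)n$ positions carry identical windows in the two strings; modifying $S_2$ in at most $k$ places changes the cyclic occurrence count of any window by at most $k\ell = O(n/\gamma)$, which the cap can absorb up to a constant factor, so the truncation condition agrees at all but $O(k)$ shared positions. Those surviving positions are then independently added to both $f_{\mathrm{rand}}(S_1)$ and $f_{\mathrm{rand}}(S_2)$ with the same probability $p$, yielding expected intersection $\Omega(k\log n)$ and concentration via Lemma~\ref{lem:chernoff}. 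I expect the main obstacle to be coordinating the two components across the periodic/aperiodic boundary: a position sitting inside a long periodic run of $S_1$ may lie outside the corresponding run of $S_2$ because a single mismatch near an endpoint can shorten the extension considerably, so the \emph{same} position can be claimed by $f_{\mathrm{det}}$ for one string and by $f_{\mathrm{rand}}$ for the other, breaking intersection. I would address this by using slightly offset thresholds in the two schemes (a conservative inner threshold and a liberal outer threshold), so every common position of a pair of close strings is captured by the same component on both sides and Property~\ref{prop:large-intersection} then transfers either from the random component or the deterministic one, whichever governs the local structure.
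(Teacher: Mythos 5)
Your high-level architecture matches the paper's: a union of a hash-based random selection for ``generic'' windows and a deterministic mismatch-based selection inside extended periodic runs, with rotation-invariance built in because every criterion is cyclic. However, there is a genuine gap in the randomized component. You gate inclusion of position $i$ on the \emph{global occurrence count} of the window $S^*[i\dd i+\ell-1]$ being at most $\gamma k$, and then claim that $k$ substitutions perturb any window's count by an amount ``the cap can absorb up to a constant factor.'' The perturbation is additive and can be as large as $k\ell=\Theta(n/\gamma)$, which for $k\ll\sqrt{n}$ dwarfs the threshold $\gamma k$; no constant-factor offset of the cap absorbs it. Worse, when a window flips from passing the cap in $S_1$ to failing it in $S_2$, \emph{all} of its up-to-$\gamma k$ occurrence positions are lost from the intersection even though the windows there are identical, and since you need $\Omega(n)$ agreeing positions to get expected intersection $\Omega(k\log n)$ at rate $p=\tilde\Theta(k/n)$, a small number of flips near the threshold can destroy Property~\ref{prop:large-intersection}. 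Your proposed fix of ``offset thresholds'' cannot be implemented either, because the two encoders are identical and symmetric; any single threshold applied to both strings faces the same additive $\Theta(n/\gamma)$ perturbation. The paper sidesteps this entirely by making the inclusion criterion \emph{intrinsic to the window string}: a position is eligible iff its length-$3\ell$ window is non-cubic, i.e., has period exceeding $\ell$. This simultaneously forces any eligible window to occur at most $n/\ell=3\gamma k$ times (occurrences of a string are at least its period apart), giving the bounded per-window contribution you wanted from the cap, and guarantees that whenever $S_1$ and $S_2$ carry equal windows at a position, the eligibility decision agrees.

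Two secondary points. First, your concentration argument for the intersection asserts the surviving positions are ``independently added''; they are not, since all occurrences of the same window string share one hash bit $h(u)$. This is repairable (and is how the paper proceeds) by grouping per window and applying a Chernoff--Hoeffding bound for independent variables bounded by $3\gamma k$, which still works because the expectation is $\Omega(k\log n)$. Second, the coordination problem you anticipate between the two components does not arise in the paper's scheme: both $\fnp$ and $\fp$ unconditionally satisfy Properties~\ref{prop:bounded-size} and~\ref{prop:rotation}, and the case split for Property~\ref{prop:large-intersection} is global --- either $|\NP(S_1)|\ge n/2$, in which case $\fnp$ alone supplies $k$ common positions, or $|\P(S_1)|\ge n/2$, in which case $\fp$ alone does (via an auxiliary string $S'$ obtained by replacing the mismatch positions of $S_1,S_2$ with fresh symbols, together with a lower bound $|\fp(S')|\ge\frac{\gamma k}{3n}|\P(S')|$). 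No per-position handoff between the components is needed.
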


\subsection{An $(\eps,k)$-ACS Sketch}\label{sec:approx-nap}
We start with briefly presenting a useful technical tool, that is, the non-circular version of the approximate sketch. We remark that many variants of this sketch exist, with equivalent space complexity.
A short proof is given for the sake of completeness.

\begin{theorem}[$(1\pm\eps)$-approximate sketches, folklore]\label{thm:approx-sketch-non-circular}
	There exists a $(1\pm\eps)$-approximate sketch $\sk_{\varepsilon}$ such that, given $\sk_{\varepsilon}(S_1)$ and $\sk_{\varepsilon}(S_2)$ for two strings $S_1,S_2 \in \Sigma^n$, one can decode $\HAM(S_1,S_2)$ with a $(1 \pm \eps)$-multiplicative error. The sketches use $\Ohtilde(\eps^{-2})$ space, the decoding algorithm is correct with high probability and costs $\Ohtilde(\eps^{-2})$ time.
\end{theorem}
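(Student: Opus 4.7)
The plan is to reduce approximate Hamming distance estimation to $\ell_2^2$-norm estimation via an indicator encoding, and then to apply the textbook AMS sketch. First, I encode each string $S\in\Sigma^n$ as the indicator vector $\phi(S)\in\{0,1\}^{n|\Sigma|}$ with $\phi(S)[i,a]=[S[i]=a]$. At every position where $S_1$ and $S_2$ disagree, the difference $\phi(S_1)-\phi(S_2)$ contributes exactly one $+1$ and one $-1$ coordinate, and is zero elsewhere; hence $\|\phi(S_1)-\phi(S_2)\|_2^2 = 2\HAM(S_1,S_2)$, so it suffices to $(1\pm\eps)$-approximate this squared $\ell_2$-distance with high probability.

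Next, I define the sketch via the standard AMS construction. Set $t=\Theta(\eps^{-2})$ and $r=\Theta(\log n)$, and draw from the shared randomness independent $4$-wise independent hash functions $h_{j,\ell}:[n]\times\Sigma\to\{-1,+1\}$ for $j\in[r]$ and $\ell\in[t]$. The sketch stores the $rt$ counters
\[\sk_\eps(S) \;=\; \bigl(X_{j,\ell}(S)\bigr)_{j\in[r],\,\ell\in[t]},\qquad X_{j,\ell}(S)=\sum_{i=1}^n h_{j,\ell}(i,S[i]).\]
Each counter lies in $[-n,n]$ and fits in $O(\log n)$ bits, so the total space is $O(rt\log n)=\Ohtilde(\eps^{-2})$.

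For decoding, I form $Y_{j,\ell}=X_{j,\ell}(S_1)-X_{j,\ell}(S_2)=\sum_i\bigl(h_{j,\ell}(i,S_1[i])-h_{j,\ell}(i,S_2[i])\bigr)$, which picks up a uniform $\pm 2$ contribution at each mismatch and zero at each match. A direct $4$-wise-independence computation (the standard AMS analysis) yields $\Exp[Y_{j,\ell}^2/2]=\HAM(S_1,S_2)$ and $\mathrm{Var}(Y_{j,\ell}^2/2)=O(\HAM(S_1,S_2)^2)$. Averaging within each of the $r$ blocks reduces the variance by a factor of $t=\Theta(\eps^{-2})$, so Chebyshev gives a $(1\pm\eps)$-approximation with constant success probability per block; taking the median of the $r=\Theta(\log n)$ block-averages boosts this to $1-n^{-\Omega(1)}$ by a Chernoff bound. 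Decoding performs $O(rt)=\Ohtilde(\eps^{-2})$ arithmetic operations.

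The only substantive step is the AMS second-moment computation for $Y_{j,\ell}^2$, and that is precisely why the statement is quoted as folklore; the remaining ingredients (median-of-means boosting, counter-bit accounting, reduction to $\ell_2^2$) are entirely routine.
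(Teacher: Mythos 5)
Your proposal is correct and follows essentially the same route as the paper: a one-hot (indicator) encoding of characters so that the squared $\ell_2$ distance equals twice the Hamming distance, followed by an AMS $\ell_2$-sketch computed implicitly without materializing the encoded vectors. You merely spell out the AMS median-of-means details that the paper cites as folklore.
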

\begin{proof}
	Consider $\mu : \Sigma \to \{0,1\}^{\sigma}$ defined as $\mu(c) = 0^{c-1}10^{\sigma-c}$. For every words $u,v$, we have $\HAM( \mu(u), \mu(v) ) = 2 \cdot \HAM(u,v)$.
	We then use AMS sketches~\cite{ALON1999137} on $\mu(u)$ and $\mu(v)$ which allow for decoding of $\ell_2$ distance $\|\mu(u)-\mu(v)\|_2$.
	This is enough since, for binary words, the $\ell_2^2$ distance coincides with the Hamming distance.
	We then note that the AMS sketches of $\mu(u)$ and $\mu(v)$ can be computed without explicitly constructing $\mu(u)$ or $\mu(v)$.
\end{proof}

Next, we describe our sketching scheme and prove that, together with an appropriate decoding algorithm, it forms an $(\eps,k)$-ACS sketch for $\Sigma^n\setminus\H_{n,k}$.
\begin{construction}\label{def:sketchesapproxcirc}
	The encoding function $\circ_{\eps,k}:\Sigma^n\sm \H_{n,k} \to \{0,1\}^*$ is defined as follows:
	\begin{enumerate}
		\item Let  $f:\Sigma^n\sm \H_{n,k}\to 2^{[n]}$ be the selection function of Theorem~\ref{thm:points-set}.
		\item Let $\sk_{\eps} : \Sigma^n \to \{0,1\}^*$ be the sketch of Theorem~\ref{thm:approx-sketch-non-circular}.
		\item Let $A,B\sub [n]$ be two subsets\footnote{%
		 The sketch would remain valid with one subset only. However, introducing the second subset simplifies the arguments
		 and makes the construction more similar to the counterpart for pseudo-periodic strings.} with elements sampled independently with rate $p = 2 \sqrt{\frac{\ln n}{k}}$.
		\item For $S\in\Sigma^n\sm \H_{n,k}$, the encoding $\circ_{\eps,k}(S)$ stores $(i,\sk_{\varepsilon}(\shft^i(S)))$ for $i\in f(S)\cap (A\cup B)$.
	\end{enumerate}
\end{construction}
\begin{proposition}\label{thm:approx-nap-circ-sketch}
	There exists a decoding function which, together with the encoding $\circ_{\varepsilon,k}$  of Definition~\ref{def:sketchesapproxcirc}, forms an $(\eps,k)$-ACS sketch of $\Sigma^n \sm \H_{n,k}$.	The size of this sketch is $\Ohtilde(\eps^{-2}\sqrt{k})$, and the decoding algorithm costs $\Ohtilde(\sqrt{k}+\eps^{-2})$ time with high probability.
\end{proposition}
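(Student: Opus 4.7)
The plan has three pieces: a size bound, a decoder together with its correctness proof, and a running-time analysis.

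\textbf{Size.} Property~\ref{prop:bounded-size} of Theorem~\ref{thm:points-set} gives $|f(S)|=\Ohtilde(k)$ with high probability. Conditioned on $f$, each $i\in f(S)$ lies in $A\cup B$ independently with probability at most $2p=\Ohtilde(1/\sqrt{k})$, so a standard Chernoff bound yields $|f(S)\cap(A\cup B)|=\Ohtilde(\sqrt{k})$ with high probability. Multiplying by the $\Ohtilde(\eps^{-2})$-bit cost of each non-circular sketch from Theorem~\ref{thm:approx-sketch-non-circular} (plus $\Oh(\log n)$ bits to store each index) yields the claimed $\Ohtilde(\eps^{-2}\sqrt{k})$ size bound.

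\textbf{Decoder and correctness.} Given $(\circ_{\eps,k}(S_1),\circ_{\eps,k}(S_2),m)$, I would scan for an index $i$ such that an entry $(i,\sk_{\varepsilon}(\shft^i(S_1)))$ is stored in $\circ_{\eps,k}(S_1)$ with $i\in A$ and a matching entry $(j,\sk_{\varepsilon}(\shft^j(S_2)))$ is stored in $\circ_{\eps,k}(S_2)$ with $j=(i+m)\cmod n\in B$; if such an $i$ is found, apply the decoder of Theorem~\ref{thm:approx-sketch-non-circular} to the two retrieved sketches, and otherwise output $k+1$. When $\HAM(S_1,\shft^m(S_2))>k$, any output is automatically $>(1-\eps)k$: $k+1$ is, and any Theorem~\ref{thm:approx-sketch-non-circular} estimate is at least $(1-\eps)\HAM(S_1,\shft^m(S_2))>(1-\eps)k$ with high probability. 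When $\HAM(S_1,\shft^m(S_2))\le k$, iterating Property~\ref{prop:rotation} gives $f(\shft^m(S_2))=\{i\in[n]\mid (i+m)\cmod n\in f(S_2)\}$, so Property~\ref{prop:large-intersection} supplies at least $k$ candidate indices $i$ with $i\in f(S_1)$ and $(i+m)\cmod n\in f(S_2)$, with high probability. Because $A$ and $B$ are sampled independently of each other and of $f$, each such candidate independently passes the test ``$i\in A$ and $(i+m)\cmod n\in B$'' with probability $p^2=\tfrac{4\ln n}{k}$, uniformly in $m$---including the edge case $m\equiv 0 \pmod n$ where the two positions coincide but $\Pr[i\in A\cap B]=p^2$ as well. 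A further Chernoff bound then forces the existence of at least one surviving candidate with high probability, whereupon Theorem~\ref{thm:approx-sketch-non-circular} supplies the required $(1\pm\eps)$-approximation of $\HAM(\shft^i(S_1),\shft^{i+m}(S_2))=\HAM(S_1,\shft^m(S_2))$.

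\textbf{Running time and main obstacle.} I would first build a hash table over the entries of $\circ_{\eps,k}(S_2)$ whose index lies in $B$, and then iterate the $\Ohtilde(\sqrt{k})$ entries of $\circ_{\eps,k}(S_1)$ whose index lies in $A$, querying each shifted index against the hash table; this costs $\Ohtilde(\sqrt{k})$ time, and the final Theorem~\ref{thm:approx-sketch-non-circular} decode adds $\Ohtilde(\eps^{-2})$. The most delicate step will be layering the three independent sources of randomness---$f$, $A$, and $B$---so that the three successive high-probability events (the size bound, the intersection bound from Property~\ref{prop:large-intersection}, and the ``at least one surviving candidate'' event) combine via a union bound without exceeding the $n^{-\Omega(1)}$ failure budget; this is precisely what motivates the sampling rate $p=2\sqrt{\ln n/k}$.
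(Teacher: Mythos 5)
Your proposal is correct and follows essentially the same route as the paper's proof: sample candidate shifts from $f(S_1)\cap A$, match against $f(S_2)\cap B$ at offset $m$, and use the independence of $A$ and $B$ to get per-candidate success probability $p^2$ and overall success $1-(1-p^2)^k\ge 1-n^{-4}$. The extra details you supply (the explicit size accounting, the $m\equiv 0$ edge case, and the observation that any output exceeds $(1-\eps)k$ when the true distance does) are all consistent with, and slightly more explicit than, the paper's argument.
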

\begin{proof}
	Our decoding procedure iterates over $i \in f(S_1)\cap A$. If $i' := (i+m) \cmod n \in f(S_2)\cap B$, the procedure retrieves the sketches $\sk_{\varepsilon}(\shft^{i}(S_1))$ and $\sk_{\varepsilon}(\shft^{i'}(S_2))$ and recovers a $(1+\eps)$-approximation of $\HAM(\shft^i(S_1),\shft^{i'}(S_2)) = \HAM(S_1,\shft^{m}(S_2))$. Otherwise, $\infty$ is returned.
	
	We now reason that if $\HAM(S_1,\shft^m(S_2)) \le k$, then, with high probability, $i'\in  f(S_2)\cap B$ for some $i\in f(S_1)\cap A$. By Theorem~\ref{thm:points-set}, $|f(S_1) \cap f(\shft^m(S_2))| \ge k$. Thus, for any $i \in f(S_1) \cap f(\shft^m(S_2))$, we have that $i \in f(S_1) \cap A$ with probability $p$. Similarly, $i' \in f(S_2) \cap B$ with probability $p$. Since $A$ and $B$ are independent, we have a success probability $p^2$ for each $i$ independently. The probability of at least one success is at least $1 - (1 - p^2)^k \ge 1 - n^{-4}$.
	
	The decoding time is given by the time needed to compute the intersection of $f(S_1)\cap A$ and $\rot^m(f(S_2)\cap B)$,  which is $\Ohtilde(\sqrt{k})$ with high probability, and $\Ohtilde(\eps^{-2})$ time to decode the distance from a single pair of indices $i,i'$, provided that the intersection is not empty.
\end{proof}

\subsection{An $k$-ECS Sketch}\label{sec:exact-nap}
We begin with the following corollary of~\cite[Theorem 5.1]{PL07}. The original statement in~\cite{PL07} is given for $A = [n]$ only, but it can be generalized in a straightforward manner, e.g., by replacing all characters at positions in $[n]\sm A$ with a fixed character.
\begin{theorem}[based on~{\cite[Theorem 5.1]{PL07}}]\label{thm:exact-non-circular}
	For every $k \le n$ and $A \subseteq [n]$, there is a sketch $\sk_{k,A}$ of size $\Ohtilde(k)$ such that, given $\sk_{k,A}(S_1)$ and $\sk_{k,A}(S_2)$ for two strings $S_1,S_2 \in \Sigma^n$:
	\begin{itemize}
		\item if $\HAM_A(S_1,S_2) \le k$, then the decoding function returns $\MI_A(S_1,S_2)$;
		\item otherwise, if $\HAM_A(S_1,S_2) > k$, the decoding function reports that this is the case.
	\end{itemize}
	The decoding algorithm is correct with high probability and costs $\Ohtilde(k)$ time.
\end{theorem}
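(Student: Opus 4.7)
The plan is to reduce the $A$-restricted statement to the unrestricted case $A=[n]$, which is precisely the setting of~\cite[Theorem 5.1]{PL07}, via a simple character-substitution trick. The guiding idea, already hinted at in the statement of the theorem, is that ``ignoring positions outside $A$'' can be simulated by forcing both strings to agree trivially there. Concretely, I would introduce a fresh sentinel $\bot \notin \Sigma$ and define $\phi_A \colon \Sigma^n \to (\Sigma \cup \{\bot\})^n$ by
\[
  \phi_A(S)[i] = \begin{cases} S[i] & \text{if } i \in A,\\ \bot & \text{if } i \notin A, \end{cases}
\]
so that for any $S_1, S_2 \in \Sigma^n$ every position $i \notin A$ carries $\bot$ in both transformed strings and contributes no mismatch, while positions in $A$ keep their original characters. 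Consequently,
\[
  \MI(\phi_A(S_1), \phi_A(S_2)) = \MI_A(S_1, S_2) \quad\text{and}\quad \HAM(\phi_A(S_1), \phi_A(S_2)) = \HAM_A(S_1, S_2).
\]

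Next, I would set $\sk_{k,A}(S) := \sk_k(\phi_A(S))$, where $\sk_k$ denotes the $k$-mismatch sketch of~\cite[Theorem 5.1]{PL07} instantiated over the enlarged alphabet $\Sigma \cup \{\bot\}$. That sketch has size $\Ohtilde(k)$ and, given two sketches, recovers the full mismatch information with high probability in $\Ohtilde(k)$ time whenever the underlying Hamming distance is at most $k$, and otherwise reports that the distance exceeds $k$. Transporting this guarantee along the two identities above yields exactly the behaviour claimed for $\sk_{k,A}$, with the same $\Ohtilde(k)$ sketch size and $\Ohtilde(k)$ decoding time.

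The only point that merits verification is that the construction of~\cite{PL07} genuinely works over an arbitrary alphabet rather than a binary one; this is immediate because the sketch treats characters as opaque symbols attached to positions, so extending $\Sigma$ to $\Sigma \cup \{\bot\}$ costs nothing beyond one additional symbol in the (already polylogarithmic) per-position overhead. I do not anticipate any serious obstacle: the reduction is entirely black-box, the encoder need not materialise $\phi_A(S)$ explicitly (wherever the PL07 construction would read $S[i]$, it simply reads $\bot$ for $i \notin A$), and all complexity and correctness guarantees transfer directly from $\sk_k$.
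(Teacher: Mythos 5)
Your reduction is exactly the one the paper uses: the authors state that the result of~\cite[Theorem 5.1]{PL07} ``can be generalized in a straightforward manner, e.g., by replacing all characters at positions in $[n]\setminus A$ with a fixed character,'' which is precisely your sentinel-substitution map $\phi_A$. The proposal is correct and matches the paper's (sketched) argument.
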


\begin{construction}\label{def:sketchesexactcirc}
	The encoding function $\circ_{k} : \Sigma^n\setminus\H_{n,k} \to \{0,1\}^*$ is defined as follows:
	\begin{enumerate}
		\item Let  $f:\Sigma^n\sm \H_{n,k}\to 2^{[n]}$ be the selection function of Theorem~\ref{thm:points-set}.
		\item Let $A\sub [n]$ be a subset with elements sampled independently with rate $p := \frac{9 \ln n}{k}$.
		\item Denote $t = \lceil18 \ln n\rceil$, and let $\sk_{t,A} : \Sigma^n \to \{0,1\}^*$ be the sketch of Theorem~\ref{thm:exact-non-circular}.
		\item For $S\in\Sigma^n\sm \H_{n,k}$, the encoding $\circ_{k}(S)$ stores the pairs $(i, \sk_{t,A}(\shft^i(S)))$ for $i\in f(S)$.
	\end{enumerate}
\end{construction}

\begin{proposition}\label{thm:exact-nap-sketch}
	There exists a decoding function which, together with the encoding $\circ_{k}$  of Definition\ref{def:sketchesexactcirc}, forms a $k$-ECS sketch of $\Sigma^n \sm \H_{n,k}$.
	The size of this sketch is $\Ohtilde(k)$, and the decoding algorithm costs $\Ohtilde(k)$ time with high probability.
\end{proposition}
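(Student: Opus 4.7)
The plan is to follow the template of Proposition~\ref{thm:approx-nap-circ-sketch}, but to combine the exact mismatch information produced by \emph{every} pair of matching shifts in $f(S_1)\cap\rot^m(f(S_2))$, rather than a single sampled one. Concretely, given $\circ_k(S_1)$, $\circ_k(S_2)$, and $m$, the decoder first forms the index set $I := f(S_1)\cap \rot^m(f(S_2))$, identifying $\rot^m(f(S_2))$ with $f(\shft^m(S_2))$ by iterated application of property~\ref{prop:rotation} of Theorem~\ref{thm:points-set}. If $|I|<k$, it returns $k+1$. Otherwise, for each $i\in I$ it invokes the non-circular decoder of Theorem~\ref{thm:exact-non-circular} on $\sk_{t,A}(\shft^i(S_1))$ and $\sk_{t,A}(\shft^{(i+m)\cmod n}(S_2))$ to recover $\MI_A(\shft^i(S_1),\shft^{i+m}(S_2))$; if any such invocation reports failure, the decoder returns $k+1$. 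Otherwise, it re-indexes the retrieved mismatches back to positions of $S_1$ versus $\shft^m(S_2)$, takes their union $M^*$, and returns $|M^*|$ if $|M^*|\le k$ or $k+1$ otherwise.

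I would first establish correctness when $\HAM := \HAM(S_1,\shft^m(S_2)) \le k$. Property~\ref{prop:large-intersection} of Theorem~\ref{thm:points-set} gives $|I|\ge k$ w.h.p. For each $i\in I$, $\HAM_A(\shft^i(S_1),\shft^{i+m}(S_2))$ is a sum of $\HAM\le k$ independent Bernoulli$(p)$ variables with expectation at most $pk=9\ln n = t/2$, so Lemma~\ref{lem:chernoff} with $\varepsilon=1$, combined with a union bound over the $\Ohtilde(k)$ indices in $I$, shows every non-circular decoder succeeds w.h.p. The crux is then to argue that $M^*$ covers the whole mismatch set: for any fixed $\ell\in\MP(S_1,\shft^m(S_2))$, the events that $\ell$ is sampled by the $i$-th pair (as $i$ ranges over $I$) are independent Bernoulli$(p)$ trials, because they query the pairwise distinct coordinates $(\ell-i)\cmod n$ of $A$. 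Hence $\Pr[\ell\ \text{missed by every pair}]\le (1-p)^{|I|}\le e^{-pk}=n^{-9}$, and a union bound over the at most $k$ mismatches shows $M^*=\MP(S_1,\shft^m(S_2))$ w.h.p., so the decoder returns $\HAM$ exactly.

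For the case $\HAM>k$, I would bound the probability of a wrong output directly. A wrong output requires $|I|\ge k$, every non-circular decoder to succeed, and $|M^*|\le k$; the last of these forces some $\ell\in\MP(S_1,\shft^m(S_2))$ to be missed by every shift in $I$. Using the same independence argument as above, the probability that any single mismatch is missed by all $|I|\ge k$ shifts is at most $(1-p)^k\le n^{-9}$, so a union bound over the at most $n$ mismatches caps the wrong-output probability by $n^{-8}$. The size and time bounds are then immediate: each non-circular sketch occupies $\Ohtilde(t)=\Ohtilde(1)$ space by Theorem~\ref{thm:exact-non-circular} and $|f(S)|=\Ohtilde(k)$ w.h.p.\ by Theorem~\ref{thm:points-set}, while the decoder computes $I$ by hashing in $\Ohtilde(k)$ time and runs $|I|$ non-circular decoders at $\Ohtilde(1)$ cost each. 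The main technical step I expect to require care is the independence claim for the ``$\ell$ is captured at shift $i$'' events: this is what upgrades the single shared random set $A$ into $|I|$ effectively independent chances to detect each mismatch, and it is what lets both correctness cases fall out of essentially the same union-bound calculation.
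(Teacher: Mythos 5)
Your proposal is correct and follows essentially the same route as the paper's proof: the same decoder (intersect $f(S_1)$ with $\rot^m(f(S_2))$, bail out below $k$ pairs, run the non-circular $\sk_{t,A}$ decoders, and take the union of re-indexed mismatches), the same Chernoff bound ensuring $\HAM_A\le t$ so decoding succeeds, and the same key independence observation that a fixed mismatch $\ell$ is probed at the pairwise distinct coordinates $(\ell-i)\cmod n$ of $A$ across $i\in I$, yielding the $(1-p)^k\le n^{-9}$ miss probability and the union bound in both the $\HAM\le k$ and $\HAM>k$ cases. The only cosmetic nitpick is that the concentration step should cite the Chernoff--Hoeffding bound of Proposition~\ref{prp:chernoff} with $\delta=1$ rather than Lemma~\ref{lem:chernoff} with $\varepsilon=1$ (which is stated only for $\varepsilon<1$).
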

\begin{proof}
	The decoding procedure iterates over $i\in f(S_1)\cap \rot^m(f(S_2))$,
	If the number of such positions is less than $k$, then $\infty$ is returned.
	Otherwise, for each $i\in f(S_1)\cap \rot^m(f(S_2))$, we have $i' := (i+m)\cmod n \in f(S_2)$,
	and the algorithm runs a decoding procedure for $\sk_{t,A}(\shft^i(S_1))$ and $\sk_{t,A}(\shft^{i'}(S_2))$.
	If any such decoding fails, then $\infty$ is returned.
	Otherwise, for each mismatch position $j$ found, say with $\shft^i(S_1)[j] \ne \shft^{i'}(S_2)[j]$, the algorithm adds $(i + j) \cmod n$ to a set $M$, initialized as the empty set. Finally, the size $|M|$ is returned.
	
	The decoding procedure costs $\Ohtilde(k)$ time, which is needed both to find all the aligned pairs $i\in f(S),i'\in f(S_2)$ by computing the intersection $f(S_1)\cap \rot^m(f(S_2))$ and to retrieve and gather the mismatches obtained from the aligned pairs (in $\Ohtilde(t)=\Ohtilde(1)$ time per pair).

	\subparagraph*{Correctness.}
	Recall that $\MP(S_1,\shft^m(S_2))$ is the set of mismatch positions between $S_1$ and $\shft^m(S_2)$.
	First, notice that each $j\in M$ is a mismatch position between $S_1$ and $\shft^m(S_2)$, since $\shft^i(S_1)[j] \ne \shft^{i'}(S_2)[j]$ is equivalent to $S_1[(i+j)\cmod n] \ne S_2[(i+j+m)\cmod n]$.
	Hence, $M\subseteq \MP(S_1,\shft^m(S_2))$ and $|M|\le \HAM(S_1,\shft^m(S_2))$
	
	Now, we prove that, with high probability, if $\HAM(S_1,\shft^m(S_2))\le k$, then the algorithm reports $\HAM(S_1,\shft^m(S_2))$, and if $\HAM(S_1,\shft^m(S_2))>k$, then the algorithm reports a value larger than $k$.
	In the case where $\HAM(S_1,\shft^m(S_2))\le k$, we have that $|f(S_1)\cap \rot^m(f(S_2))|\ge k$ with high probability due to Theorem~\ref{thm:points-set}. Moreover, for every $i\in f(S_1)\cap \rot^m(f(S_2))$, the expected number of positions in $A\cap\MP(S_1,\shft^m(S_2))$ is $\HAM(S_1,\shft^m(S_2))\cdot p\le k\cdot\frac{9\ln n}{k}=9\ln n$.
	Hence, by a Chernoff bound $\Pr[|A\cap\MP(S_1,\shft^m(S_2))|>18\ln n]\le \exp(-\frac{9 \ln n}3) = n^{-3}$.
	Thus, when $\HAM(S_1,\shft^m(S_2))\le k$, decoding $\sk_{t,A}(\shft^i(S_1))$ and $\sk_{t,A}(\shft^{i'}(S_2))$
	succeeds for all $i\in f(S_1)\cap \rot^m(f(S_2))$ with high probability.
	
	Conditioned on the event that $|f(S_1)\cap \rot^m(f(S_2))|\ge k$ and the decoding algorithm of $\sk_{t,A}$ is successful, we now prove that $|M|=\HAM(S_1,\shft^m(S_2))$.
	For each mismatch $j\in \MP(S_1,\shft^m(S_2))$, there is an independent trial associated with each $i\in f(S_1) \cap \rot^m(f(S_2))$, which is whether $((j-i) \cmod n) \in A$ or not. The trial is successful with probability $p$.
	The probability that at least one of those trials succeeds is at least $1 - (1-p)^k \ge 1 - n^{-9}$.
	Applying the union bound over all $j\in\MP(S_1,\shft^m(S_2))$, we conclude that
	 $M=\MP(S_1,\shft^m(S_2))$ and $|M|=\HAM(S_1,\shft^m(S_2))$ with high probability.
	
	If $\HAM(S_1,\shft^m(S_2))>k$, then the decoding algorithm may return $\infty$ because of $|f(S_1)\cap \rot^m(f(S_2))|< k$ or due to a decoding failure. If neither of these events happen, the algorithm returns $|M|$, which is equal to $\HAM(S_1,\shft^m(S_2))$ with high probability (as proved above). Thus, in both cases, a value larger than $k$ is reported.
\end{proof}

\section{Construction of the Selection Function}\label{sec:positions}
For $S\in\Sigma^n$, let $S^*=S\cdot S\cdot S\cdots$ be the infinite string which is the infinite concatenation of $S$ to itself (for any $i\in\mathbb N$, we have $S^*[i]=S[i\cmod n]$).
Let $\ell=\frac n{3\gamma k}$ (recall it is an integer).
A position $i\in[n]$ is called \emph{cubic} if $u_i=S^*[i\dd i+3\ell-1]$ has $\per(u_i)\le \frac {|u_i|}{3}=\ell$, i.e., if the cyclic fragment of length $3\ell$ starting at position $i$ consists of at least three repetitions of the same factor. Otherwise, position $i$ is called \emph{non-cubic}.
We denote the set of cubic positions in a string $S$ as $\P(S)$, and the set of non-cubic positions as $\NP(S)$. Notice that $\P(S)\cup\NP(S)=[n]$ and $\P(S)\cap\NP(S)=\emptyset$.

We present two selection techniques, resulting in functions $\fnp$ and $\fp$,
designed for strings with many non-cubic positions and for strings with many cubic positions, respectively.
Both functions satisfy the first two properties of Theorem~\ref{thm:points-set} for any string $S_1\in\Sigma^n\setminus\H_{n,k}$.
The functions $\fnp$ and $\fp$ have the third property of Theorem~\ref{thm:points-set} if $|\NP(S_1)|\ge\frac n2$
and if $|\P(S_1)|\ge \frac n2$, respectively.
Thus, the function $f$ defined through $f(S)=\fnp(S)\cup\fp(S)$ satisfies Theorem~\ref{thm:points-set}.

\subsection{Selecting Positions for Strings with Many Non-cubic Positions}\label{sec:positions-np}
Throughout this subsection, let  $h:\Sigma^{3\ell}\rightarrow\{0,1\}$ be a hash function assigning values independently to each $u\in\Sigma^{3\ell}$ such that $\Pr[h(u)=1]=\frac {4 k\ln n}n$.
For clarity, we omit the explicit dependence on $h$ in our notation.
For $S\in\Sigma^n$, define $\fnp(S)=\left\{i\in\NP(S)\mid h(u_i)=1\right\}$.

Our proofs rely on the following multiplicative Chernoff--Hoeffding bound:
\begin{proposition}[Corollary of {\cite[Theorems 1.10.1 and 1.10.5]{Doerr}}]\label{prp:chernoff}
	Let $X_1,\ldots,X_n$ be independent random variables taking values in $[0,M]$,
	 let $X = \sum_{i=1}^n X_i$, and let $\mu \ge 0$.
	\begin{enumerate}
	\item\label{it:ub} If $\mu \ge \Exp[X]$, then, for every $\delta>0$, we have
	$\Pr[X \ge (1+\delta)\mu]\le \exp(-\frac{\min(\delta,\delta^2)\mu}{3M})$.
	\item\label{it:lb} If $\mu \le \Exp[X]$, then, for every $0<\delta<1$, we have
	$\Pr[X \le (1-\delta)\mu]\le \exp(-\frac{\delta^2\mu}{2M})$.
	\end{enumerate}
\end{proposition}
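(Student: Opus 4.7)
The plan is to apply the standard moment generating function (MGF) method, combined with a convexity argument that handles the fact that the $X_i$ take values in $[0,M]$ rather than in $\{0,1\}$. First I would rescale $Y_i := X_i/M \in [0,1]$, reducing to bounding deviations of $Y = \sum_{i=1}^n Y_i$ (with $\Exp[Y] \le \mu/M$ or $\Exp[Y] \ge \mu/M$ respectively). Then I would invoke Markov's inequality on $e^{tY}$: for the upper tail $t>0$ gives $\Pr[Y \ge a] \le e^{-ta}\Exp[e^{tY}]$, and for the lower tail the symmetric inequality with $t<0$ applies.

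By independence, $\Exp[e^{tY}] = \prod_i \Exp[e^{tY_i}]$, and convexity of $x \mapsto e^{tx}$ on $[0,1]$ yields $e^{tY_i} \le 1 + Y_i(e^t-1)$; taking expectations and using $1+x \le e^x$ gives $\Exp[e^{tY_i}] \le \exp(\Exp[Y_i](e^t-1))$, hence $\Exp[e^{tY}] \le \exp(\Exp[Y](e^t-1))$. For the upper tail, $e^t - 1 > 0$, so replacing $\Exp[Y]$ with the upper bound $\mu/M$ preserves the inequality. Choosing $t = \ln(1+\delta)$ yields the classical form $\Pr[X \ge (1+\delta)\mu] \le \bigl(e^\delta/(1+\delta)^{1+\delta}\bigr)^{\mu/M}$. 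For the lower tail, $e^t - 1 < 0$ makes the MGF bound decreasing in $\Exp[Y]$, so the lower bound $\mu/M$ may be substituted; taking $t = \ln(1-\delta)$ gives $\Pr[X \le (1-\delta)\mu] \le \bigl(e^{-\delta}/(1-\delta)^{1-\delta}\bigr)^{\mu/M}$.

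To match the form stated in the proposition, I would verify two analytic inequalities on the exponents: $(1+\delta)\ln(1+\delta) - \delta \ge \min(\delta,\delta^2)/3$ for all $\delta > 0$, and $(1-\delta)\ln(1-\delta) + \delta \ge \delta^2/2$ for $\delta \in [0,1)$. The first is handled by splitting into $0 < \delta \le 1$ (where Taylor expansion around $0$ gives the quadratic regime) and $\delta > 1$ (where linearity dominates and a direct monotonicity check suffices); the second follows from a single Taylor bound since $(1-\delta)\ln(1-\delta) + \delta = \sum_{j\ge 2}\delta^j/(j(j-1))$.

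I expect the main obstacle to be precisely these elementary inequalities: the case split at $\delta = 1$ in the upper tail, where the decay transitions from quadratic to linear in $\delta$, is exactly what forces the $\min(\delta,\delta^2)$ expression and requires a small but careful calculus argument. Everything else is a routine application of the MGF technique, and the reduction to the $\{0,1\}$-valued case via the convexity step $e^{tY_i}\le 1+Y_i(e^t-1)$ is what cleanly generalizes the bound to variables in $[0,M]$ at the cost of the $M$ factor in the denominator of the exponent.
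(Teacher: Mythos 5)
Your proof is correct and follows the standard route: the paper itself gives no proof of this proposition (it is cited directly from Doerr's Theorems 1.10.1 and 1.10.5), and the cited source establishes it exactly as you do, via the MGF/Markov argument with the convexity bound $e^{tY_i}\le 1+Y_i(e^t-1)$ reducing $[0,M]$-valued variables to the Bernoulli case, followed by the elementary exponent inequalities $(1+\delta)\ln(1+\delta)-\delta\ge\min(\delta,\delta^2)/3$ and $(1-\delta)\ln(1-\delta)+\delta\ge\delta^2/2$. Both of your analytic claims check out (the upper-tail one by the case split at $\delta=1$, the lower-tail one via the series $\sum_{j\ge2}\delta^j/(j(j-1))$), and the monotonicity observations that let you replace $\Exp[Y]$ by $\mu/M$ in each tail are handled correctly.
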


We first prove that $\fnp$ satisfies the first property of Theorem~\ref{thm:points-set}.
\begin{lemma}\label{lem:np_ub}
For every $S\in\Sigma^n$, we have $\Pr\left[|\fnp(S)|< 8 k\ln n\right]\ge 1-n^{-\Omega(1)}$.
\end{lemma}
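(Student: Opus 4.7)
The plan is to express $|\fnp(S)|$ as a sum of independent random variables indexed by \emph{distinct} substrings, bound the per-variable range using a periodicity argument, and then invoke the Chernoff--Hoeffding bound from Proposition~\ref{prp:chernoff}.

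First, I would rewrite
\[
|\fnp(S)| \;=\; \sum_{u\in \Sigma^{3\ell}} h(u)\cdot N(u), \qquad \text{where } N(u) = |\{i\in \NP(S) \mid u_i = u\}|.
\]
The summands are independent because $h$ assigns values independently across distinct inputs. The expectation equals $\frac{4k\ln n}{n}\cdot |\NP(S)| \le 4k\ln n$, which will serve as the parameter $\mu$ in Proposition~\ref{prp:chernoff}\eqref{it:ub}.

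The crux is to show $N(u)\le 3\gamma k$ whenever $N(u)>0$, i.e.\ when $u=u_i$ for some non-cubic $i$. If two positions $i<j$ of $[n]$ both satisfy $u_i = u_j = u$ and $j-i\le \ell$, then the two length-$3\ell$ occurrences of $u$ inside $S^*$ overlap in a common factor of length at least $2\ell$, and the shift $j-i$ becomes a period of $u$ of value $\le \ell$. This contradicts $i\in \NP(S)$, which asserts $\per(u_i)>\ell$. Hence the occurrences of $u$ in the cyclic string $S^*$ restricted to starting positions in $[n]$ are pairwise at cyclic distance $>\ell$, so $N(u)\le n/\ell = 3\gamma k$.

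Finally I would apply Proposition~\ref{prp:chernoff}\eqref{it:ub} with $\mu = 4k\ln n$, $M=3\gamma k$, and $\delta = 1$, obtaining
\[
\Pr\bigl[|\fnp(S)| \ge 8k\ln n\bigr] \;\le\; \exp\!\left(-\frac{\mu}{3M}\right) \;=\; \exp\!\left(-\frac{4\ln n}{9\gamma}\right) \;=\; n^{-\Omega(1)},
\]
which is the desired bound. The main obstacle is the periodicity step bounding $N(u)$: without it the indicator variables have range up to $n$ and no useful concentration can be deduced. Once the range is capped at $3\gamma k$, the Chernoff calculation is mechanical and the choice of hashing probability $\frac{4k\ln n}{n}$ was calibrated precisely to make the expectation a constant factor away from the desired bound.
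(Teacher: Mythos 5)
Your proposal is correct and follows essentially the same route as the paper: decompose $|\fnp(S)|$ into independent contributions indexed by distinct length-$3\ell$ substrings, cap each contribution at $n/\ell=3\gamma k$ via the periodicity argument (overlapping occurrences of $u$ at offset $\le\ell$ would force $\per(u)\le\ell$, contradicting non-cubicity), and apply Proposition~\ref{prp:chernoff}\eqref{it:ub} with $\delta=1$. The constants and the final bound $n^{-4/(9\gamma)}$ match the paper's proof exactly.
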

\begin{proof}
For each $u\in \Sigma^{3\ell}$, we introduce a random variable $X_u = |\{i \in \fnp(S) \mid u_i = u\}|$;
notice that $X_u$ depends only on $h(u)$, so the variables $X_u$ are independent.
In order to apply Property~\ref{prp:chernoff} for $|\fnp(S)|=\sum_{u \in \Sigma^{3\ell}} X_u$,
we prove that each $X_u$ is bounded.

First, note that if $\per(u)\le \ell$ or $h(u)=0$, then $X_u = 0$.
Otherwise, as $u_{i}=u=u_{i'}$ for $i < i' \le i+3\ell$ implies
$i'-i\ge \per(u) > \ell$, we conclude that $X_u=|\{i\in [n] \mid u_i = u\}|\le \frac{n}{\ell}= 3\gamma k$. 
Now,
$\Exp[|\fnp(S)|]=\sum_{i\in \NP(S)}\Pr[h(u_i)=1]=|\NP(S)|\cdot \tfrac{4 k \ln n}{n} \le 4 k \ln n$,
so, by Property~\ref{prp:chernoff}\eqref{it:ub} with $\delta=1$, we have
$\Pr[|\fnp(S)| \ge 8 k \ln n] \le \exp(-\tfrac{4 k \ln n}{3\cdot 3 \gamma k})
= n^{-4/(9\gamma)}=n^{-\Omega(1)}$.
\end{proof}

The following lemma states that $\fnp$ satisfies Property~\ref{prop:rotation} of Theorem~\ref{thm:points-set}.

\begin{lemma}\label{lem:np-rotation}
For every $S\in \Sigma^n$, we have $\fnp(\shft(S))=\rot(\fnp(S))$.
\end{lemma}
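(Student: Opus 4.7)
The plan is to trace how each ingredient in the definition of $\fnp$ transforms under a single cyclic shift. The crucial observation is that $\shft(S)^*[j]=S^*[j+1]$ for every $j\in\mathbb{Z}$: both sides equal $S[(j+1)\cmod n]$, as can be verified by unfolding the definition of $\shft$ on a representative $j\in[n]$ and extending periodically. From this identity it follows immediately that the length-$3\ell$ window associated with index $j$ in $\shft(S)$ matches the length-$3\ell$ window associated with index $(j+1)\cmod n$ in $S$, that is,
\[
u^{\shft(S)}_j \;=\; \shft(S)^*[j\dd j+3\ell-1] \;=\; S^*[j+1\dd j+3\ell] \;=\; u^{S}_{(j+1)\cmod n}.
\]

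With this alignment of windows in hand, the rest is bookkeeping. First, since ``non-cubic'' is a property that depends only on the window (position $i$ is non-cubic iff $\per(u_i)>\ell$), the displayed identity gives $j\in \NP(\shft(S))$ if and only if $(j+1)\cmod n \in \NP(S)$. Second, the hash $h$ also depends only on the window, so $h(u^{\shft(S)}_j)=h(u^{S}_{(j+1)\cmod n})$. Combining these two equivalences yields
\[
j \in \fnp(\shft(S)) \iff (j+1)\cmod n \in \fnp(S).
\]

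To conclude, I would observe that the map $i \mapsto (i-1)\cmod n$ is a bijection on $[n]$ with inverse $j \mapsto (j+1)\cmod n$. Therefore the equivalence above can be rewritten as $j\in\fnp(\shft(S)) \iff j\in\{(i-1)\cmod n : i\in \fnp(S)\} = \rot(\fnp(S))$, proving the claimed equality. The only place that requires any care is the $\cmod n$ arithmetic when $j+1$ wraps around from $n$ to $1$, but this is precisely why we phrased the window identity through $S^*$ (which is genuinely infinite) rather than through $S$ directly, and no further technical obstacle arises.
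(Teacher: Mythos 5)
Your proof is correct and follows essentially the same route as the paper's: both hinge on the identity $(\shft(S))^*[j\dd j+3\ell-1]=S^*[j+1\dd j+3\ell]=u^S_{(j+1)\cmod n}$ and the fact that membership in $\fnp$ depends only on this window. The only cosmetic difference is that you package both inclusions at once via the bijection $i\mapsto(i-1)\cmod n$, whereas the paper proves one inclusion and invokes symmetry.
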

\begin{proof}
	Let $i\in \fnp(\shft(S))$ and let $u=(\shft(S))^*[i\dd i+\ell-1]=S^*[i+1\dd i+\ell]$. Since $i\in\fnp(\shft(S))$,
	we have that $\per(u)>\frac \ell3$ and $h(u)=1$.
	Therefore, $(i+1)\cmod n\in \fnp(S)$, which means that  $i\cmod n = i \in \rot(\fnp(S))$.
	Hence, $\fnp(\shft(S))\subseteq \rot(\fnp(S))$.
	Symmetrically, $\rot(\fnp(S)) \subseteq \fnp(\shft(S))$. Thus, $\fnp(\shft(S))=\rot(\fnp(S))$.
\end{proof}

Finally, the following lemma states that $\fnp$ satisfies Property~\ref{prop:large-intersection} of Theorem~\ref{thm:points-set}.

\begin{lemma}\label{lem:fnp-large-intersection}
	Suppose that $S_1,S_2\in \Sigma^n$ satisfy $\HAM(S_1,S_2)\le k$.
	If  $|\NP(S_1)|\ge \frac 12n$, then $\Pr[|\fnp(S_1)\cap \fnp(S_2)|\ge k]\ge 1-n^{-\Omega(1)}$.
\end{lemma}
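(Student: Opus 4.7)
The plan is to identify a large set of ``coincident'' positions where the length-$3\ell$ windows of $S_1$ and $S_2$ agree, so that membership in $\fnp(S_1)$ and in $\fnp(S_2)$ is governed by the same hash event, and then apply the Chernoff--Hoeffding bound of Proposition~\ref{prp:chernoff}\eqref{it:lb} to show that enough such positions are selected.

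First I would fix $\mathcal{M} = \MP(S_1,S_2)$ and note $|\mathcal{M}|\le k$. Let
\[A = \bigl\{i\in [n] \,:\, S_1^*[i\dd i+3\ell-1] = S_2^*[i\dd i+3\ell-1]\bigr\}.\]
A position $i\notin A$ must have some mismatch position inside its length-$3\ell$ window, so $|[n]\sm A|\le 3\ell k = n/\gamma$, giving $|A|\ge n(1-1/\gamma)$. For every $i\in A$ the windows coincide, so $i\in \NP(S_1) \iff i\in \NP(S_2)$ and $h(u_i^{(1)})=h(u_i^{(2)})$; consequently $i\in \fnp(S_1) \iff i\in \fnp(S_2)$. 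Setting $B = A\cap \NP(S_1) = A\cap \NP(S_2)$, the hypothesis $|\NP(S_1)|\ge \tfrac12 n$ together with $|A|\ge (1-1/\gamma)n$ yields
\[|B| \ge |\NP(S_1)| - |[n]\sm A| \ge \tfrac12 n - n/\gamma \ge \tfrac{3}{7}n,\]
using $\gamma\ge 14$. Thus $B\subseteq \fnp(S_1)\cap \fnp(S_2)$ whenever all positions of $B$ satisfy $h(u_i^{(1)})=1$.

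Next I would lower-bound $X := |\{i\in B : h(u_i^{(1)})=1\}|$. Because hash values are shared among positions carrying the same window, I group positions by window value: for each $u\in \Sigma^{3\ell}$ with $\per(u)>\ell$, let $Y_u = |\{i\in B : u_i^{(1)}=u\}|$, a deterministic quantity, and write $X = \sum_u Y_u\cdot \mathbb{1}[h(u)=1]$. The summands are independent, each bounded by $Y_u \le n/\ell = 3\gamma k$ by the same period argument used in Lemma~\ref{lem:np_ub}. The expectation is
\[\Exp[X] = |B|\cdot \tfrac{4k\ln n}{n} \ge \tfrac{12}{7}\,k\ln n.\]

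Finally I would invoke Proposition~\ref{prp:chernoff}\eqref{it:lb} with $M = 3\gamma k$, $\mu = \tfrac{12}{7}k\ln n$, and, say, $\delta = \tfrac{1}{2}$, which for sufficiently large $n$ gives $(1-\delta)\mu \ge k$ and
\[\Pr[X < k] \le \exp\!\left(-\tfrac{\delta^2 \mu}{2M}\right) = n^{-\Omega(1)}.\]
Since $X\le |\fnp(S_1)\cap \fnp(S_2)|$, this yields the claimed bound. \textbf{The main subtlety} is that hashes for different positions are not independent — they are synchronized through the window identity — so a naive Bernoulli analysis is invalid; the grouping by window value together with the $Y_u\le 3\gamma k$ cap (which depends crucially on $\per(u)>\ell$ for non-cubic windows) is what lets Chernoff go through with the required $n^{-\Omega(1)}$ tail.
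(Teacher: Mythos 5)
Your proof is correct and follows essentially the same route as the paper: the paper defines the set $\Lambda$ of non-cubic positions whose length-$3\ell$ windows agree in $S_1$ and $S_2$ (your set $B$), bounds $|\Lambda|$ from below exactly as you do, and then applies Proposition~\ref{prp:chernoff}\eqref{it:lb} via the same grouping-by-window-value argument from Lemma~\ref{lem:np_ub} with the cap $Y_u\le 3\gamma k$. The only difference is that you spell out the Chernoff application explicitly where the paper merely references the argument of Lemma~\ref{lem:np_ub}.
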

\begin{proof}
For each $i\in [n]$, let $u_i=S_1^*[i\dd i+3\ell-1]$ and $v_i=S_2^*[i\dd i+3\ell-1]$,
and let $\Lambda=\{i\in \NP(S_i)\mid u_i=v_i\}$.
Notice that, for $i\in[n]$, we have
$u_i\ne v_i$ if and only if $\MP(S_1,S_2)\cap ([i\dd i+3\ell-1]\cmod n)\ne\emptyset$.
Hence, the number of indices $i\in[n]$ with $u_i\ne v_i$ is at most $|\MP(S_1,S_2)|\cdot 3\ell\le k\cdot\frac n{\gamma k}\le \frac n{\gamma}$.
Since $|\NP(S_1)|\ge \frac 12n$, then $|\Lambda|\ge\frac 12n-\frac 1{\gamma}n> \frac13 n$ due to $\gamma \ge 6$.
Thus, $\Exp[|\fnp(S_1)\cap \fnp(S_2)|]\ge|\Lambda|\cdot \tfrac{4k\ln n}{n}\ge \tfrac43 k\ln n$. The rest of the proof follows from Property~\ref{prp:chernoff}\eqref{it:lb} similarly as Property~\ref{prp:chernoff}\eqref{it:ub} is applied in the proof of Lemma~\ref{lem:np_ub}.
\end{proof}

\subsection{Selecting Positions for Strings with Many Cubic Positions}\label{sec:positions-p}
Recall that our goal is to design a rotation-invariant mechanism for selecting $\Ohtilde(k)$ indices so that, given two fairly similar strings, at least $k$ common indices are selected in both strings.
In the selection procedure described in Section~\ref{sec:positions-np}, the decision whether or not to include position $i$ was based on whether or not $S^*[i\dd i+3\ell-1]\in \S$ for a certain family $\S\sub \Sigma^{3\ell}$. Then, we argued that $S_1^*[i\dd i+3\ell-1]=S_2^*[i\dd i+3\ell-1]\in \S$ for at least $k$ positions $i\in [n]$.

Unfortunately, this strategy might be infeasible if $\P(S)$ is large, that is, when there is a large number of cubic positions in $S$.
For example, it could be the case that $S_1^*[i\dd i+3\ell-1]\ne S_2^*[i\dd i+3\ell-1]$ holds for $3\ell k=\frac{n}{\gamma}$
positions $i\in[n]$, and $S_1^*[i\dd i+3\ell-1]= S_2^*[i\dd i+3\ell-1]=\texttt{a}^{3\ell}$ for the remaining $n-\frac{n}{\gamma}$ positions $i\in [n]$.
This may happen even if $\HAM(S_1,\texttt{a}^n)=\Omega(\frac{n}{\gamma})$, i.e., for strings far from being $(3\gamma k, \gamma k)$-pseudo-periodic.

We begin with some intuition for the construction of the function $\fp$.
First, suppose that, for each position $i\in \P(S)$, we include in $\fp(S)$ the smallest $j>i$ such that $\per(S^*[i\dd j])>\per(S^*[i\dd i+3\ell-1])$. In other words, $\fp(S)$ contains the positions following each maximal cyclic fragment of length at least $3\ell$ and period at most $\ell$.
Notice that this construction satisfies Property~\ref{prop:rotation} of Theorem~\ref{thm:points-set}. Moreover, since each position may belong to at most two such maximal repetitions,
the number of positions selected is at most $\frac{2n}{3\ell}=2\gamma k$ (so that Property~\ref{prop:bounded-size} of Theorem~\ref{thm:points-set} is satisfied), and a substitution of a single character in $S$ may remove at most two positions from $\fp(S)$.
However, if the cubic positions are clustered in few blocks, then this mechanism is not enough to guarantee that  Property~\ref{prop:large-intersection} of Theorem~\ref{thm:points-set} is satisfied, i.e., that $|\fp(S_1)\cap \fp(S_2)|\ge k$ when $\HAM(S_1,S_2)\le k$. Hence, instead of selecting just one position $j$ for each $i\in \P(S)$, several positions are selected using a process inspired by~\cite{BWK19} with subsequent improvements in~\cite{CKW20}:
The fragment $S^*[i\dd i+3\ell-1]$ is maximally extended
to $S^*[i\dd i+\tau_i-1]$ so that the period of $S^*[i\dd i+\tau_i-1]$ drops to~$\per(S^*[i\dd i+3\ell-1])$ after $\Theta(\frac{k}{n}\tau_i)$ substitutions, and the underlying mismatching positions are added to $\fp(S)$.

\subsubsection{Definition of $\fp$}
For any $i\in\P(S^*)$, let $u_i=S^*[i\dd i+3\ell-1]$, let  $\rho_i=\per(u_i)$,
and let $\mu_{S,i}=S^*[i\dd i+\rho_i-1]$, which is the string period of $u_i$.
To avoid clutter in the presentation, we use $\mu_i=\mu_{S,i}$ when $S$ is clear from context.
Notice that, for $\tau\ge 2\rho_i$, the string $\mu_i^*[1\dd \tau]$ is the (unique) string of length $\tau$ with string period $\mu_i$.

We are now ready to formally define the concept of extending (to the right) a cubic fragment starting at position $i$ for as long as the ratio between the length of the extended fragment and  the Hamming distance between the extended fragment and the appropriate prefix of $\mu_i^*$ is large enough.
The length of such a (maximal) extended fragment is defined as \[\tau_{S,i}=\min \left\{\tau\mid \tau<\tfrac { n}{\gamma k}\HAM\left(S^*[i\dd i+\tau-1],\mu_i^*[1\dd \tau]\right)\right\}.\]
The following lemma shows that $\tau_{S,i}$ is well-defined, i.e., that the minimum in the definition of $\tau_{S,i}$ is taken over a non-empty set. The bound $\tau_{S,i}\le 2n$ is also useful later on. 
\begin{lemma}\label{lem:tau-le-2n}
For every $S\in \Sigma^n\setminus\mathcal{H}_{n,k}$ and $i\in\P(S)$, we have $\tau_{S,i}\le 2n$.
\end{lemma}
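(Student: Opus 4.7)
The plan is to argue by contradiction. Assuming $\tau_{S,i}>2n$, I will exhibit a $(3\gamma k,\gamma k)$-base of $S$, contradicting $S\notin\H_{n,k}$. Unpacking the assumption, for every $\tau\in[1,2n]$ we have $\HAM(S^*[i\dd i+\tau-1],\mu_i^*[1\dd\tau])\le \gamma k\tau/n$. Instantiating at $\tau=n$ yields $\HAM(\bar S, T)\le\gamma k$, where $\bar S:=S^*[i\dd i+n-1]$ (which as a string equals $\shft^{i-1}(S)$) and $T:=\mu_i^*[1\dd n]$ has linear period $\rho_i\le\ell$.

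If $\rho_i\mid n$, the proof is immediate: $T=\mu_i^{n/\rho_i}$ has primitive root at most $\rho_i\le\ell$, so $T$ is a $(3\gamma k,\gamma k)$-base of $\bar S$; Observation~\ref{obs:rootrot} then transfers this base to $S$, contradicting $S\notin\H_{n,k}$.

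The main obstacle is the case $\rho_i\nmid n$: then $\prr(T)$ may be as large as $n$, so $T$ itself is not a valid base. Here I will apply the assumption at $\tau=n+c\rho_i=2n-r$, where $r:=n\bmod\rho_i\in(0,\rho_i)$ and $c:=\lfloor n/\rho_i\rfloor\ge 3\gamma k$ (using $\rho_i<\ell$ strictly in this case, since $\ell\mid n$). By periodicity of $S^*$ with period $n$, the fragment $S^*[i\dd i+\tau-1]$ splits as $\bar S\cdot\bar S[1\dd c\rho_i]$; by periodicity of $\mu_i^*$ with period $\rho_i$, the fragment $\mu_i^*[1\dd\tau]$ splits as $T\cdot(\shft^r(\mu_i))^c$, since after $n$ positions the period-$\rho_i$ pattern appears cyclically shifted by $r$. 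Partitioning $\bar S[1\dd c\rho_i]$ into $c$ length-$\rho_i$ blocks $\bar S_1,\ldots,\bar S_c$, the block-wise decomposition of both comparisons (after discarding the nonnegative contribution of the length-$r$ tail) gives
\[\sum_{j=1}^c\HAM(\bar S_j,\mu_i) + \sum_{j=1}^c\HAM(\bar S_j,\shft^r(\mu_i)) \le 2\gamma k.\]
Applying the triangle inequality within each block, $\HAM(\bar S_j,\mu_i)+\HAM(\bar S_j,\shft^r(\mu_i))\ge \HAM(\mu_i,\shft^r(\mu_i))$, yields $c\cdot\HAM(\mu_i,\shft^r(\mu_i))\le 2\gamma k$; since $c\ge 3\gamma k$, this forces $\HAM(\mu_i,\shft^r(\mu_i))=0$, i.e., $\mu_i=\shft^r(\mu_i)$.

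The concluding step is primitivity of $\mu_i$: if $\mu_i$ were a proper power, then $u_i=\mu_i^*[1\dd 3\ell]$ would admit a period strictly shorter than $\rho_i=\per(u_i)$, which is impossible. Hence the only rotations fixing $\mu_i$ are those by multiples of $\rho_i$, but $0<r<\rho_i$, so $\mu_i=\shft^r(\mu_i)$ is absurd. The key idea, and the place where the argument is most delicate, is this use of the $\tau=2n-r$ bound together with a triangle-inequality argument across many aligned copies of $\mu_i$ to distill a constraint on $\mu_i$ alone, which then collides with its primitivity.
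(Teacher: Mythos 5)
Your proof is correct and follows essentially the same route as the paper's: both exploit the window of length $\approx 2n$ together with a triangle-inequality argument across the $\lfloor n/\rho_i\rfloor\ge 3\gamma k$ aligned copies of $\mu_i$ to force $\mu_i=\shft^{n\bmod\rho_i}(\mu_i)$, conclude $\rho_i\mid n$ from primitivity, and then use the $\tau=n$ instance to exhibit a $(3\gamma k,\gamma k)$-base of $\shft^{i-1}(S)$, contradicting $S\notin\H_{n,k}$ via Observation~\ref{obs:rootrot}. The only differences are organizational (your explicit case split on $\rho_i\mid n$ and the choice $\tau=2n-r$ instead of $\tau=2n$), so this is a valid rendering of the same argument.
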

\begin{proof}
	Let $i\in\P(S)$ and assume by contradiction that $\tau_{S,i}> 2n$.
	This yields \[2n \ge \tfrac{n}{\gamma k} \HAM\left(S^*[i\dd i+2n-1],\mu_i^*[1\dd 2n]\right).\]
	Moreover, $S^*[i\dd i+n-1]=S^*[i+n\dd\allowbreak i+2n-1]$, and so, by the triangle inequality,
	\begin{align*}
	2\gamma k&\ge  \HAM\left(S^*[i\dd i+2n-1],\mu_i^*[1\dd 2n]\right)\\&=\HAM\left(S^*[i\dd i+n-1],\mu_i^*[1\dd n]\right)+\HAM\left(S^*[i+n\dd i+2n-1],\mu_i^*[n+1\dd 2n]\right)\\&=\HAM\left(S^*[i\dd i+n-1],\mu_i^*[1\dd n]\right)+\HAM\left(S^*[i\dd i+n-1],\mu_i^*[n+1\dd 2n]\right)\\ &\ge \HAM\left(\mu_i^*[1\dd n],\mu_i^*[n+1\dd 2n]\right).
	\end{align*}
	Notice that  for any strings $x,y,z$ (with $|x|=|y|$) and any integer $m$, we have $\HAM(x,y)=\frac{1}{m}\HAM(x^m,y^m)$ and $\HAM(x,y)\le \HAM(xz,yz)$.
	Thus, due to $|\mu_i|=\rho_i \le \ell \le \frac{n}{3\gamma k}$, we have
	\begin{multline*}\HAM\left(\mu_i,\mu_i^*[n+1\dd n+\rho_i]\right)=\tfrac{1}{3\gamma k}\HAM\left(\mu_i^*[1\dd 3\gamma k \rho_i],\mu_i^*[n+1\dd n+3\gamma k \rho_i ]\right)\\\le \tfrac{1}{3\gamma k}\HAM\left(\mu_i^*[1\dd n],\mu_i^*[n+1\dd 2n]\right)\le \tfrac{2\gamma k}{3\gamma k} < 1.\end{multline*}
	Consequently, $\mu_i=\mu_i^*[n+1\dd n+\rho_i]=\shft^n(\mu_i)$, which implies $\rho_i \mid n$ by primitivity of $\mu_i$ (recall that $\mu_i = \shft^m(\mu_i)$ only for $\rho_i \mid m$).
	Since
	$\tau_{S,i}>n$, we have  $n \ge \frac{n}{\gamma k} \HAM(S^*[i\dd i+n-1],\allowbreak\mu_i^*[1\dd n])$, that is
	$\gamma k\ge  \HAM\big(S^*[i\dd i+n-1],\mu_i^*[1\dd n]\big)=\HAM\big(S^*[i\dd i+n-1], \mu_i^{n/\rho_i}\big)$. Hence, $S^*[i\dd i+n-1]\in \H_{n,k}$ so, by Observation~\ref{obs:rootrot}, $S\in \H_{n,k}$.
\end{proof}

Let $R_{S,i}=[i\dd i+\tau_i-1]$ be the positions in the extended fragment, and let $M_{S,i}=\{j\in R_{S,i}\mid S[j]\ne \mu_i^*[j-i+1]\}$ be the set of positions in $R_{S,i}$ corresponding to mismatches between $S^*[i\dd i+\tau_i-1]$ and $\mu_i^*[1\dd \tau_i]$. To avoid clutter in the presentation, we use $\tau_i=\tau_{S,i}$, $R_i=R_{S,i}$, and $M_i=M_{S,i}$ when $S$ is clear from context.
Define \[\fp(S)=\bigcup_{i\in \P(S)} (M_{i}\cmod n) =\{p\cmod n\mid p\in M_{i}, i\in \P(S)\}.\]

\subsubsection{Properties of $\fp$}
\subparagraph*{Property~\ref{prop:bounded-size} of Theorem~\ref{thm:points-set}.}
Our strategy for proving an upper bound on the size of $\fp(S)$
is to associate each $i\in \P(S)$ with a carefully defined set $A_i \sub R_i$.
We then select a subset $\Gamma\sub \P(S)$ so that the sets $A_i$ for $i\in \Gamma$ are disjoint subsets of $[1\dd 3n]$
and $\bigcup_{i\in \Gamma}M_i = \bigcup_{i\in \P(S)}M_i$.
Finally, we show that $|M_i| = \Oh(\frac{\gamma k}{n}|A_i|)$ for each $i\in \P(S)$,
and so $|\bigcup_{i\in \P(S)}M_i|=| \bigcup_{i\in \Gamma}M_i| = \Oh(\sum_{i\in \Gamma}\frac{\gamma k}{n}|A_i|)=\Oh(\gamma k)$.

For each $R_i$, consider the set of indices $j\in R_i$ such that $[j,j+2\ell)\cap M_i = \emptyset $.
Formally, let $A_i=\{j\in R_i \mid [j,j+2\ell)\subseteq R_i\setminus M_i\}$.
The following lemma lets us define $\fp(S)$ as the union of $M_i\cmod n$ for a restricted set of values of $i$, with the property of having disjoint sets $A_i$.

\begin{lemma}\label{lem:intersect-A-contain-M}
	Let $i,i'\in \P(S)$. If $i<i'$ and $A_i\cap A_{i'} \ne \emptyset$, then $M_{i'}\subseteq M_i $.
\end{lemma}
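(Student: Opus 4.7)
The plan is to use the agreement window guaranteed by $A_i \cap A_{i'}$ to synchronise the periodic structures at $i$ and $i'$, and then to use the threshold definition of $\tau_i$ to force $R_{i'} \subseteq R_i$; the inclusion $M_{i'} \subseteq M_i$ will then follow by matching mismatches position-by-position.

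First, I fix any $j \in A_i \cap A_{i'}$. By the definitions of $A_i$ and $A_{i'}$, the window $W := S^*[j \dd j + 2\ell - 1]$ equals both $\mu_i^*[j - i + 1 \dd j - i + 2\ell]$ and $\mu_{i'}^*[j - i' + 1 \dd j - i' + 2\ell]$, so $W$ has periods $\rho_i$ and $\rho_{i'}$, both at most $\ell$. Since $|W| = 2\ell \ge \rho_i + \rho_{i'}$, the Fine--Wilf theorem yields $\gcd(\rho_i, \rho_{i'})$ as a period of $W$. Combined with the fact that $W$ contains two full copies of a cyclic rotation of $\mu_i$, which is primitive because it is the shortest period of $u_i$ (and rotations of primitive strings are primitive), this forces the shortest period of $W$ to be both $\rho_i$ and $\rho_{i'}$, so $\rho_i = \rho_{i'} =: \rho$. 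Reading $W[1 \dd \rho]$ in both representations then shows that $\mu_{i'}$ is the rotation of $\mu_i$ by $(i' - i) \bmod \rho$ positions, which is equivalent to the pointwise identity $\mu_i^*[q - i + 1] = \mu_{i'}^*[q - i' + 1]$ for every integer $q \ge i'$.

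Next, I prove $R_{i'} \subseteq R_i$, i.e., $i' + \tau_{i'} \le i + \tau_i$, by contradiction. Suppose $i + \tau_i < i' + \tau_{i'}$ and set $\tau^* := i + \tau_i - i'$; since $j \in R_i \cap R_{i'}$ forces $i' \le j \le i + \tau_i - 1$, one has $1 \le \tau^* \le \tau_i$, and the contradiction hypothesis additionally gives $\tau^* < \tau_{i'}$. Because $i' - i < \tau_i$ and $\tau^* < \tau_{i'}$, the minimality clauses in the definitions of $\tau_i$ and $\tau_{i'}$ give the two non-strict bounds
\[
\HAM\bigl(S^*[i \dd i' - 1],\,\mu_i^*[1 \dd i' - i]\bigr) \le \tfrac{\gamma k}{n}(i' - i),\qquad \HAM\bigl(S^*[i' \dd i + \tau_i - 1],\,\mu_{i'}^*[1 \dd \tau^*]\bigr) \le \tfrac{\gamma k}{n}\tau^*.
\]
The pointwise identity from the previous paragraph lets me rewrite $\mu_{i'}^*[1 \dd \tau^*]$ as $\mu_i^*[i' - i + 1 \dd \tau_i]$ in the second bound; summing the two bounds then produces $\HAM(S^*[i \dd i + \tau_i - 1],\,\mu_i^*[1 \dd \tau_i]) \le \tfrac{\gamma k}{n}\tau_i$, in direct contradiction with the strict inequality $\HAM(\cdots) > \tfrac{\gamma k}{n}\tau_i$ that the definition of $\tau_i$ enforces at $\tau = \tau_i$.

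Finally, having established $R_{i'} \subseteq R_i$ together with the pointwise identity of $\mu_i^*$ and $\mu_{i'}^*$ on this common domain, any $q \in R_{i'}$ satisfies $q \in M_{i'}$ iff $S^*[q] \ne \mu_{i'}^*[q - i' + 1]$ iff $S^*[q] \ne \mu_i^*[q - i + 1]$ iff $q \in M_i$, so $M_{i'} \subseteq M_i$. I expect the main obstacle to be the careful bookkeeping in the middle step: one must certify that both $\tau^* < \tau_{i'}$ and $i' - i < \tau_i$ hold (so the \emph{non-strict} mismatch bounds apply), and that $\tau^*$ and $i' - i$ sum exactly to $\tau_i$, so that the two bounds add cleanly into a contradiction with the strict bound at $\tau = \tau_i$.
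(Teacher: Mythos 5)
Your proof is correct and follows essentially the same route as the paper's: synchronize $\mu_i^*$ and $\mu_{i'}^*$ via the $2\ell$-window in $A_i\cap A_{i'}$ (the paper cites a periodicity fact where you invoke Fine--Wilf plus primitivity of $\mu_i$, which is equivalent), then split the Hamming-distance count at position $i'$ and use the threshold definition of $\tau_i,\tau_{i'}$ to force $R_{i'}\subseteq R_i$, and conclude pointwise. Your contradiction at the single value $\tau=\tau_i$ versus the paper's direct bound for all $\tau<i'-i+\tau_{i'}$ is only a cosmetic difference.
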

The following fact is useful in the proof of Lemma~\ref{lem:intersect-A-contain-M}.
\begin{fact}[{\cite[{Lemma 6}]{GKP19}}]\label{lem:period-of-substring}
	Let $S$ be a periodic string. If $T$ is a substring of $S$ of length at least $2\per(S)$, then $\per(S) = \per(T)$.
\end{fact}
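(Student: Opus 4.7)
The plan is to establish the two inequalities $\per(T) \le \per(S)$ and $\per(T) \ge \per(S)$ separately; throughout, I write $p := \per(S)$ and $T = S[a \dd b]$. The first inequality is immediate: the relation $S[i] = S[i+p]$, restricted to indices in $[a \dd b-p]$, translates to $T[j] = T[j+p]$ for $1 \le j \le |T| - p$, so $p$ is a period of $T$ and hence $\per(T) \le p$.

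For the harder direction $\per(T) \ge p$, I would argue by contradiction. Suppose $q := \per(T) < p$. Then $T$ carries two periods: $p$ (inherited from $S$) and $q$. Because $|T| \ge 2p \ge p + q$, the Fine and Wilf theorem forces $T$ to have period $d := \gcd(p,q)$, with $d \le q < p$ and $d \mid p$. This reduces the problem to deriving a contradiction with $\per(S) = p$ from the fact that the window $T$ has the shorter period $d$.

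The main obstacle, and the step requiring the most care, is lifting the period $d$ from $T$ up to all of $S$. For any $j$ with $1 \le j \le |S| - d$, I aim to produce $j' \in [a \dd b - d]$ satisfying $j' \equiv j \pmod{p}$. Such a $j'$ exists because the range $[a \dd b - d]$ has length $|T| - d \ge 2p - d \ge p$, so it contains a full residue system modulo $p$; this is precisely where the hypothesis $|T| \ge 2p$ (rather than just $|T| \ge p + q$) is consumed. Given such a $j'$, the period-$p$ property of $S$ yields $S[j] = S[j']$, and since $d \mid p$ implies $j + d \equiv j' + d \pmod{p}$, it also yields $S[j+d] = S[j'+d]$. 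The period-$d$ property of $T$, applied to $j', j' + d \in [a \dd b]$, yields $S[j'] = S[j'+d]$. Chaining these three equalities gives $S[j] = S[j+d]$, so $S$ has period $d < p$, contradicting $p = \per(S)$ and completing the proof.
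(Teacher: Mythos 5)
The paper does not prove this statement at all---it is imported as a black box from \cite[Lemma~6]{GKP19}---so there is no in-paper argument to compare against; your proof stands on its own, and it is correct. The easy direction $\per(T)\le\per(S)$ is handled properly, and the lifting step is sound: the window $[a\dd b-d]$ has $|T|-d\ge 2p-d\ge p$ consecutive integers, hence meets every residue class modulo $p$, and the three equalities $S[j]=S[j']$, $S[j']=S[j'+d]$, $S[j+d]=S[j'+d]$ all stay within valid index ranges. One remark: the appeal to Fine and Wilf is dispensable. You invoke it only to replace $q=\per(T)$ by $d=\gcd(p,q)$, but your propagation argument works verbatim with $q$ in place of $d$ (the congruence $j+q\equiv j'+q\pmod p$ follows from $j\equiv j'\pmod p$ alone---divisibility of $p$ by the shift is never actually used, contrary to your parenthetical justification). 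Propagating $q$ directly shows $q$ is a period of $S$, whence $\per(S)\le q=\per(T)\le\per(S)$, giving an entirely elementary proof. This is a stylistic simplification, not a gap.
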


\begin{proof}[Proof of Lemma~\ref{lem:intersect-A-contain-M}]
	Let $j\in A_i\cap A_{i'}$.
	By definition,
	$[j\dd j+2\ell)\subseteq (R_i\setminus M_i) \cap (R_{i'}\sm M_{i'})$.
	Thus, $\mu_i^*[1+j-i\dd 2\ell +j-i] =S^*[j\dd j+2\ell-1]=\mu_{i'}^*[1+j-i'\dd 2\ell +j-i']$.
	Since $\rho_i=\per(\mu_i^*)\le \ell$ and $\rho_{i'}=\per(\mu_{i'}^*)\le \ell$, by Lemma~\ref{lem:period-of-substring}, we have $\rho_i=\per(\mu_i^*)=\per(\mu_i^*[1+j-i\dd 2\ell +j-i])=\per(\mu_{i'}^*[1+j-i'\dd 2\ell +j-i'])=\rho_{i'}$.
	Therefore, $\mu_{i'}^*[1\dd \tau_{i'}]= \mu_i^*[i'-i+1\dd i'-i+\tau_{i'}]$ (since the two fragments are extensions of the same periodic string with the same period).
	Hence, for any $\tau\le\tau_{i'}$, we have
	$\HAM(S^*[i'\dd i'+\tau-1],\mu_{i'}^*[1\dd \tau])=\HAM(S^*[i'\dd i'+\tau-1],\allowbreak \mu_{i}^*[i'-i+1\dd i'-i+\tau])$.

	Since $\min (A_i\cap A_{i'})\ge i'$ and $A_i\subseteq R_i$, we have that $\tau_i> i'-i$.
	Therefore, for $\tau=i'-i$, we have $i'-i\ge \tfrac n{\gamma k} \HAM(S^*[i\dd i+i'-i-1],\mu_i^*[1\dd i'-i])= \tfrac n{\gamma k} \HAM(S^*[i\dd i'-1],\mu_i^*[1\dd i'-i])$.
	
	Thus, for any $\tau<i'-i+\tau_{i'}$, we have
	\begin{align*}
	\tfrac n{\gamma k}&\HAM\left(S^*[i\dd i+\tau-1],\mu_i^*[1\dd \tau]\right)\\&=
	\tfrac n{\gamma k}\HAM\left(S^*[i\dd i'-1],\mu_i^*[1\dd i'-i]\right)+ \tfrac{n}{\gamma k}\HAM\left(S^*[i'\dd i+\tau-1],\mu_i^*[i'-i+1\dd \tau]\right)\\&\le
	i'-i +\tfrac n{\gamma k}\HAM\left(S^*[i'\dd i'-(i'-i)+\tau-1],\mu_{i'}^*[1\dd \tau-(i'-i)]\right)\\ &\le
	i'-i +\tau-(i'-i)=\tau.
	\end{align*}
	Consequently, $\tau_i\ge i'-i+\tau_{i'}$, which means that $R_{i'}\subseteq R_i$.
	For a proof that $M_{i'}\sub M_i$, let us choose $j'\in M_{i'}$. By definition, $S[j']\ne\mu_{i'}^*[j'-i'+1]=\mu_i^*[j'-i'+1+(i'-i)]=\mu_i^*[j'-i+1]$.
	Hence, $j'\in M_i$.
\end{proof}

Lemma~\ref{lem:intersect-A-contain-M} implies that for any two indices $i<i'$, if $A_i\cap A_{i'}\ne\emptyset$, then $M_{i'}\subseteq M_i$, and thus it is enough to consider only the index $i$ when defining $\fp(S)$.
Therefore, we define $\Gamma=\{i'\in\P(S)\mid \forall i<i': A_i\cap A_{i'}=\emptyset\}$.
Notice that, among $i\in\Gamma$, all the sets $A_i$ are disjoint.
Moreover, since for any $i\in\P(S)$ we have $A_i\subseteq R_i\subseteq [1\dd 3n]$ by Lemma~\ref{lem:tau-le-2n}, we have $\sum_{i\in\Gamma}|A_i|=\left|\bigcup_{i\in\Gamma}A_i\right|\le|[1\dd 3n]|= 3n$.

For every $i\in \P(S)$, we have $|A_i| \ge |R_i|-2\ell |M_i|= |R_i|-\frac{2n}{3\gamma k}|M_i|$.
Furthermore, $|R_i|-1\ge \frac { n}{\gamma k}(|M_i|-1)$ by definition of $\tau_i = |R_i|$.
Thus, $|A_i| >\tfrac{n}{\gamma k}|M_i| - \tfrac{n}{\gamma k}-\tfrac{2n}{3\gamma k}|M_i|
= \tfrac{n}{3\gamma k}(|M_i|-3)$.
Due to $[i\dd i+\ell)\sub A_i$, we have $|A_i| \ge \ell =  \frac{n}{3\gamma k}$, and therefore
$|M_i|< \tfrac{3\gamma k}{n}|A_i|+3 \le\tfrac{3\gamma k}{n}|A_i|+ \tfrac{9\gamma k}{n}|A_i|=\tfrac{12\gamma k}{n}|A_i|$.
Hence,
$|\fp(S)| \le \Big|\bigcup_{i\in \P(S)} M_i\Big| = \Big|\bigcup_{i\in \Gamma} M_i\Big|
\le \sum_{i\in \Gamma} |M_i| \le \sum_{i\in\Gamma}\tfrac{12\gamma k}{n}|A_i|
= \tfrac{12\gamma k}{n}\sum_{i\in \Gamma}|A_i|\le 36\gamma k$.

\subparagraph*{Property~\ref{prop:rotation} of Theorem~\ref{thm:points-set}.}
The following lemma states that $\fp$ satisfies Property~\ref{prop:rotation}.

\begin{lemma}\label{lem:p-rotation}
	For every $S\in\Sigma^n$, we have $\fp(\shft(S))=\rot(\fp(S))$.
\end{lemma}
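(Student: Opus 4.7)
The plan is to unfold the definition of $\fp$ and track how each ingredient ($\P$, $u_i$, $\rho_i$, $\mu_i$, $\tau_i$, $R_i$, $M_i$) transforms when $S$ is replaced by $\shft(S)$. The key observation is that the infinite string $(\shft(S))^*$ is simply $S^*$ shifted by one position to the left: for every $j\in\mathbb{Z}$, we have $(\shft(S))^*[j] = S^*[j+1]$. Everything else will follow formally from this identity together with the definitions.

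First I would check that cubicity is shift-equivariant: for any $i$, the length-$3\ell$ cyclic factor of $\shft(S)$ starting at $i$ equals the length-$3\ell$ cyclic factor of $S$ starting at $i+1$, so $\per$ is the same, and thus $\P(\shft(S)) = \{i\in[n] \mid (i+1)\cmod n \in \P(S)\} = \rot(\P(S))$. For the same reason, for each $i\in\P(\shft(S))$, letting $i' = (i+1)\cmod n\in \P(S)$, we have $u_{\shft(S),i} = u_{S,i'}$, hence $\rho_{\shft(S),i} = \rho_{S,i'}$ and $\mu_{\shft(S),i} = \mu_{S,i'}$. Plugging these equalities into the definition of $\tau$ gives $\tau_{\shft(S),i} = \tau_{S,i'}$, because the Hamming distances in the definition depend only on the corresponding factors of $S^*$.

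Consequently the extended-window set $R_{\shft(S),i}$ is the translate of $R_{S,i'}$ by $-1$, and likewise $M_{\shft(S),i} = \{j-1 \mid j \in M_{S,i'}\}$ as sets of integers. Reducing modulo $n$ commutes with this translation (this is exactly the meaning of $\rot$), so $M_{\shft(S),i} \cmod n = \rot(M_{S,i'} \cmod n)$. Taking the union over $i\in\P(\shft(S))$ and reindexing by $i' = (i+1)\cmod n$ ranging over $\P(S)$ yields
\[
\fp(\shft(S)) = \bigcup_{i\in\P(\shft(S))} (M_{\shft(S),i} \cmod n) = \bigcup_{i'\in\P(S)} \rot(M_{S,i'}\cmod n) = \rot(\fp(S)).
\]

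The only real care needed is around the wrap-around: $R_{S,i'}$ may extend past position $n$ (indeed Lemma~\ref{lem:tau-le-2n} only guarantees $\tau_{S,i'}\le 2n$), and the definition of $\rot$ operates on subsets of $[n]$ via the $\cmod n$ reduction. I therefore expect the main bookkeeping obstacle to be verifying that the pointwise shift $j\mapsto j-1$ on integer positions descends correctly to the quotient by $n$, i.e.\ that $(j-1)\cmod n = \bigl((j\cmod n) - 1\bigr)\cmod n$, so the two ways of reducing $M_{\shft(S),i}$ to a subset of $[n]$ agree. Once that identity is verified, the chain of equalities above is immediate and the lemma follows.
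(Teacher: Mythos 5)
Your proposal is correct and follows essentially the same route as the paper's proof: both rest on the identity $(\shft(S))^*[j]=S^*[j+1]$, deduce $\tau_{\shft(S),i}=\tau_{S,(i+1)\cmod n}$, and translate the mismatch sets by one position, with the paper simply splitting the wrap-around case $i=n$ (where the integer translate is by $n-1$, not $-1$, but agrees after reduction modulo $n$) into an explicit second case. The bookkeeping identity you flag is exactly what the paper verifies there, so no gap remains.
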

\begin{proof}
	Let $j\in\fp(\shft(S))$. There exists $i\in\P(\shft(S))$ such that $j\in M_{\shft(S),i}\cmod n$. Let $j'\in M_{\shft(S),i}$ such that $j=j'\cmod n$.
	We distinguish between two cases:
	if $i\in[1\dd n-1]$, then, since $i\in\P(\shft(S))$, we have $i+1\in\P(S)$ and $\tau_{S,i+1}=\tau_{\shft(S),i}$. Therefore, $j'+1\in M_{S,i+1}$ and $(j'+1)\cmod n\in \fp(S)$. Thus, $j=(j'+1-1)\cmod n\in\rot(\fp(S))$.
	If $i=n$, then it must be that $1\in \P(S)$ and $\tau_{S,1}=\tau_{\shft(S),n}$.
	Therefore, $j'-n+1\in M_{S,1}$ and $(j'-n+1)\cmod n\in\fp(S)$.  Thus, $j=(j'-n+1-1)\cmod n\in\rot(\fp(S))$.
	The converse inclusion holds symmetrically.
\end{proof}

\subparagraph*{Property~\ref{prop:large-intersection} of Theoren~\ref{thm:points-set}.}
We first give a lower bound on $|\fp(S)|$ in terms of $|\P(S)|$.

\begin{lemma}\label{lem:fp-large}
For every string $S\in\Sigma^n\setminus\mathcal H_{n,k}$, we have $|\fp(S)|\ge\tfrac {\gamma k}{3n}|\P(S)|$.
\end{lemma}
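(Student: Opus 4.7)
The plan is to combine two ingredients: a density inequality relating $|M_i|$ to $|R_i|$, and a coverage claim that $\P(S)$ lies inside a union of the sets $R_i$ indexed by the leftmost--representative set $\Gamma$ already introduced in the proof of Property~\ref{prop:bounded-size}.

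First, the density inequality $|M_i|>\frac{\gamma k}{n}|R_i|$ for every $i\in\P(S)$ is immediate from the minimality in the definition of $\tau_i$: the defining strict inequality $\tau_i<\frac{n}{\gamma k}\HAM(S^*[i\dd i+\tau_i-1],\mu_i^*[1\dd\tau_i])$ is exactly $|R_i|<\frac{n}{\gamma k}|M_i|$. (Lemma~\ref{lem:tau-le-2n} guarantees $\tau_i$ exists and $\tau_i\le 2n$.)

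Second, I would prove the inclusion $\P(S)\subseteq\bigcup_{i\in\Gamma}(R_i\cmod n)$. Reading the proof of Lemma~\ref{lem:intersect-A-contain-M} carefully, one sees that whenever $i<i'$ in $\P(S)$ with $A_i\cap A_{i'}\ne\emptyset$, in addition to $M_{i'}\subseteq M_i$ one actually gets $\tau_i\ge (i'-i)+\tau_{i'}$ and hence $R_{i'}\subseteq R_i$. Thus for any $p\in\P(S)\setminus\Gamma$, repeatedly replacing $p$ by a strictly smaller index whose $A$-set meets $A_p$ produces a strictly decreasing chain that must terminate at some $i^*\in\Gamma$ with $R_p\subseteq R_{i^*}$; since $p\in R_p$, the inclusion follows.

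Combining the two steps gives
\[
|\P(S)|\ \le\ \Big|\textstyle\bigcup_{i\in\Gamma}(R_i\cmod n)\Big|\ \le\ \sum_{i\in\Gamma}|R_i|\ <\ \tfrac{n}{\gamma k}\sum_{i\in\Gamma}|M_i|,
\]
which rearranges to $\sum_{i\in\Gamma}|M_i|>\frac{\gamma k}{n}|\P(S)|$.

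The concluding step is to convert this aggregate bound into a lower bound on $|\fp(S)|=\big|\bigcup_{i\in\Gamma}(M_i\cmod n)\big|$, and this is the main obstacle. The target is $\sum_{i\in\Gamma}|M_i|\le 3|\fp(S)|$, which together with the above chain yields the claim $|\fp(S)|\ge\frac{\gamma k}{3n}|\P(S)|$. To obtain it, I would exploit two structural facts: (i) since $|R_i|\le 2n$ by Lemma~\ref{lem:tau-le-2n}, each residue of $[n]$ occurs at most twice among the elements of $M_i\subseteq R_i$; and (ii) the disjointness of the $A_i$'s for $i\in\Gamma$, together with $A_i\supseteq[i,i+\ell)$, forces the starting indices of the $R_i$'s to be well separated. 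The plan for the overlap bound is to inject each $m\in M_i$ into a nearby length-$O(\ell)$ window inside $A_i$ so that the disjointness of the $A_i$'s turns into a constant--multiplicity bound for how often each residue of $[n]$ can arise from different $M_i\cmod n$'s. This association step is where the argument is most delicate, and it is exactly analogous (in structure, though reversed in direction) to the upper-bound argument $|M_i|<\frac{12\gamma k}{n}|A_i|$ already used for Property~\ref{prop:bounded-size}.
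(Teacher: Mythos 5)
Your first two steps are sound: the density inequality $|M_i|>\frac{\gamma k}{n}|R_i|$ is indeed immediate from the definition of $\tau_i$, and the coverage claim $\P(S)\subseteq\bigcup_{i\in\Gamma}R_i$ does follow from the strengthened reading of Lemma~\ref{lem:intersect-A-contain-M} (whose proof establishes $R_{i'}\subseteq R_i$, not only $M_{i'}\subseteq M_i$). The gap is exactly where you flag it: the conversion $\sum_{i\in\Gamma}|M_i|\le 3|\fp(S)|$ is not proved, and it is not a routine technicality. For $i\in\Gamma$ only the sets $A_i$ are pairwise disjoint; the regions $R_i$, and hence the mismatch sets $M_i\subseteq R_i$, may overlap heavily. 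A position $j$ can lie in $M_i$ for many indices $i\in\Gamma$ at once --- the disjointness of the $A_i$'s only forces each such ``passed-over'' index to contribute a mismatch of the leftmost region within distance $O(\ell)$, so the multiplicity of $j$ is controlled only in terms of $|M_{i_1}|$ for the leftmost region $R_{i_1}$ containing $j$, not by a constant. Consequently $\sum_{i\in\Gamma}|M_i|$ can exceed $\bigl|\bigcup_{i\in\Gamma}M_i\bigr|$ by a super-constant factor, and the factor $3$ you are aiming for, which accounts only for folding $[1\dd 3n]$ modulo $n$, cannot also absorb this overlap. The route through $\Gamma$ therefore does not close as stated.

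The paper sidesteps the issue by not reusing $\Gamma$ for the lower bound. It builds a fresh subfamily $\Delta\subseteq\P(S)$ greedily from left to right: starting from $\Delta=\emptyset$, repeatedly add $\min\bigl(\P(S)\setminus\bigcup_{i\in\Delta}R_i\bigr)$ until $\P(S)\subseteq\bigcup_{i\in\Delta}R_i$. Each newly chosen index lies strictly to the right of every previously chosen region, so the intervals $R_i$ for $i\in\Delta$ are pairwise disjoint; hence the sets $M_i$ are genuinely disjoint and
\[
\Bigl|\bigcup_{i\in\Delta}M_i\Bigr|=\sum_{i\in\Delta}|M_i|>\tfrac{\gamma k}{n}\sum_{i\in\Delta}|R_i|\ge\tfrac{\gamma k}{n}|\P(S)|,
\]
with coverage of $\P(S)$ immediate from the stopping condition. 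The only multiplicity loss is then the factor $3$ from applying $\cmod n$ to $\bigcup_{i\in\Delta}M_i\subseteq[1\dd 3n]$ (using Lemma~\ref{lem:tau-le-2n}), and $\fp(S)\supseteq\bigcup_{i\in\Delta}(M_i\cmod n)$ finishes the proof. If you replace $\Gamma$ by such a greedy disjoint subfamily, your delicate association step disappears entirely.
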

\begin{proof}
	First, we shall construct a set $\Delta\subseteq\P(S)$ such that $\sum_{i\in\Delta}|R_i|\ge |\P(S)|$ and, for any two distinct indices $i,i'\in\Delta$, we have $R_i\cap R_{i'}=\emptyset$.
	We build $\Delta$ iteratively.
	We start with $\Delta=\emptyset$ and, as long as $\P(S)\not \sub \bigcup_{i\in\Delta}R_i$, we add $\min\left(\P(S)\setminus \bigcup_{i\in\Delta}R_i\right)$ to $\Delta$.
	Let $i<i'$ be two indices in $\Delta$. When $i'$ was added to $\Delta$, we already had $i\in\Delta$. Thus, $R_i$ ends to the left of $i'$, which is the starting point of $R_{i'}$. Hence, $R_i\cap R_{i'}=\emptyset$.
	The algorithm terminates when $\P(S)\subseteq \bigcup_{i\in\Delta}R_i$, so $|\P(S)|\le |\bigcup_{i\in\Delta}R_i|=\sum_{i\in\Delta}|R_i|$.

	For any $i\in\P(S)$, we have $|R_i|=\tau_{i}<\tfrac { n}{\gamma k}\HAM\left(S^*[i\dd i+\tau_{i}-1],\mu_i^*[1\dd \tau_{i}]\right)$, i.e., $|R_i|<\tfrac{n}{\gamma k}|M_i|$.
	Since $M_i\subseteq R_i$ for every $i$, the sets $M_i$ for $i\in \Delta$ are disjoint.
Consequently, $|\bigcup_{i\in\Delta} M_i|= \sum_{i\in\Delta}|M_i|>\tfrac {\gamma k}{n}\sum_{i\in\Delta}|R_i|\ge\tfrac {\gamma k}n|\P(S)|$.

By Lemma~\ref{lem:tau-le-2n}, for any $i\in\P(S)$, we have $\tau_i\le 2n$. Therefore, $\bigcup_{i\in\Delta} M_i\subseteq [1\dd 3n]$ and each position in $j\in \bigcup_{i\in\Delta} (M_i\cmod n)$ may be introduced by at most $3$ positions $j,j+n,j+2n\in \bigcup_{i\in\Delta} M_i$.
Thus, $|\fp(S)|=\left|\bigcup_{i\in\Delta} (M_i\cmod n)\right|\ge \tfrac 13\left|\bigcup_{i\in\Delta} M_i\right|\ge\tfrac {\gamma k}{3n}|\P(S)|$.
\end{proof}

Using Lemma~\ref{lem:fp-large}, we prove the third property of Theorem~\ref{thm:points-set}, assuming $|\P(S_1)|\ge \frac 12n$.

\begin{lemma}\label{lem:fp-large-intersection}
	Suppose that $S_1,S_2\in \Sigma^n\sm \H_{n,k}$ satisfy $\HAM(S_1,S_2)\le k$.
	If  $|\P(S_1)|\ge \frac 12n$, then $|\fp(S_1)\cap \fp(S_2)|\ge k$.
\end{lemma}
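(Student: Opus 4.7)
The plan is to carve out, from $\P(S_1)$, a large set of cubic positions whose local periodic structure is preserved in $S_2$, and then to run the disjoint-window accounting of Lemma~\ref{lem:fp-large} on carefully truncated windows.

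First, set $\Delta=\MP(S_1,S_2)$ with $|\Delta|\le k$, and define the \emph{safe} set
\[X=\{i\in\P(S_1)\mid \Delta\cap([i\dd i+3\ell-1]\cmod n)=\emptyset\}.\]
Each mismatch in $\Delta$ disqualifies at most $3\ell$ values of $i$, so $|X|\ge|\P(S_1)|-3\ell k\ge \tfrac n2-\tfrac n\gamma\ge \tfrac{3n}{7}$ using $\gamma\ge 14$. For every $i\in X$ the length-$3\ell$ cyclic substrings agree, $u_{S_1,i}=u_{S_2,i}$, so $i\in\P(S_2)$ and $\mu_{S_1,i}=\mu_{S_2,i}=:\mu_i$. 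Next, set $\bar\tau_i=\min(\tau_{S_1,i},\tau_{S_2,i})$ and $\bar R_i=[i\dd i+\bar\tau_i-1]\subseteq[1\dd 3n]$ (using Lemma~\ref{lem:tau-le-2n}), and greedily build $\Lambda\subseteq X$ as in the proof of Lemma~\ref{lem:fp-large}: start with $\Lambda=\emptyset$ and repeatedly add $\min(X\setminus\bigcup_{i'\in\Lambda}\bar R_{i'})$. The resulting windows $\{\bar R_i\}_{i\in\Lambda}$ are pairwise disjoint and together cover $X$, yielding $\sum_{i\in\Lambda}\bar\tau_i\ge|X|\ge\tfrac{3n}7$.

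For the key per-index estimate, fix $i\in\Lambda$ and pick $j\in\{1,2\}$ with $\tau_{S_j,i}=\bar\tau_i$. Then $M_{S_j,i}\subseteq\bar R_i\subseteq R_{S_{3-j},i}$, and $|M_{S_j,i}|>\tfrac{\gamma k}{n}\bar\tau_i$ by the defining inequality of $\tau_{S_j,i}$. Because $\mu_{S_1,i}=\mu_{S_2,i}$, any $p\in M_{S_j,i}\setminus M_{S_{3-j},i}$ must satisfy $S_1^*[p]\ne S_2^*[p]$, hence $p\in\Delta^*:=\Delta+n\mathbb{Z}$. Therefore
\[|M_{S_1,i}\cap M_{S_2,i}|\ge |M_{S_j,i}|-|\Delta^*\cap\bar R_i|>\tfrac{\gamma k}{n}\bar\tau_i-|\Delta^*\cap\bar R_i|,\]
and summing over $\Lambda$ while using disjointness and $|\Delta^*\cap[1\dd 3n]|\le 3k$ gives
\[\sum_{i\in\Lambda}|M_{S_1,i}\cap M_{S_2,i}|>\tfrac{\gamma k}{n}\cdot\tfrac{3n}{7}-3k=\tfrac{(3\gamma-21)k}{7}\ge 3k.\]

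Finally, the sets $M_{S_1,i}\cap M_{S_2,i}\subseteq\bar R_i$ are disjoint subsets of $[1\dd 3n]$, so the $\cmod n$ map collapses at most $3$ lifted positions into each cyclic one, giving $|\fp(S_1)\cap\fp(S_2)|\ge\tfrac13\sum_{i\in\Lambda}|M_{S_1,i}\cap M_{S_2,i}|\ge k$. The main delicacy I anticipate is the ``which side attains $\bar\tau_i$'' case split: the choice of $j$ is precisely what guarantees the containment $M_{S_j,i}\subseteq R_{S_{3-j},i}$, without which the symmetric-difference bound $M_{S_j,i}\setminus M_{S_{3-j},i}\subseteq\Delta^*$ fails; everything else reduces to a mechanical push of the constants built into the definition of $\gamma$.
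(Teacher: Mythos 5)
Your proof is correct, and it takes a genuinely different route from the paper's. The paper introduces an auxiliary string $S'$ obtained from $S_1$ by replacing each mismatch position with a fresh character $\$_i$, shows $\fp(S')\subseteq(\fp(S_1)\cap\fp(S_2))\cup\MP(S_1,S_2)$ and $|\P(S')|\ge\frac{\gamma-2}{2\gamma}n$, and then simply invokes the lower bound $|\fp(S')|\ge\frac{\gamma k}{3n}|\P(S')|$ of Lemma~\ref{lem:fp-large} applied to $S'$, subtracting the at most $k$ positions of $\MP(S_1,S_2)$ at the end. You instead work directly with the two extension processes: you restrict to starting positions whose length-$3\ell$ windows agree in $S_1$ and $S_2$ (your set $X$ plays the role of the paper's $\P(S')$, and yields the same $\frac{3n}{7}$ bound for $\gamma=14$), truncate to $\bar\tau_i=\min(\tau_{S_1,i},\tau_{S_2,i})$, and rerun the greedy disjoint-window accounting of Lemma~\ref{lem:fp-large} on these truncated windows. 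The one point your argument needs that the paper's sidesteps is exactly the one you flag: choosing $j$ so that $\tau_{S_j,i}=\bar\tau_i$ guarantees $M_{S_j,i}\subseteq\bar R_i\subseteq R_{S_{3-j},i}$, which is what forces every element of $M_{S_j,i}\setminus M_{S_{3-j},i}$ to be a lifted mismatch of $\MP(S_1,S_2)$; the paper avoids this case analysis because the fresh characters in $S'$ can only add mismatches, giving $\tau_{S',i}\le\tau_{S_1,i}$ and $\tau_{S',i}\le\tau_{S_2,i}$ for free. The trade-off is modularity versus self-containment: the paper reuses Lemma~\ref{lem:fp-large} as a black box at the cost of the slightly artificial wildcard string, while your version duplicates the window-packing argument but makes the interaction between the two strings' mismatch sets explicit, and degrades the constants no worse (both arguments close with $\gamma\ge 14$).
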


\begin{proof}
Let $S'$ be a string of length $n$, where, for any $i$ with $S_1[i]=S_2[i]$, we have $S'[i]=S_1[i]$ and, for any other $i$ (i.e., for $i\in \MP(S_1,S_2)$), we have $S'[i]=\$_i$, where $\$_i\notin \Sigma$ differs from any other character $\$_{i'}$ for $i'\ne i$.

\begin{claim}\label{clm:fpsprime-almost-subset}
$\fp(S')\subseteq \left(\fp(S_1)\cap \fp(S_2)\right)\cup \MP(S_1,S_2)$.
\end{claim}
\begin{proof}
	Let $j\in\fp(S')$.
	If $j\in \MP(S_1,S_2)$, the claim follows; thus, assume $j\notin \MP(S_1,S_2)$.
	By the definition of $\fp(S')$, there is an index $i\in\P(S')$ such that $j\in M_{S',i}\cmod n$; let $j'\in M_{S',i}$ be an integer such that $j=j'\cmod n$.
	Notice that $\mu_{S_1,i}=\mu_{S',i}$ since if $\mu_{S',i}$ contains some $\$_k$ character, then $i$ cannot be cubic and so $i\notin\P(S')$.
	Therefore, $\mu_{S_1,i}=\mu_{S',i}$, and let $\mu_i=\mu_{S_1,i}$.
	For any integer $\tau$, we have $\HAM(S_1^*[i\dd i+\tau-1],\mu_i^*[1\dd \tau])\le \HAM((S')^*[i\dd i+\tau-1],\mu_i^*[1\dd \tau])$ because the new $\$_k$ characters in $S'$ just form new mismatches.
	In particular, for $\tau_{S_1,i}$ we have $\frac { n}{\gamma k}\HAM((S')^*[i\dd i+\tau_{S_1,i}-1],\mu_i^*[1\dd \tau_{S_1,i}])\ge \frac { n}{\gamma k}\HAM(S_1^*[i\dd i+\tau_{S_1,i}-1],\mu_i^*[1\dd \tau_{S_1,i}])>\tau_{S_1,i}$.
	Hence, $\tau_{S',i}\le\tau_{S_1,i}$ and $R_{S',i}\subseteq R_{S_1,i}$.
	Since $j'\in M_{S',i}$ and $j\notin \MP(S_1,S_2)$, it must be that $j'\in M_{S_1,i}$.
	Similarly, $j'\in M_{S_2,i}$. Thus, $j=j'\cmod n\in \left(\fp(S_1)\cap\fp(S_2)\right)\cup \MP(S_1,S_2)$.
\end{proof}

\begin{claim}\label{clm:psprime-large}
	$|\P(S')|\ge \frac{\gamma-2}{2\gamma}n$.
\end{claim}
\begin{proof}
	Recall that $|\P(S_1)|\ge\frac 12n$.
	If $\mu_{S_1,i}=\mu_{S',i}$ and $i\in\P(S_1)$, then $i\in\P(S')$.
	The only indices $i\in \P(S_1)\cap \NP(S')$ are indices such that $\mu_{S_1,i}\ne\mu_{S',i}$, which means that $\MP(S_1,S_2)\cap([i\dd i+3\ell-1]\cmod n)\ne\emptyset$.
	Hence, each $m\in \MP(S_1,S_2)$ will remove at most $3\ell$ positions from $\P(S_1)$.
	Thus, $|\P(S')|\ge\frac 12n-|\MP(S_1,S_2)|3\ell\ge \frac 12n-k\frac n{\gamma k}= \frac{\gamma-2}{2\gamma}n$.
\end{proof}

Due to Claim~\ref{clm:psprime-large}, we have $|\P(S')|\ge  \frac{\gamma-2}{2\gamma}n$, and therefore $|\fp(S')|>\frac{ \gamma k}{3n} \frac{\gamma-2}{2\gamma}n=\frac {(\gamma-2)k}{6}$ by Lemma~\ref{lem:fp-large}.
Due to Claim~\ref{clm:fpsprime-almost-subset}, $\fp(S')\subseteq \left(\fp(S_1)\cap \fp(S_2)\right)\cup \MP(S_1,S_2)$, and therefore $|\fp(S')|\le |\left(\fp(S_1)\cap \fp(S_2)\right)\cup \MP(S_1,S_2)|\le |\fp(S_1)\cap \fp(S_2)|+|\MP(S_1,S_2)|\le |\fp(S_1)\cap \fp(S_2)|+k$.
Consequently, since $\gamma \ge 14$, we have $|\fp(S_1)\cap \fp(S_2)|\ge \frac{\gamma-8}{6}k  \ge \frac{14-8}6k=k$.
\end{proof}

\section{Sketches for Pseudo-periodic Strings}\label{sec:ap-sketches}
Let $\H'_{n,k}\sub \Sigma^n$ be the family of $(3\gamma k,(\gamma+1) k)$-pseudo-periodic strings in $\Sigma^n$.
In this section, we develop circular sketches for $\H'_{n,k}$. 
We start with a few properties of pseudo-periodic strings.
Recall that a string $S\in \Sigma^n$ is called \emph{$(\alpha, \beta)$-pseudo-periodic} if it has an $(\alpha,\beta)$-base $S'\in \Sigma^n$ with $\prr(S')\le \frac{n}{\alpha}$ and $\HAM(S,S')\le \beta$.
If $\floor{\alpha}>2\beta$, then the  $(\alpha,\beta)$-base is unique.
\begin{lemma}\label{lem:highly-periodic-same-root}
	If $S\in\Sigma^n$ is an $(\alpha,\beta)$-pseudo-periodic string for some parameters $\floor{\alpha}>2\beta$,
	then $S'$ has a unique $(\alpha,\beta)$-base.
\end{lemma}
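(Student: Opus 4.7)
The plan is to show that any two $(\alpha,\beta)$-bases $S'_1, S'_2$ of $S$ must coincide. By the triangle inequality, $\HAM(S'_1,S'_2) \le 2\beta$; let $Q_i = \prr(S'_i)$ so that $\alpha_i := n/|Q_i|$ is a positive integer with $\alpha_i \ge \lceil\alpha\rceil \ge \lfloor\alpha\rfloor > 2\beta$. Without loss of generality $|Q_1| \le |Q_2|$, and the goal is to conclude $|Q_1| = |Q_2|$ and $Q_1 = Q_2$.

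The first step is to locate a copy of $Q_2$ inside $S'_1$. Partitioning $S'_1$ into $\alpha_2$ consecutive length-$|Q_2|$ windows $W_i = S'_1[(i-1)|Q_2|+1\dd i|Q_2|]$, the bound $\sum_i \HAM(W_i, Q_2) = \HAM(S'_1,S'_2) \le 2\beta < \alpha_2$ forces $W_{i^*} = Q_2$ for some $i^*$. Hence $Q_2$ appears as a length-$|Q_2|$ substring of $S'_1 = Q_1^{\alpha_1}$ and so inherits the period $|Q_1|$. In the subcase $|Q_1| = |Q_2|$, both $S'_i$ share the period $|Q_1|$, whence $\HAM(S'_1,S'_2) = \alpha_1 \HAM(Q_1,Q_2)$; combined with $\alpha_1 > 2\beta$, this forces $Q_1 = Q_2$ and so $S'_1 = S'_2$, as desired.

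The remaining subcase is $|Q_1| < |Q_2|$, which I will rule out. Primitivity of $Q_2$ combined with $Q_2$ having period $|Q_1|$ excludes $|Q_1| \mid |Q_2|$, so setting $d := \gcd(|Q_1|,|Q_2|)$ yields $a := |Q_1|/d \ge 2$ and $k := \lfloor|Q_2|/|Q_1|\rfloor \ge 1$. Each $W_i$ is determined by the shift $s_i := (i-1)|Q_2| \bmod |Q_1|$, and as $i$ ranges over $[\alpha_2]$ this shift cycles through $a$ distinct residues, each hit exactly $\alpha_2/a$ times. Hence $\alpha_2(a-1)/a$ windows $W_i$ have $s_i \ne s_{i^*}$, and each such window corresponds to a nonzero relative shift $t := s_i - s_{i^*} \bmod |Q_1|$ against the rotation of $Q_1$ that generates $Q_2$.

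The crux is the following combinatorial fact: for any primitive string $P$ and any $t \not\equiv 0 \pmod{|P|}$, $\HAM(P,\shft^t(P)) \ge 2$. Indeed, a hypothetical single-position discrepancy at index $i_0$ would propagate along the orbit $\{i_0, i_0+t, i_0+2t, \ldots\}$ in $\mathbb{Z}/|P|\mathbb{Z}$, forcing $P[i_0]$ to agree with the common value taken at the other orbit positions and contradicting its discrepancy. Applying this to the relevant rotation of $Q_1$, each mismatched $W_i$ spans $k$ full copies of that rotation inside one copy of $Q_2$, so contributes at least $2k$ mismatches. Summing,
\[
\HAM(S'_1,S'_2) \;\ge\; 2k \cdot \frac{\alpha_2(a-1)}{a} \;\ge\; 2 \cdot 1 \cdot \frac{\alpha_2}{2} \;=\; \alpha_2 \;>\; 2\beta,
\]
contradicting $\HAM(S'_1,S'_2) \le 2\beta$. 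The main obstacle is establishing the $\ge 2$ cyclic-Hamming lower bound (rather than the trivial $\ge 1$); without it the argument would require the strictly stronger hypothesis $\lfloor\alpha\rfloor > 4\beta$ in place of $\lfloor\alpha\rfloor > 2\beta$.
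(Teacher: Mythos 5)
Your proof is correct, but it takes a genuinely different route from the paper. The paper's entire argument is a two-line reduction to a cited result of Alzamel et al.: if $|X|=|Y|\ge\per(X)+\per(Y)$ and $X\ne Y$, then $\HAM(X,Y)\ge\big\lfloor 2|X|/(\per(X)+\per(Y))\big\rfloor$; applied to two hypothetical bases $S'$ and $S''$ this gives $\HAM(S',S'')\ge\lfloor\alpha\rfloor>2\beta\ge\HAM(S,S')+\HAM(S,S'')\ge\HAM(S',S'')$, a contradiction. You instead reprove from scratch the special case of that inequality needed here — that two distinct full powers $Q_1^{\alpha_1}\ne Q_2^{\alpha_2}$ of length $n$ have Hamming distance at least $\min(\alpha_1,\alpha_2)\ge\lfloor\alpha\rfloor$ — via the window decomposition, the existence of a zero-mismatch window $W_{i^*}=Q_2$, the orbit count of the shifts $s_i$ modulo $|Q_1|$, and the fact that a primitive string is at cyclic Hamming distance at least $2$ from any of its nontrivial rotations. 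All steps check out, with two small points worth making explicit: the claim that each residue is hit exactly $\alpha_2/a$ times uses $a\mid\alpha_2$, which holds because $\mathrm{lcm}(|Q_1|,|Q_2|)$ divides $n$ (both root lengths divide $n$); and in the ``$\ge 2$'' fact, the case $\HAM(P,\shft^t(P))=0$ must be excluded by primitivity, since your orbit argument only rules out distance exactly $1$. What your approach buys is a self-contained, elementary proof with an explicit mechanism; what the citation buys is brevity and a marginally stronger statement (phrased in terms of $\per$ rather than $\prr$, a distinction that is immaterial for this lemma).
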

\begin{proof}
	Suppose that $S$ has two bases $S',S''$.
	Alzamel et al.~\cite{ACIKKRRW18} show that if $|X|=|Y|\ge \per(X)+\per(Y)$ and $X\ne Y$, then $\HAM(X,Y)\ge \big\lfloor{\frac{2n}{\per(X)+\per(Y)}}\big\rfloor$.
	Setting $X=S'$ and $Y=S''$, we get a contradiction: $\HAM(S',S'')\ge \big\lfloor{\frac{2n}{\per(S')+\per(S'')}}\big\rfloor\ge  \big\lfloor{\frac{2n}{\prr(S')+\prr(S'')}} \big\rfloor\ge \big\lfloor{\frac{2n}{n/\alpha+n/\alpha}\big\rfloor} =\floor{\alpha}> 2\beta \ge   \HAM(S,S')+\HAM(S,S'') \ge \HAM(S',S'')$.
\end{proof}

Moreover, the triangle inequality immediately yields the following observation.

\begin{observation}\label{obs:ap}
	Let $S\in\Sigma^n$ be an $(\alpha,\beta)$-pseudo-periodic string and let $T\in\Sigma^n$ be a string such that $\HAM(S,T)\le k$. Then, $T$ is $(\alpha,\beta+k)$-pseudo-periodic, and every $(\alpha,\beta)$-base of $S$ is an $(\alpha,\beta+k)$-base of $T$.
\end{observation}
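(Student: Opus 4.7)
The plan is to verify the two claims directly from the definition of a pseudo-periodic string combined with the triangle inequality for Hamming distance. Let $S'$ be an arbitrary $(\alpha,\beta)$-base of $S$, which exists by hypothesis; by definition $\prr(S')\le n/\alpha$ and $\HAM(S,S')\le \beta$. The goal is to show that this same $S'$ serves as an $(\alpha,\beta+k)$-base of $T$, and hence that $T$ is $(\alpha,\beta+k)$-pseudo-periodic.

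The condition $\prr(S')\le n/\alpha$ is a property of the string $S'$ alone, so it transfers immediately. The only nontrivial condition to check is that $\HAM(T,S')\le \beta+k$. This follows from the triangle inequality for Hamming distance applied to $T,S,S'\in\Sigma^n$:
\[\HAM(T,S')\le \HAM(T,S)+\HAM(S,S')\le k+\beta.\]
Combining the two facts, $S'$ satisfies the definition of an $(\alpha,\beta+k)$-base of $T$, proving both claims simultaneously.

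No step presents a real obstacle here; the only ingredient beyond unpacking the definition is the standard fact that Hamming distance on $\Sigma^n$ is a metric. Since the argument was chosen to hold for an \emph{arbitrary} $(\alpha,\beta)$-base $S'$ of $S$, the second sentence of the observation (that \emph{every} such base $S'$ is an $(\alpha,\beta+k)$-base of $T$) is obtained automatically, without any additional work.
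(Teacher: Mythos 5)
Your proof is correct and is exactly the argument the paper intends: the paper states the observation as an immediate consequence of the triangle inequality, and your write-up simply spells out that reasoning (the root-length condition is intrinsic to $S'$, and $\HAM(T,S')\le\HAM(T,S)+\HAM(S,S')\le k+\beta$). Nothing is missing.
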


Combining Lemma~\ref{lem:highly-periodic-same-root} with Observation~\ref{obs:rootrot} and Observation~\ref{obs:ap}, we obtain the following corollary.
\begin{corollary}\label{cor:sameroot}
	Let $S_1,S_2\in\H'_{n,k}$ with $(3\gamma k,(\gamma+1) k)$-bases $S'_1$ and $S'_2$, respectively.
	If, for some $m\in \mathbb Z$, we have $\HAM(S_1,\shft^m(S_2))\le k$, then $S'_1 = \shft^m(S'_2)$.
\end{corollary}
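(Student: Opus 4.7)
The plan is to combine the three ingredients already in place — Observation~\ref{obs:rootrot}, Observation~\ref{obs:ap}, and Lemma~\ref{lem:highly-periodic-same-root} — to conclude that $S_1$ has two $(3\gamma k,\beta)$-bases for a suitable $\beta$, which must then coincide by the uniqueness lemma.

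First I would rotate $S'_2$ alongside $S_2$: by Observation~\ref{obs:rootrot}, $\shft^m(S'_2)$ is a $(3\gamma k,(\gamma+1)k)$-base of $\shft^m(S_2)$. Next, since $\HAM(S_1,\shft^m(S_2))\le k$, I would invoke Observation~\ref{obs:ap} (with the roles of $S$ and $T$ being $\shft^m(S_2)$ and $S_1$) to conclude that $\shft^m(S'_2)$ is also a $(3\gamma k,(\gamma+1)k+k)=(3\gamma k,(\gamma+2)k)$-base of $S_1$. At the same time, $S'_1$ is a $(3\gamma k,(\gamma+1)k)$-base of $S_1$ and hence trivially a $(3\gamma k,(\gamma+2)k)$-base of $S_1$ as well.

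Finally, I would apply Lemma~\ref{lem:highly-periodic-same-root} with $\alpha=3\gamma k$ and $\beta=(\gamma+2)k$. The hypothesis $\lfloor\alpha\rfloor>2\beta$ becomes $3\gamma k>2(\gamma+2)k$, i.e., $\gamma>4$, which is satisfied because $\gamma\ge 14$ (and $3\gamma k$ is an integer by definition of $\gamma$). The lemma therefore guarantees a unique $(3\gamma k,(\gamma+2)k)$-base, forcing $S'_1=\shft^m(S'_2)$.

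There is no real obstacle: the whole argument is a routine bookkeeping with the slack parameters. The only subtlety worth stating explicitly is checking that the inequality $\lfloor 3\gamma k\rfloor > 2(\gamma+2)k$ indeed follows from $\gamma\ge 14$, which is why the overlap between the pseudo-periodic and non-pseudo-periodic cases was set up with threshold $(\gamma+1)k$ on the pseudo-periodic side rather than $\gamma k$ — this provides exactly the headroom needed for Observation~\ref{obs:ap} to leave $\beta$ below $\lfloor\alpha\rfloor/2$.
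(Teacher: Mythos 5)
Your proof is correct and follows essentially the same route as the paper's: both combine Observation~\ref{obs:rootrot}, Observation~\ref{obs:ap}, and Lemma~\ref{lem:highly-periodic-same-root} with the parameter check $\floor{3\gamma k}>2(\gamma+2)k$. The only (immaterial) difference is that you exhibit $S'_1$ and $\shft^m(S'_2)$ as two $(3\gamma k,(\gamma+2)k)$-bases of $S_1$, whereas the paper exhibits them as two such bases of $\shft^m(S_2)$ — a symmetric variant of the same argument.
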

\begin{proof}
	By Observation~\ref{obs:ap}, $S'_1$ is a $(3\gamma k,(\gamma+2) k)$-base of  $\shft^m(S_2)$.
	Moreover, by Observation~\ref{obs:rootrot}, $\shft^m(S'_2)$ is a $(3\gamma k,(\gamma+1) k)$-base of $\shft^m(S_2)$, and thus also a $(3\gamma k,(\gamma+2) k)$-base of  $\shft^m(S_2)$.
	Since $\floor{3\gamma k} > 2(\gamma+2)k$ due to $\gamma \ge 5$,
	Lemma~\ref{lem:highly-periodic-same-root} implies that $S'_1 =\shft^m(S'_2)$.
\end{proof}

\subsection{A $0$-mismatch Circular Sketch}
Both the exact and the $(1\pm\eps)$-approximation sketches of strings in $\H'_{n,k}$ rely on $0$-mismatch circular sketches, which we implement using Karp--Rabin fingerprints.

\begin{fact}[Karp--Rabin fingerprints~\cite{KR87}]\label{fct:kr-fingerprint}
	For every positive integer $n$, there exists a randomized function $\Phi : \Sigma^{n} \to \{0,1\}^{\Oh(\log n)}$
	such that, for every $S_1,S_2\in \Sigma^{n}$, the following holds with high probability:
	if $S_1 \ne S_2$, then $\Phi(S_1)\ne \Phi(S_2)$.
\end{fact}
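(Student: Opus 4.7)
The plan is to use the classical polynomial hashing of Karp and Rabin. I would first fix a constant $c$ large enough for the desired ``with high probability'' guarantee, and then deterministically choose a prime $p$ in the range $[\max(n^{c+1},\sigma)\dd 2\max(n^{c+1},\sigma)]$; existence of such a prime is guaranteed by Bertrand's postulate, and the choice of $p$ does not depend on $S$. As the only source of randomness in $\Phi$, I would then draw $r\in\mathbb{Z}_p$ uniformly at random and define
\[\Phi(S) = \sum_{i=1}^{n} S[i]\cdot r^i \pmod p,\]
interpreting each character $S[i]\in\Sigma=[\sigma]$ as an element of $\mathbb{Z}_p$ via the natural inclusion, which is injective due to $p>\sigma$. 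The output $\Phi(S)$ occupies $\lceil\log_2 p\rceil = \Oh(\log n+\log\sigma) = \Oh(\log n)$ bits under the standard assumption that $\sigma$ is polynomial in $n$.

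For correctness, suppose $S_1\ne S_2$ and consider the formal polynomial $P(x)=\sum_{i=1}^n (S_1[i]-S_2[i])\,x^i$ over $\mathbb{Z}_p$. Since $p>\sigma$, every position $i$ with $S_1[i]\ne S_2[i]$ contributes a nonzero coefficient, so $P$ is a nonzero polynomial of degree at most $n$ over the field $\mathbb{Z}_p$. By the standard fact that such a polynomial has at most $n$ roots,
\[\Pr[\Phi(S_1)=\Phi(S_2)] \;=\; \Pr_{r\in\mathbb{Z}_p}[P(r)=0] \;\le\; \frac{n}{p} \;\le\; n^{-c},\]
which yields the desired high-probability bound since $c$ can be taken arbitrarily large.

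The argument is essentially textbook, and there is no real obstacle: the entire proof reduces to the existence of sufficiently large primes (Bertrand's postulate) and to counting roots of a univariate polynomial over a field. The only minor subtleties are (i) ensuring $p>\sigma$ so that distinct alphabet symbols remain distinct modulo $p$, and (ii) taking $p$ polynomially large in $n$ in order to drive the collision probability below $n^{-c}$ for the target constant~$c$.
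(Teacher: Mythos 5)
Your proposal is correct and follows essentially the same route as the paper: a fixed prime $p\ge\max(\sigma,\mathrm{poly}(n))$, a uniformly random evaluation point, and the bound of at most $n$ roots for the nonzero difference polynomial over $\mathbb{Z}_p$. The only cosmetic differences are that you invoke Bertrand's postulate explicitly and take $p\ge n^{c+1}$ to get collision probability $n^{-c}$ for an arbitrary constant $c$, whereas the paper settles for $p\ge n^2$ and probability $n^{-1}$, which already meets its definition of high probability.
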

\begin{proof}
	The function $\Phi$ is based on a fixed prime number $p \ge \max(\sigma,n^2)$
	and a uniformly random $x\in [0\dd p-1]$.
	The function $\Phi$ maps a string $S$ to $(\sum_{i=1}^{|S|} x^{i-1}\cdot S[i]) \bmod p$.
	This way, for every two strings $S_1\ne S_2$ in $\Sigma^n$,
	we have $\Pr[\Phi(S_1)= \Phi(S_2)] \le \frac{n}{p} \le \frac{n}{n^2}=n^{-1}$.
\end{proof}

\begin{lemma}\label{lem:0}
	There exists a $0$-ECS sketch $(\sk_0,\dec_0)$ for $\Sigma^n$ of size $\Oh(\log n)$ bits
	with constant decoding time.
\end{lemma}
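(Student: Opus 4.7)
I will store a Karp--Rabin fingerprint of a canonical rotation of $S$, together with the offset needed to undo the canonicalization. Let $\Phi$ be the fingerprint from Fact~\ref{fct:kr-fingerprint}, let $\minrot(S)$ denote the lexicographically smallest rotation of $S$, let $p_S=\prr(S)$, and let $r_S$ be the unique integer in $[0\dd p_S-1]$ satisfying $\shft^{r_S}(S)=\minrot(S)$. I set
\[
\sk_0(S)\;=\;\bigl(\Phi(\minrot(S)),\, r_S,\, p_S\bigr).
\]
Each of the three components is an $O(\log n)$-bit value (the first by Fact~\ref{fct:kr-fingerprint}, the other two because $r_S,p_S\in[0\dd n]$), so the total size is $O(\log n)$ bits.

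\textbf{Decoding.} Given $\sk_0(S_1),\sk_0(S_2),m$, the decoder $\dec_0$ returns $0$ iff all three of the following checks succeed: (i)~$\Phi(\minrot(S_1))=\Phi(\minrot(S_2))$, (ii)~$p_{S_1}=p_{S_2}$, and (iii)~$(r_{S_2}-r_{S_1}-m)\bmod p_{S_2}=0$; otherwise it returns any positive integer, e.g.\ $1$. All three checks are constant-time arithmetic comparisons.

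\textbf{Correctness.} If $S_1=\shft^m(S_2)$, then $S_1$ and $S_2$ lie in the same rotation orbit, so $\minrot(S_1)=\minrot(S_2)$ and $p_{S_1}=p_{S_2}$, giving checks (i)--(ii) deterministically. Moreover, from $\shft^{r_{S_1}}(S_1)=\minrot(S_1)=\minrot(S_2)=\shft^{r_{S_2}}(S_2)=\shft^{r_{S_2}-m}(S_1)$ and the fact that $\shft^a(S_1)=\shft^b(S_1)$ iff $p_{S_1}\mid a-b$, we get $r_{S_1}\equiv r_{S_2}-m\pmod{p_{S_2}}$, so (iii) also holds. Conversely, if $S_1\ne\shft^m(S_2)$, then either $S_1,S_2$ lie in different rotation orbits, in which case $\minrot(S_1)\ne\minrot(S_2)$ and Fact~\ref{fct:kr-fingerprint} makes (i) fail with high probability, or they share an orbit via some shift $m'\not\equiv m\pmod{p_{S_2}}$, and the same computation yields $r_{S_2}-r_{S_1}\equiv m'\pmod{p_{S_2}}$, so (iii) fails deterministically.

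\textbf{Main subtlety.} The only point needing care is that $r_S$ is well-defined as a canonical choice: the set $\{r\in\mathbb{Z}:\shft^r(S)=\minrot(S)\}$ is a coset of $p_S\mathbb{Z}$ (because the stabilizer of $S$ under rotation is exactly $p_S\mathbb{Z}$), so it meets $[0\dd p_S-1]$ in exactly one element, and both encoders make the same choice. Every other step is immediate from Fact~\ref{fct:kr-fingerprint} and the definitions in Section~\ref{sec:preliminaries}.
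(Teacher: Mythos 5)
Your proposal is correct and matches the paper's own construction essentially verbatim: both store $\Phi(\minrot(S))$, the primitive-root length $\prr(S)$, and a canonical rotation offset, and both decode by comparing fingerprints and testing a divisibility condition modulo $\prr(S)$. The only cosmetic differences are the sign convention for the offset and your explicit (but redundant) check that the two primitive-root lengths agree.
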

\begin{proof}
	The construction relies on a Karp--Rabin fingerprint function $\Phi$.
	The sketch $\sk_0(S)$ for a string $S\in \Sigma^n$ is defined based on the minimum cyclic rotation of $S$,
	denoted $\minrot(S)$, and consists of the following components:
	\begin{itemize}
		\item the fingerprint $\Phi(\minrot(S))$ of the minimum cyclic rotation of $S$,
		\item the length $\prr(S)$ of the primitive root of $S$,
		\item the smallest integer $r\ge 0$ such that $S=\shft^r(\minrot(S))$.
	\end{itemize}
	The decoding function $\dec_0$ is given two sketches $\sk_0(S_1)=(\Phi(\minrot(S_1)),\prr(S_1),r_1)$, $\sk_0(S_2)=(\Phi(\minrot(S_2)),\prr(S_2),r_2)$, and a shift $m$. If $\Phi(\minrot(S_1))\ne \Phi(\minrot(S_2))$, then $S_1 \ne \shft^m(S_2)$, and thus the function returns $\infty$.
	Otherwise, $\minrot(S_1)=\minrot(S_2)$ with high probability, and the implementation proceeds
	assuming that $\minrot(S_1)=T=\minrot(S_2)$ for a string $T\in \Sigma^n$.
	In particular, this implies $\prr(S_1)=\prr(T)=\prr(S_2)$.
	Finally, since $S_1 = \shft^{r_1}(T)$ equals $\shft^m(S_2)=\shft^{m+r_2}(T)$ if and only if $\prr(T) \mid (m + r_2-r_1)$, the function returns $0$ or $\infty$ depending on whether $\prr(S_1) \mid (m + r_2-r_1)$ or not.
\end{proof}

\subsection{A $k$-ECS Sketch}

\begin{construction}\label{def:circksk-ap}
	The encoding function $\circ_{k} : \H'_{n,k} \to \{0,1\}^*$ is defined as follows:
	\begin{enumerate}
		\item Let $\sk_0$ be the $0$-mismatch sketch of Lemma~\ref{lem:0}.
		\item For $S\in \H'_{n,k}$, the encoding $\circ_k(S)$ stores the sketch $\sk_0(S')$ of the $(3\gamma k, (\gamma+1)k)$-base $S'$ of $S$ and the mismatch information $\MI(S, S')$.
	\end{enumerate}
\end{construction}

\begin{proposition}\label{prp:exact-ap-sketch}
	There exists a decoding function which, together with the encoding $\circ_k$ of Definition~\ref{def:circksk-ap},
	forms a $k$-ECS sketch of $\H'_{n,k}$.
	The size of the sketch is $\Ohtilde(k)$, and the decoding time is $\Ohtilde(k)$ with high probability.
\end{proposition}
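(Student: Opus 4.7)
The plan is to reduce $k$-ECS decoding on $\H'_{n,k}$ to two steps: first verify via the stored $0$-mismatch sketches that the two bases $S_1'$ and $\shft^m(S_2')$ coincide, and then propagate the stored mismatch information through this common base using Fact~\ref{fct:retrieve}. Corollary~\ref{cor:sameroot} is what makes the reduction sound: whenever $\HAM(S_1,\shft^m(S_2))\le k$, the two bases must coincide after shifting, so the $0$-mismatch check will succeed (with high probability) in exactly the cases that require exact decoding.

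For the size bound, the $(3\gamma k,(\gamma+1)k)$-base $S'$ of $S$ satisfies $\HAM(S,S')\le (\gamma+1)k=\Oh(k)$, so $\MI(S,S')$ consists of $\Oh(k)$ triples of $\Oh(\log n)$ bits each, totaling $\Ohtilde(k)$ bits; combined with the $\Oh(\log n)$ bits of $\sk_0(S')$ from Lemma~\ref{lem:0}, this yields $|\circ_k(S)|=\Ohtilde(k)$.

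Given $\circ_k(S_1)=(\sk_0(S_1'),\MI(S_1,S_1'))$, $\circ_k(S_2)=(\sk_0(S_2'),\MI(S_2,S_2'))$, and $m$, my decoding procedure first invokes $\dec_0(\sk_0(S_1'),\sk_0(S_2'),m)$. If the call reports $S_1'\ne\shft^m(S_2')$, I return $\infty$: by the contrapositive of Corollary~\ref{cor:sameroot}, this certifies (with high probability, up to the Karp--Rabin error of $\dec_0$) that $\HAM(S_1,\shft^m(S_2))>k$. Otherwise, I shift the indices in $\MI(S_2,S_2')$ by $-m \bmod n$ to obtain $\MI(\shft^m(S_2),\shft^m(S_2'))=\MI(\shft^m(S_2),S_1')$ in $\Oh(k)$ time, apply Fact~\ref{fct:retrieve} with $T=S_1'$, $U=\shft^m(S_2)$, and $A=[n]$ to reconstruct $\MI(S_1,\shft^m(S_2))$ exactly in $\Ohtilde(k)$ time, and return its size. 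Because this value is the true Hamming distance regardless of whether it exceeds $k$, both cases of Problem~\ref{def:circksk} are handled simultaneously, and the overall running time is $\Ohtilde(k)$. There is no real technical obstacle beyond correctly invoking the ingredients already at hand; the proof is essentially a composition of Lemma~\ref{lem:0}, Corollary~\ref{cor:sameroot}, and Fact~\ref{fct:retrieve}.
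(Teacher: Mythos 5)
Your proposal is correct and follows essentially the same route as the paper's proof: check base equality via $\dec_0$, justify the reduction with Corollary~\ref{cor:sameroot}, then reconstruct $\MI(S_1,\shft^m(S_2))$ from the two stored mismatch sets via Fact~\ref{fct:retrieve}. The only difference is cosmetic: you additionally spell out the $\Ohtilde(k)$ size bound from $\HAM(S,S')\le(\gamma+1)k$, which the paper leaves implicit.
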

\begin{proof}
	The decoding function is given two sketches $\circ_k(S_1)=(\sk_0(S'_1),\MI(S_1, S'_1))$, $\circ_k(S_2)=(\sk_0(S'_2),\MI(S_2, S'_2))$, and a shift $m$.
	By Corollary~\ref{cor:sameroot}, if $\HAM(S_1,\shft^m(S_2))\le k$,
	then $S'_1 = \shft^m(S'_2)$, and this condition is checked by applying $\dec_0(\sk_0(S'_1),\sk_0(S'_2),m)$. If the call returns a non-zero result, then $\infty$ is returned.
	Otherwise,  $S'_1 = \shft^m(S'_2)$ holds with high probability.
	The analysis below is conditioned on this event.

	First, $\MI(\shft^m(S_2),\shft^m(S'_2))$ is retrieved from $\MI(S_2,S'_2)$
	by shifting all the the mismatches.
	Next, the decoding function retrieves $\MI(S_1,\shft^m(S_2))$ from $\MI(S_1, S'_1)$
	and  $\MI(\shft^m(S_2),\shft^m(S'_2))$ (using Fact~\ref{fct:retrieve} and assuming that $S'_1 = \shft^m(S'_2)$) and returns $\HAM(S_1,\shft^m(S_2))=|\MI(S_1,\shft^m(S_2))|$.
\end{proof}

\subsection{An $(\eps,k)$-ACS Sketch}

For the pseudo-periodic $(\eps,k)$-ACS sketches, we relax the problem statement;
we overcome this relaxation in Section~\ref{sec:summary}.
In the \emph{relaxed} $(\eps,k)$-ACS sketch, the distances smaller than $\frac{k}{2}$ do not need to be approximated. More precisely, we require the following:
\begin{itemize}
	\item if $\HAM(S_1,\shft^m(S_2)) < \frac12k$, then $\dec(\sk(S_1),\sk(S_2),m) < \frac{1+\eps}2k$,
	\item if $\frac12k \le \HAM(S_1,\shft^m(S_2)) \le k$, then $\dec(\sk(S_1),\sk(S_2),m)\in (1\pm\eps)\HAM(S_1,\shft^m(S_2))$,
	\item otherwise, $\dec(\sk(S_1),\sk(S_2),m)>(1-\eps)k$.
\end{itemize}

\begin{construction}\label{def:circepssk-ap}
	The encoding function $\circ_{\eps,k} : \H'_{n,k} \to \{0,1\}^*$ is defined as follows:
	\begin{enumerate}
		\item Let $\sk_0$ be the $0$-mismatch sketch of Lemma~\ref{lem:0}.
		\item Let $A,B\sub [n]$ be two subsets with elements sampled independently with rate $p:=\sqrt{\frac{\log n}{\eps^2 k}}$.
		\item For $S\in \H'_{n,k}$, the encoding $\circ_{\eps,k}(S)$ stores the sketch $\sk_0(S')$ of the $(3\gamma k, (\gamma+1)k)$-base $S'$ of $S$ and the mismatch information $\MI_{A\cup B}(S, S')$.
	\end{enumerate}
\end{construction}

\begin{proposition}\label{prp:approx-ap-sketch}
	There exists a decoding function which, together with the encoding $\circ_{\eps,k}$ of Definition~\ref{def:circepssk-ap},
	forms a relaxed $(\eps,k)$-ACS sketch of $\H'_{n,k}$.
	The size of the sketch is $\Ohtilde(\eps^{-1}\sqrt{k})$, and the decoding time is $\Ohtilde(\eps^{-1}\sqrt{k})$  with high probability.
\end{proposition}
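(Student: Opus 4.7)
The plan is to mimic Proposition~\ref{prp:exact-ap-sketch}, using $\sk_0$ to verify that $S'_1=\shft^m(S'_2)$ (which must hold when $\HAM(S_1,\shft^m(S_2))\le k$, by Corollary~\ref{cor:sameroot}) and then to estimate the Hamming distance from a random sample of mismatches rather than retrieving all of them. The decoder first calls $\dec_0(\sk_0(S'_1),\sk_0(S'_2),m)$ and returns $\infty$ if the answer is nonzero; by Corollary~\ref{cor:sameroot}, whenever $\HAM(S_1,\shft^m(S_2))\le k$ the call returns $0$ with high probability, so this is safe for all three relaxed cases. Otherwise, $S'_1=\shft^m(S'_2)$ holds with high probability, and the decoder proceeds.

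Let $C$ be the set of indices $j\in[n]$ with $j\in A$ and $(j+m)\cmod n\in B$; since $A$ and $B$ are independent, each index belongs to $C$ independently with probability $p^2=\log n/(\eps^2 k)$. By construction, the sketch of $S_1$ reveals $\MI_C(S_1,S'_1)$ (as $C\subseteq A\subseteq A\cup B$), and the sketch of $S_2$ reveals the mismatches of $(S_2,S'_2)$ at positions $\{(j+m)\cmod n : j\in C\}\subseteq B$. Shifting the latter information by $m$ yields $\MI_C(\shft^m(S_2),\shft^m(S'_2))=\MI_C(\shft^m(S_2),S'_1)$, and Fact~\ref{fct:retrieve} then produces $\MI_C(S_1,\shft^m(S_2))$. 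The decoder outputs the estimate $\HAM_C(S_1,\shft^m(S_2))/p^2$. For the size bound, the expectation of $|\MI_{A\cup B}(S,S')|=\HAM_{A\cup B}(S,S')$ is at most $2p(\gamma+1)k=\Ohtilde(\eps^{-1}\sqrt k)$ since $\HAM(S,S')\le(\gamma+1)k$, and Proposition~\ref{prp:chernoff}\eqref{it:ub} delivers the matching high-probability bound; the decoding time is dominated by filtering these sketches and invoking Fact~\ref{fct:retrieve}, so it is also $\Ohtilde(\eps^{-1}\sqrt k)$.

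For correctness in the relaxed cases, note first that whenever the bases agree we have $\HAM(S_1,\shft^m(S_2))\le 2(\gamma+1)k=\Oh(k)$ by the triangle inequality, so all relevant distances live at scale $\Theta(k)$. When $\HAM:=\HAM(S_1,\shft^m(S_2))\ge\tfrac12 k$, Lemma~\ref{lem:chernoff} applied with subset $C$ and rate $p^2$ gives $\HAM_C\in(1\pm\eps)p^2\HAM$ with failure probability $2\exp(-p^2 k\eps^2/6)=n^{-\Omega(1)}$, so dividing by $p^2$ handles both the $[\tfrac12 k,k]$ and the $\HAM>k$ cases. When $\HAM<\tfrac12 k$, I would invoke Proposition~\ref{prp:chernoff}\eqref{it:ub} with $\mu=p^2 k/2$ and $\delta=\eps$ to conclude $\HAM_C\le(1+\eps)p^2 k/2$ with high probability, making the returned value fall below $(1+\eps)k/2$ as demanded. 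The main subtlety I expect is precisely this last case: the standard multiplicative Chernoff guarantee is vacuous when $\HAM$ is far below $k$, so one must inflate to the threshold $p^2 k/2$ before applying the upper-tail bound, rather than comparing to the true mean $p^2\HAM$.
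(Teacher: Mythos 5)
Your proposal matches the paper's proof essentially step for step: verify base equality with $\dec_0$ and return $\infty$ on a nonzero answer, filter and shift the stored mismatch information down to $A\cap\rot^m(B)$ (your set $C$), combine the two pieces via Fact~\ref{fct:retrieve}, and rescale $\HAM_C(S_1,\shft^m(S_2))$ by $p^{-2}=\tfrac{\eps^2 k}{\log n}$, with the same size bound from $\HAM(S,S')\le(\gamma+1)k$. If anything, your explicit upper-tail argument for the case $\HAM(S_1,\shft^m(S_2))<\tfrac12 k$ (inflating the mean to the threshold $p^2k/2$ before applying Proposition~\ref{prp:chernoff}) is more careful than the paper's text, which only invokes Lemma~\ref{lem:chernoff} for distances $\Omega(k)$ and leaves that case implicit.
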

\begin{proof}
	The decoding function is given two sketches $\circ_{\eps,k}(S_1)=(\sk_0(S'_1),\MI_{A\cup B}(S_1, S'_1))$ and $\circ_{\eps,k}(S_2)=(\sk_0(S'_2),\MI_{A\cup B}(S_2, S'_2))$, and a shift $m$.
	According to Corollary~\ref{cor:sameroot}, if $\HAM(S_1,\shft^m(S_2))\le k$,
	then $S'_1 = \shft^m(S'_2)$, and this condition is checked by applying $\dec_0(\sk_0(S'_1),\sk_0(S'_2),m)$. If the call returns a non-zero result, then $\infty$ is returned.
	Otherwise,  $S'_1 = \shft^m(S'_2)$ holds with high probability.
	The analysis below is conditioned on this event.

	First, $\MI_{A\cap \rot^m(B)}(\shft^m(S_2),\shft^m(S'_2))$ is retrieved by filtering and shifting $\MI_{A\cup B}(S_2,S'_2)$.
	Secondly, $\MI_{A \cap \rot^m(B)}(S_1,S'_1)$ is retrieved by filtering  $\MI_{A\cup B}(S_1,S'_1)$.
	Then, the algorithm retrieves $\MI_{A\cap \rot^m(B)}(S_1,\shft^m(S_2))$ combining $\MI_{A\cap \rot^m(B)}(S_1,S'_1)$ and $\MI_{A\cap \rot^m(B)}(\shft^m(S_2),\allowbreak\shft^m(S'_2))$ (using Fact~\ref{fct:retrieve} and assuming that $S'_1 = \shft^m(S'_2)$).
	Since $A\cap \rot^m(B)$ is a random subset of $[n]$ with elements sampled independently with rate $\frac{\log n}{\eps^2 k}$, the quantity $\frac{\eps^2 k}{\log n}\HAM_{A\cap \rot^m(B)}(S_1,\shft^m(S_2))$
	 is a $(1\pm\eps)$-approximation of $\HAM(S_1,S_2)$	with high probability provided that $\HAM(S_1,S_2)=\Omega(k)$; see Lemma~\ref{lem:chernoff}.
\end{proof}

\section{Proofs of Main Theorems}\label{sec:summary}
In this section, we complete our construction of circular $k$-mismatch sketches for $\Sigma^n$.

\thmexact*
\begin{proof}
Our construction combines the $k$-ECS sketches of Theorem~\ref{thm:exact-nap-sketch} and Property~\ref{prp:exact-ap-sketch}.
For each string $S\in \Sigma$, if $S\in\H'_{n,k}$, then the sketch contains the sketch of Property~\ref{prp:exact-ap-sketch}, and if $S\in \Sigma^n\setminus\H_{n,k}$, then the sketch contains the sketch of Theorem~\ref{thm:exact-nap-sketch}. Notice that the sketch contains both components if  $S\in\H'_{n,k}\sm\H_{n,k}$.

For two strings $S_1,S_2\in\Sigma^n$, given the sketches of $S_1$  and $S_2$, the decoder works as follows.
If the two sketches contain compatible components (of Theorem~\ref{thm:exact-nap-sketch} or of Property~\ref{prp:exact-ap-sketch}), then the decoder uses the decoder corresponding to these components.
Otherwise, without loss of generality, it must be that $S_1\in\H_{n,k}$ and $S_2\notin \H'_{n,k}$. Thus, by Observartion~\ref{obs:ap}, $\HAM(S_1,S_2)>k$, and therefore the decoder outputs $\infty$.
The decoding time is $\Ohtilde(k)$.
\end{proof}

Similarly, combining the results of Section~\ref{sec:nap} and Section~\ref{sec:ap-sketches} gives $(1+\eps)$-approximate sketches.
The proof of the following result mimics the proof of Theorem~\ref{thm:exact-circular-sketch}.

\begin{proposition}\label{lem:approx-circular-sketch-relaxed}
	There exists a relaxed $(\eps,k)$-ACS sketch for $\Sigma^n$ of size $\Ohtilde(\eps^{-2}\sqrt k)$. Its decoding time is $\Ohtilde(\eps^{-1}\sqrt k+\eps^{-2})$  with high probability.
\end{proposition}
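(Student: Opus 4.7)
The plan is to mimic the combination argument from Theorem~\ref{thm:exact-circular-sketch}, using the relaxed $(\eps,k)$-ACS sketch of Proposition~\ref{prp:approx-ap-sketch} for pseudo-periodic strings and the $(\eps,k)$-ACS sketch of Proposition~\ref{thm:approx-nap-circ-sketch} for non-pseudo-periodic strings. For each $S\in\Sigma^n$, I would let $\sk(S)$ store the sketch of Proposition~\ref{thm:approx-nap-circ-sketch} whenever $S\in\Sigma^n\sm\H_{n,k}$ and the sketch of Proposition~\ref{prp:approx-ap-sketch} whenever $S\in\H'_{n,k}$. Since $\H_{n,k}\sub\H'_{n,k}$, every string carries at least one of the two components, and strings in the deliberate overlap $\H'_{n,k}\sm\H_{n,k}$ carry both. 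The total size is bounded by the sum $\Ohtilde(\eps^{-2}\sqrt{k})+\Ohtilde(\eps^{-1}\sqrt{k})=\Ohtilde(\eps^{-2}\sqrt{k})$.

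On input $(\sk(S_1),\sk(S_2),m)$, the decoder first checks whether both sketches carry compatible components. If both carry the non-pseudo-periodic component, it invokes the decoder of Proposition~\ref{thm:approx-nap-circ-sketch}; otherwise, if both carry the pseudo-periodic component, it invokes the decoder of Proposition~\ref{prp:approx-ap-sketch}. Correctness in these two branches is inherited verbatim from the component sketches, both of which satisfy the relaxed $(\eps,k)$-ACS specification (note that the sketch of Proposition~\ref{thm:approx-nap-circ-sketch} actually meets the stronger, non-relaxed guarantee).

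The only genuinely new case is when no compatible component pair exists; then, without loss of generality, $S_1\in\H_{n,k}$ while $S_2\in\Sigma^n\sm\H'_{n,k}$, and the decoder simply returns $\infty$. To justify this, I would appeal to Observation~\ref{obs:rootrot} and Observation~\ref{obs:ap}: if we had $\HAM(S_1,\shft^m(S_2))\le k$, then since $S_1$ is $(3\gamma k,\gamma k)$-pseudo-periodic, $\shft^m(S_2)$ would be $(3\gamma k,(\gamma+1)k)$-pseudo-periodic, and hence $S_2\in\H'_{n,k}$, contradicting the assumption. Thus $\HAM(S_1,\shft^m(S_2))>k$ and outputting $\infty$ is consistent with the relaxed guarantee (since $\infty>(1-\eps)k$).

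The decoding time is dominated by $\Ohtilde(\sqrt{k}+\eps^{-2})$ from the non-pseudo-periodic branch and $\Ohtilde(\eps^{-1}\sqrt{k})$ from the pseudo-periodic branch, totalling $\Ohtilde(\eps^{-1}\sqrt{k}+\eps^{-2})$ with high probability. I do not expect any serious obstacle here; the entire content of the proof is the case analysis and the verification that the parameters of $\H_{n,k}$ and $\H'_{n,k}$ are chosen with just enough slack (one extra $k$ in the mismatch budget of the base) so that any pair at distance at most $k$ is jointly covered by at least one of the two sub-sketches.
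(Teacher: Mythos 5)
Your proposal is correct and follows essentially the same route as the paper: store the component of Proposition~\ref{thm:approx-nap-circ-sketch} for $S\notin\H_{n,k}$ and the component of Proposition~\ref{prp:approx-ap-sketch} for $S\in\H'_{n,k}$, decode via whichever compatible pair exists, and return $\infty$ otherwise since the slack between $\H_{n,k}$ and $\H'_{n,k}$ forces $\HAM(S_1,\shft^m(S_2))>k$ in that case. Your justification of the incompatible case (invoking Observation~\ref{obs:rootrot} together with Observation~\ref{obs:ap} to handle the rotation) is in fact slightly more careful than the paper's one-line appeal to Observation~\ref{obs:ap} alone.
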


\begin{proof}
	Our construction combines the $(\eps,k)$-ACS sketch of Theorem~\ref{thm:approx-nap-circ-sketch} and the relaxed $(\eps,k)$-ACS sketch of Property~\ref{prp:approx-ap-sketch}.
	For each strings $S$, if $S\in\H'_{n,k}$, then the sketch contains the sketch of $S$ by Property~\ref{prp:approx-ap-sketch} and, if $S\in \Sigma^n\setminus\H_{n,k}$, then the sketch contains the sketch of $S$ by Theorem~\ref{thm:approx-nap-circ-sketch}.
	Notice that, for  $S\in\H'_{n,k}\sm\H_{n,k}$  the sketch contains both components.
	
	For any two strings $S_1,S_2\in\Sigma^n$, given the sketches of $S_1$  and $S_2$, the decoder works as follows.
	If the two sketches contains compatible components (of Theorem~\ref{thm:approx-nap-circ-sketch} or of Property~\ref{prp:approx-ap-sketch}), then the decoder uses the decoder corresponding to these components.
	Otherwise, without loss of generality, it must be that $S_1\in\H_{n,k}$ and $S_2\notin \H'_{n,k}$. Thus, by Observation~\ref{obs:ap}, $\HAM(S_1,S_2)>k$, and therefore the decoder outputs $\infty$.
\end{proof}

A simple alternative approach yields smaller sketches when $k$ is large compared to $n$.
\begin{construction}\label{def:alternative}
	The encoding function $\circ_{\eps,k} : \Sigma^n \to \{0,1\}^*$ is defined as follows:
	\begin{enumerate}
		\item Let $A,B\sub [n]$ be two subsets with elements sampled independently with rate $p:=\sqrt{\frac{\log n}{\eps^2 k}}$.
		\item For $S\in \Sigma^n$, the encoding $\circ_{\eps,k}(S)$ consists of pairs $(i,S[i])$ for $i\in A\cup B$.
	\end{enumerate}
\end{construction}
\begin{proposition}\label{prp:approx-circular-sketch-alternative}
	There exists a decoding function which, together with the encoding $\circ_{\eps,k}$ of Definition~\ref{def:alternative},
	forms a relaxed $(\eps,k)$-ACS sketch of $\Sigma^n$.
	The size of the sketch is $\Ohtilde(\frac{n}{\eps \sqrt{k}})$, and the decoding time is $\Ohtilde(\frac{n}{\eps \sqrt{k}})$  with high probability.
\end{proposition}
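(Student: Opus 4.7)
The plan is to show that the decoder simply computes an estimator by rescaling the number of observed mismatches at jointly sampled positions, and that the three cases of the relaxed definition follow directly from the Chernoff--Hoeffding bound already established as Lemma~\ref{lem:chernoff} and Proposition~\ref{prp:chernoff}.

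First, I would bound the sketch size. Since $|A\cup B|\le |A|+|B|$ and each of $A,B$ is a $p$-sample from $[n]$ with $p = \sqrt{\log n /(\eps^2 k)}$, the expected size of $A\cup B$ is at most $2pn = \Ohtilde(n/(\eps\sqrt{k}))$, and a standard application of Proposition~\ref{prp:chernoff}\eqref{it:ub} gives the same bound with high probability. The encoding stores $\Oh(\log n)$ bits per sampled position, so the total is $\Ohtilde(n/(\eps\sqrt{k}))$ bits.

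Next, I would describe the decoding function. Given the sketches $\circ_{\eps,k}(S_1)$, $\circ_{\eps,k}(S_2)$, and a shift $m$, let $C = \{i\in A\mid ((i+m)\cmod n)\in B\}$ be the set of positions at which we jointly know $S_1[i]$ and $\shft^m(S_2)[i]=S_2[(i+m)\cmod n]$. The decoder reads off the mismatch count $H = |\{i\in C: S_1[i]\ne S_2[(i+m)\cmod n]\}|$ in time $\Ohtilde(|C|)=\Ohtilde(|A\cup B|)$ and returns the rescaled estimator $H/p^2$. Observe that $C$ is a random subset of $[n]$ with elements chosen independently at rate $p^2 = \log n/(\eps^2 k)$, by independence of $A$ and $B$, and that $H=\HAM_C(S_1,\shft^m(S_2))$.

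For correctness, set $D = \HAM(S_1,\shft^m(S_2))$, and split on the three cases of the relaxed definition. If $D\ge \tfrac12 k$, then $p^2 D\ge \tfrac{\log n}{2\eps^2}$, so Lemma~\ref{lem:chernoff} applied with rate $p^2$ gives $H\in (1\pm\eps)p^2 D$ with high probability, and hence $H/p^2$ is a $(1\pm\eps)$-approximation of $D$; this handles both the middle case ($\tfrac12 k\le D\le k$) and, since $(1-\eps)D>(1-\eps)k$, the large-distance case ($D>k$). If $D<\tfrac12 k$, then $\Exp[H]=p^2 D < \tfrac12 p^2 k = \tfrac{\log n}{2\eps^2}$, and I apply Proposition~\ref{prp:chernoff}\eqref{it:ub} with $\mu = \tfrac12 p^2 k$ and $\delta = \eps$ to the sum of the underlying independent $0/1$ indicators, yielding $\Pr[H\ge (1+\eps)\tfrac12 p^2 k]\le \exp(-\eps^2\tfrac{p^2 k}{6})=n^{-\Omega(1)}$. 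Consequently $H/p^2<\tfrac{1+\eps}{2}k$ with high probability, as required.

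The only subtlety is that the small-distance case cannot rely on multiplicative concentration around the true mean (since $p^2 D$ may be arbitrarily small), so I handle it via stochastic domination by the threshold $D=\tfrac12 k$; this is the step where the relaxed formulation of the approximate sketch is essential. The decoding runs in $\Ohtilde(|A\cup B|)=\Ohtilde(n/(\eps\sqrt{k}))$ time, matching the claimed bound.
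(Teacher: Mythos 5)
Your proof is correct and follows essentially the same route as the paper: the decoder restricts attention to the jointly sampled positions $A\cap\rot^m(B)$ (your set $C$), counts mismatches there, and rescales by $1/p^2=\frac{\eps^2k}{\log n}$, with Lemma~\ref{lem:chernoff} supplying the concentration. Your explicit upper-tail argument for the case $\HAM(S_1,\shft^m(S_2))<\frac12 k$ is a welcome detail that the paper's proof leaves implicit behind the phrase ``provided that $\HAM(S_1,\shft^m(S_2))=\Omega(k)$.''
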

\begin{proof}
	The decoder, given the sketches of $S_1,S_2\in \Sigma^n$ and a shift $m$,
	uses Lemma~\ref{lem:chernoff} to estimate $\HAM(S_1,\shft^m(S_2))$ based on $\HAM_{A \cap \rot^m(B)}(S_1,\shft^m(S_2))$. For each $i\in A \cap \rot^m(B)$, the decoder retrieves $S_1[i]$ from the sketch of $S_1$
	and $\shft^m(S_2)[i]=S_2[(i-m)\cmod n]$ from the sketch of $S_2$.
	Since $A\cap \rot^m(B)$ is a random subset of $[n]$ with elements sampled independently with rate $\frac{\log n}{\eps^2 k}$, the quantity $\frac{\eps^2 k}{\log n}\HAM_{A\cap \rot^m(B)}(S_1,\shft^m(S_2))$
	 is a $(1\pm\eps)$-approximation of $\HAM(S_1,\shft^m(S_2))$	with high probability provided that $\HAM(S_1,\shft^m(S_2))=\Omega(k)$; see Lemma~\ref{lem:chernoff}.
\end{proof}

\thmappx*

\begin{proof}
An $(\eps,k)$-ACS  sketch is obtained by combining $\Oh(\log k)$ relaxed $(\eps,k')$-ACS sketches, where $k'$ ranges over powers of two between $1$ and $2k$.  Depending on whether $k'\le \eps n$ or not, Lemma~\ref{lem:approx-circular-sketch-relaxed} or Property~\ref{prp:approx-circular-sketch-alternative} is used to implement $k'$-mismatch sketches. 
\end{proof}

\begin{remark}
Applying Property~\ref{prp:approx-circular-sketch-alternative} instead of Lemma~\ref{lem:approx-circular-sketch-relaxed} improves the sketch size (for $k\ge \eps n$) but degrades the decoding time.
We get two alternatives: $\Ohtilde(\eps^{-2}\sqrt{k})$-size sketches with decoding time $\Ohtilde(\eps^{-1}\sqrt{k}+\eps^{-2})$, and $\Ohtilde(\eps^{-1.5}\sqrt{n})$-size sketches with decoding time $\Ohtilde(\eps^{-1.5}\sqrt{n})$.
\end{remark}

\section{Efficient Shift Distance Decoders}\label{sec:shift-decoding}
In this section, we develop exact and approximate $k$-mismatch shift distance sketches
with efficient decoding procedures.
These sketches use the same encoding functions as the corresponding $k$-mismatch 
circular sketches, so we only need to develop the decoding procedures.

Our decoding procedures for shift distance heavily rely on their counterparts for decoding
the Hamming distance between $S_1$ and a fixed rotation of $S_2$. Hence, each of the following 
four propositions refers to its counterpart in Section~\ref{sec:nap} or Section~\ref{sec:ap-sketches}.

\subsection{Shift Distance Sketches for Non-Pseudo-Periodic Strings}
\begin{proposition}[see Theorem~\ref{thm:approx-nap-circ-sketch}]\label{thm:approx-nap-shift-sketch}
	There exists a decoding function which, together with the encoding $\circ_{\varepsilon,k}$  of Definition~\ref{def:sketchesapproxcirc}, forms an $(\eps,k)$-ASDS sketch of $\Sigma^n \sm \H_{n,k}$. The decoding algorithm costs $\Ohtilde(\eps^{-2}k)$ time with high probability.
\end{proposition}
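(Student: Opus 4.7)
The plan is to reuse the encoding from Definition~\ref{def:sketchesapproxcirc} and reverse the quantifier order of the circular decoder of Theorem~\ref{thm:approx-nap-circ-sketch}. For a fixed shift $m$, that circular decoder seeks a pair $(i,i')$ with $i \in f(S_1)\cap A$ and $i' = (i+m)\cmod n \in f(S_2)\cap B$, and then uses $\sk_\eps(\shft^i(S_1))$ and $\sk_\eps(\shft^{i'}(S_2))$ to estimate $\HAM(S_1,\shft^m(S_2))$. Dropping the outer fixed $m$, every pair $(i,i') \in (f(S_1)\cap A)\times (f(S_2)\cap B)$ determines a unique candidate shift $m = (i'-i)\cmod n$. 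The shift-distance decoder enumerates all such pairs, computes the associated $(1\pm \eps)$-estimate via Theorem~\ref{thm:approx-sketch-non-circular} in $\Ohtilde(\eps^{-2})$ time per pair, and returns the smallest estimate.

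The running-time bound follows from a standard Chernoff estimate: since $|f(S_j)| = \Ohtilde(k)$ with high probability and $A,B$ are sampled independently at rate $p = 2\sqrt{\ln n / k}$, each of $f(S_j)\cap A$ and $f(S_j)\cap B$ has size $\Ohtilde(\sqrt k)$ with high probability. Consequently, the number of candidate pairs is $\Ohtilde(k)$, and the total work is $\Ohtilde(\eps^{-2} k)$, as required.

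For correctness, first suppose $\sh(S_1,S_2)\le k$ and let $m^*$ attain the minimum. By Theorem~\ref{thm:points-set}, $|f(S_1)\cap \rot^{m^*}(f(S_2))|\ge k$ with high probability, and each such position $i$ independently satisfies $i\in A$ and $(i+m^*)\cmod n\in B$ with probability $p^2$; the probability that none of these $k$ trials succeeds is at most $(1-p^2)^k\le n^{-4}$. Hence, with high probability, the candidate $m^*$ arises and produces a $(1\pm\eps)$-approximation of $\sh(S_1,S_2)$. Every other candidate $m$ returns a $(1\pm\eps)$-approximation of $\HAM(S_1,\shft^m(S_2))\ge \sh(S_1,S_2)$, so the overall minimum lies in the interval $[(1-\eps)\sh(S_1,S_2),(1+\eps)\sh(S_1,S_2)]$. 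In the complementary case $\sh(S_1,S_2)>k$, every candidate $m$ satisfies $\HAM(S_1,\shft^m(S_2))>k$, so each individual estimate exceeds $(1-\eps)k$, and hence so does the minimum.

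The main technical obstacle is ensuring correctness of the non-circular sketch decoder across all $\Ohtilde(k)$ candidate pairs simultaneously, rather than pointwise. This is handled by a union bound: Theorem~\ref{thm:approx-sketch-non-circular} can be boosted via parallel repetition to succeed with probability $1-n^{-c}$ for any desired constant $c$, while increasing the sketch size only by a $\log^{\Oh(1)} n$ factor. This factor is absorbed by the $\Ohtilde$ notation and therefore preserves the overall encoding size established in Theorem~\ref{thm:approx-nap-circ-sketch}.
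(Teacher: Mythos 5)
Your proposal is correct and follows essentially the same route as the paper: enumerate all pairs $(i,i')\in (f(S_1)\cap A)\times(f(S_2)\cap B)$, decode a $(1\pm\eps)$-estimate of $\HAM(\shft^i(S_1),\shft^{i'}(S_2))$ for each, and return the minimum, with the lower bound following from $\HAM(\shft^i(S_1),\shft^{i'}(S_2))\ge\sh(S_1,S_2)$ and the upper bound from the existence (with high probability) of a surviving pair realizing the optimal shift. Your closing remark about union-bounding the failure probability of the non-circular sketches over the $\Ohtilde(k)$ pairs is a point the paper leaves implicit, but it is not a different argument.
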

\begin{proof}
	Our decoding procedure iterates over $i \in f(S_1)\cap A$ and $i'\in f(S_2)\cap B$.
	For each such pair $(i,i')$, the procedure retrieves the sketches $\sk_{\varepsilon}(\shft^{i}(S_1))$ and $\sk_{\varepsilon}(\shft^{i'}(S_2))$ and recovers a $(1+\eps)$-approximation of $\HAM(\shft^i(S_1),\shft^{i'}(S_2))$. The algorithm returns the smallest among the values obtained across
	all the iterations.
	
	Since $\HAM(\shft^i(S_1),\shft^{i'}(S_2))\ge \sh(S_1,S_2)$, the returned value is at least $(1-\eps)\sh(S_1,S_2)$ with high probability (unless the sketches $\sk_{\varepsilon}$ fail).
	Moreover, if $\sh(S_1,S_2) \le k$ with $\HAM(S_1,\shft^m(S_2))=\sh(S_1,S_2)$ for some integer $m$,
	then, as in the proof of Theorem~\ref{thm:approx-nap-circ-sketch}, with high probability, there is a pair of indices
	$i\in f(S_1)\cap A$ and $i'\in f(S_2)\cap B$ with $i' = (i+m)\cmod n$.
	Hence, the returned value is at most $(1+\eps)\HAM(S_1,\shft^m(S_2))=(1+\eps)\sh(S_1,S_2)$
	with high probability.
\end{proof}

\begin{proposition}[see Theorem~\ref{thm:exact-nap-sketch}]\label{thm:exact-nap-shift-sketch}
	There exists a decoding function which, together with the encoding $\circ_{k}$  of Definition~\ref{def:sketchesexactcirc}, forms a $k$-ESDS sketch of $\Sigma^n \sm \H_{n,k}$.
	The decoding algorithm costs $\Ohtilde(k^2)$ time with high probability.
\end{proposition}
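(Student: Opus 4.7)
The plan is to adapt the decoder of Theorem~\ref{thm:exact-nap-sketch} to the shift-distance setting, where $m$ is not given, by enumerating candidate shifts through all aligned pairs of selected positions. First I would iterate over every pair $(i,i')\in f(S_1)\times f(S_2)$; each such pair witnesses the candidate shift $m=(i'-i)\cmod n$ at which $i$ and $i'$ align (i.e.\ $i'=(i+m)\cmod n$). For each pair the decoder runs the non-circular $t$-mismatch decoding of Theorem~\ref{thm:exact-non-circular} on $\sk_{t,A}(\shft^i(S_1))$ and $\sk_{t,A}(\shft^{i'}(S_2))$ in $\Ohtilde(t)=\Ohtilde(1)$ time, obtaining either up to $t$ mismatch positions between $\shft^i(S_1)$ and $\shft^{i'}(S_2)$ or a failure flag.

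Next I would bucket the per-pair outputs by their candidate shift. For every shift $m$ supported by at least $k$ pairs whose individual decodings all succeed, I assemble the set $M_m$ of mismatch positions between $S_1$ and $\shft^m(S_2)$ by translating each recovered local mismatch $j$ from a pair $(i,i')$ to the global position $(i+j)\cmod n$, exactly as in the proof of Theorem~\ref{thm:exact-nap-sketch}; I record $|M_m|$ and treat every other shift as $\infty$. The decoder returns $\min_m |M_m|$. Correctness then reuses the per-shift analysis of Theorem~\ref{thm:exact-nap-sketch}, conditioning on the events that $|f(S_i)|=\Ohtilde(k)$ (from Theorem~\ref{thm:points-set}), that every shift $m$ with $\HAM(S_1,\shft^m(S_2))\le k$ satisfies $|f(S_1)\cap\rot^m(f(S_2))|\ge k$ (also from Theorem~\ref{thm:points-set}), and that the sketch decoders behave as in Theorem~\ref{thm:exact-non-circular} on every invocation; all three families of events hold uniformly across the $\le n$ shifts and $\Ohtilde(k^2)$ invocations by a union bound, because each individual failure probability is $n^{-\Omega(1)}$. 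If $\sh(S_1,S_2)\le k$, any minimizer $m^*$ then yields $|M_{m^*}|=\HAM(S_1,\shft^{m^*}(S_2))=\sh(S_1,S_2)$, while every other shift either returns $\infty$ or its exact Hamming distance, which is at least $\sh(S_1,S_2)$; and if $\sh(S_1,S_2)>k$, every shift yields either $\infty$ or a value exceeding $k$.

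The running-time bound is immediate: by Theorem~\ref{thm:points-set} the enumeration contains $|f(S_1)|\cdot|f(S_2)|=\Ohtilde(k^2)$ pairs, each incurring $\Ohtilde(1)$ decoding, and the final bucketing and aggregation are linear in the number of pairs, totaling $\Ohtilde(k^2)$. The main obstacle I anticipate is precisely the uniformity step: ensuring that no spurious shift $m$ with $\HAM(S_1,\shft^m(S_2))>k$ returns a value at most $k$ requires the $n^{-\Omega(1)}$ failure bounds from the proof of Theorem~\ref{thm:exact-nap-sketch} to survive union bounds over all $n$ shifts and $\Ohtilde(k^2)$ sketch invocations. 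Fortunately, the margin in $t=\lceil 18\ln n\rceil$ built into Construction~\ref{def:sketchesexactcirc} supplies enough polynomial slack to absorb these bounds.
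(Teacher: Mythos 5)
Your proposal is correct and follows essentially the same route as the paper: enumerate all pairs in $f(S_1)\times f(S_2)$ to identify the shifts $m$ with at least $k$ aligned selected positions, run the per-shift decoder of Theorem~\ref{thm:exact-nap-sketch} only for those shifts, and return the minimum, with a union bound over the at most $n$ shifts. The only (immaterial) difference is organizational — you decode every pair eagerly and then bucket by shift, while the paper first counts pairs per shift and invokes the circular decoder only for the $\Ohtilde(k)$ shifts with count at least $k$ — and both yield the same $\Ohtilde(k^2)$ bound.
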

\begin{proof}
	The decoding algorithm first  computes the sizes
	 $s_m:= |f(S_1)\cap \rot^m(f(S_2))|$ for all shifts $m\in [n]$.
	For this, the algorithm iterates over $i\in f(S_1)$ and $i'\in f(S_2)$
	incrementing $s_{(i'-i)\cmod n}$.
	Next, for each shift $m\in [n]$ with $c_m \ge k$, the algorithm uses the decoding function of Theorem~\ref{thm:exact-nap-sketch} to retrieve $\HAM(S_1,\shft^m(S_2))$ (or learn that $\HAM(S_1,\shft^m(S_2))>k$).
	Finally, the algorithm returns the smallest among the reported values.
	(If $c_m < k$ for each $m\in [n]$, then the algorithm returns $\infty$.)

	As for correctness, first note that $\HAM(S_1,\shft^m(S_2))\ge \sh(S_1,S_2)$ holds for each $m\in [n]$, so the returned value is at least $\min(k+1,\sh(S_1,S_2))$ with high probability (unless the decoding procedure of Theorem~\ref{thm:exact-nap-sketch} fails).
	Next, suppose that $\sh(S_1,S_2)=\HAM(S_1,\shft^m(S_2))\le k$.
	As argued in the proof of Theorem~\ref{thm:exact-nap-sketch}, $s_m=|f(S_1)\cap f(\shft^m(S_1))|\ge k$
	holds with high probability. Consequently, the decoding procedure of Theorem~\ref{thm:exact-nap-sketch} was called for $S_1$, $S_2$, and $m$, resulting in $\sh(S_1,S_2)$ with high probability.
	Hence, the returned value is at most $\sh(S_1,S_2)$ with high probability.
	
	The decoder iterates over $f(S_1)\times f(S_2)$, which is of size $\Ohtilde(k^2)$ with high probability due to Theorem~\ref{thm:points-set}.
	Hence, by the pigeonhole principle there are at most $\Ohtilde(\frac {k^2}k)=\Ohtilde(k)$ positions $m\in[n]$ such that $c_m\ge k$.
	For each such position, the decoding time of Theorem~\ref{thm:exact-nap-sketch} is $\Ohtilde(k)$. Thus, the total decoding time is $\Ohtilde(k^2)$.
\end{proof}

\subsection{Shift Distance Sketches for Pseudo-Periodic Strings}
\begin{lemma}[see Lemma~\ref{lem:0}]\label{lem:0-shift}
There exists a decoding function $\shdec_0$ which, together with the encoding $\sk_0$ of Lemma~\ref{lem:0},
forms an exact $0$-ESDS sketch with constant decoding time.
\end{lemma}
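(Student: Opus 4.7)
The plan is to observe that $\sh(S_1,S_2)=0$ is equivalent to $S_1$ being a rotation of $S_2$, which in turn is equivalent to $\minrot(S_1)=\minrot(S_2)$. Hence, the decoder $\shdec_0$ only needs to inspect the two Karp--Rabin fingerprints stored in $\sk_0(S_1)$ and $\sk_0(S_2)$.

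Concretely, I would define $\shdec_0(\sk_0(S_1),\sk_0(S_2))$ to return $0$ if $\Phi(\minrot(S_1))=\Phi(\minrot(S_2))$, and $\infty$ otherwise; the remaining components of the sketches (namely $\prr(\cdot)$ and the shift $r$) are ignored. This is clearly a constant-time computation since it consists of a single equality test on $\Oh(\log n)$-bit strings.

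For correctness, first suppose $\sh(S_1,S_2)=0$. Then there exists $m\in \mathbb{Z}$ such that $S_1=\shft^m(S_2)$, which means $S_1$ and $S_2$ share the same set of cyclic rotations and therefore $\minrot(S_1)=\minrot(S_2)$. Consequently, the fingerprints agree deterministically, and the decoder correctly outputs $0$. Conversely, suppose $\sh(S_1,S_2)>0$, so that $\minrot(S_1)\ne \minrot(S_2)$. By Fact~\ref{fct:kr-fingerprint}, $\Phi(\minrot(S_1))\ne \Phi(\minrot(S_2))$ with high probability, in which case the decoder outputs $\infty>0$ as required.

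There is no real obstacle here: the statement is essentially an immediate corollary of the properties of $\sk_0$ established in Lemma~\ref{lem:0}. The only subtlety worth flagging is that, unlike the circular decoder $\dec_0$, the shift decoder $\shdec_0$ does not need the primitive-root length $\prr$ or the rotation offset $r$, because it is not trying to identify a specific aligning shift $m$ but merely to decide whether any such shift exists.
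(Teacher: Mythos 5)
Your proposal is correct and matches the paper's proof exactly: the paper's decoder likewise returns $0$ or $\infty$ based solely on whether $\Phi(\minrot(S_1))=\Phi(\minrot(S_2))$, ignoring the $\prr$ and $r$ components. Your correctness argument (rotations of each other iff equal minimum rotations, plus Fact~\ref{fct:kr-fingerprint} for the negative case) is the intended one, just spelled out in more detail than the paper bothers to.
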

\begin{proof}
The decoding function, given the sketches  $\sk_0(S_1)=(\Phi(\minrot(S_1)),\prr(S_1),r_1)$ and $\sk_0(S_2)=(\Phi(\minrot(S_2)),\prr(S_2),r_2)$,
returns $0$ or $\infty$ based on whether $\Phi(\minrot(S_1))=\Phi(\minrot(S_2))$ or not.
\end{proof}

\begin{proposition}[see Property~\ref{prp:exact-ap-sketch}]\label{prp:exact-ap-shift-sketch}
	There exists a decoding function which, together with the encoding $\circ_{k}$  of Definition~\ref{def:circksk-ap}, forms a $k$-ESDS sketch of $\H'_{n,k}$.
	The decoding algorithm costs $\Ohtilde(k^2)$ time with high probability.
\end{proposition}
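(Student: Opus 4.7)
The plan is to leverage Corollary~\ref{cor:sameroot}: whenever $\sh(S_1,S_2) \le k$, every optimal shift $m^*$ satisfies $\shft^{m^*}(S'_2) = S'_1$ and therefore lies in the arithmetic progression $V = \{m \in [0\dd n-1] : \shft^m(S'_2) = S'_1\}$. The decoder first invokes $\shdec_0$ on $(\sk_0(S'_1), \sk_0(S'_2))$ to test whether $V \neq \emptyset$; if it is empty, it returns $\infty$. Otherwise, from the primitive-root length and rotation offsets stored in $\sk_0$, it extracts a representative $m_0 \in V$, yielding the implicit description $V = \{m_0 + j\cdot\prr(S'_2) : 0 \le j < n/\prr(S'_2)\}$ in $O(1)$ space; note that $|V| \ge 3\gamma k$ can be far larger than $\Ohtilde(k^2)$.

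For every $m \in V$, writing $D_i = \MP(S_i,S'_i)$ (both of size at most $(\gamma+1)k$ and recoverable from the stored mismatch information), a case analysis on positions inside/outside $D_1$ and $\rot^m(D_2)$ gives the closed-form expression
\[ \HAM(S_1,\shft^m(S_2)) = |D_1| + |D_2| - 2a_m + b_m, \]
where $a_m = |D_1 \cap \rot^m(D_2)|$ and $b_m$ counts positions $j \in D_1 \cap \rot^m(D_2)$ with $S_1[j] \neq S_2[(j+m)\cmod n]$ (the characters involved are explicitly stored in $\MI(S_1,S'_1)$ and $\MI(S_2,S'_2)$). Each pair $(i,i') \in D_1 \times D_2$ contributes to $a_m$, and possibly to $b_m$, for the unique shift $m = (i'-i)\cmod n$. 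The decoder therefore iterates over all $O(k^2)$ such pairs, filters by $m \equiv m_0 \pmod{\prr(S'_2)}$ to retain only those in $V$, and records $(a_m, b_m)$ in a hash table indexed by $m$. Letting $M^* \subseteq V$ denote the set of shifts that appear, the decoder outputs $v^* = \min_{m \in M^*}(|D_1|+|D_2|-2a_m+b_m)$, additionally considering $|D_1|+|D_2|$ whenever $|M^*| < |V|$ (since $a_m = b_m = 0$ for every $m \in V \setminus M^*$). The final answer is $v^*$ if $v^* \le k$, and $\infty$ otherwise; correctness is immediate from Corollary~\ref{cor:sameroot} together with the formula above.

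The main conceptual hurdle is that $|V|$ can be as large as $\Theta(n)$, so the decoder cannot afford to visit every valid shift individually. The resolution is that $V \setminus M^*$ contributes a single flat value $|D_1| + |D_2|$ to the minimization, so only the shifts appearing in $M^*$ (at most $|D_1|\cdot |D_2| = O(k^2)$) require individual attention; comparing $|M^*|$ with $|V| = n/\prr(S'_2)$ is straightforward. Since the $D_1 \times D_2$ loop dominates and each pair is processed with $\Ohtilde(1)$ hash-table work, the total running time is $\Ohtilde(k^2)$ with high probability, as required.
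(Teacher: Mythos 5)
Your proposal is correct and follows essentially the same route as the paper's proof: the same initial $\shdec_0$ test, the same bucketing of pairs from $\MP(S_1,S'_1)\times\MP(S_2,S'_2)$ by the shift $(i'-i)\cmod n$, an algebraically identical closed-form for $\HAM(S_1,\shft^m(S_2))$ (your $|D_1|+|D_2|-2a_m+b_m$ equals the paper's $\HAM(S_1,S'_1)+\HAM(S_2,S'_2)-|\PM_m|-|\PM'_m|$), and the same flat value $|D_1|+|D_2|$ covering shifts with empty buckets. Your only deviation is filtering candidate shifts by the arithmetic progression $V$ extracted once from $\sk_0$ rather than calling $\dec_0(\cdot,\cdot,m)$ per nonempty bucket, which is the same check performed in a different order.
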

\begin{proof}
	The decoding algorithm is given the sketches $\circ_k(S_1)=(\sk_0(S'_1),\MI(S_1, S'_1))$ and $\circ_k(S_2)=(\sk_0(S'_2),\MI(S_2, S'_2))$.
	First, the algorithm applies $\shdec_0(\sk_0(S'_1),\sk_0(S'_2))$ of Lemma~\ref{lem:0-shift}.
	If this call returns a non-zero result, then $\infty$ is returned.
	Otherwise, for each $m\in [n]$, the algorithm constructs the following sets:
	\begin{align*}
		\PM_m &:= \MP(S_1,S'_1)\cap \MP(\shft^m(S_2),\shft^m(S'_2))\\
		\PM'_m &:= \PM_m \sm \MP(S_1,\shft^m(S_2))
	\end{align*}
	For this, the algorithm iterates over $(i,a,b)\in \MI(S_1,S'_1)$ and $(i',c,d)\in \MP(S_2,S'_2)$,
	adding $i$ to  $\PM_{(i'-i)\cmod n}$ and, provided that $a= c$, also to $\PM'_{(i'-i)\cmod n}$.

	For each shift $m$ with $\PM_m \ne \emptyset$,
	the algorithm uses $\dec_0(\sk_0(S'_1),\sk_0(S'_2),m)$ of Lemma~\ref{lem:0}.
	If this call returns a non-zero result, then $m$ is ignored.
	Otherwise, $\HAM(S_1,\shft^m(S_2))=\HAM(S_1,S'_1)+\HAM(S_2,S'_2)-|\PM_m|-|\PM'_m|$ is computed.
	Finally, the algorithm returns the minimum of $\HAM(S_1,S'_1)+\HAM(S_2,S'_2)$ and the smallest among the computed values $\HAM(S_1,\shft^m(S_2))$.

	\subparagraph*{Correctness.}
	By Corollary~\ref{cor:sameroot}, $\sh(S_1,S_2)\le k$ guarantees $\sh(S'_1,S'_2)=0$,
	so the algorithm correctly returns $\infty$ if $\shdec_0(\sk_0(S'_1),\sk_0(S'_2))$ yields a non-zero result.
	Moreover, $\HAM(S_1,\shft^m(S_2))\le k$ guarantees $S'_1 = \shft^m(S'_2)$,
	so the algorithm correctly ignores $m\in [n]$ if $\dec_0(\sk_0(S'_1),\sk_0(S'_2),m)$ yields a non-zero result.
	In the following, we assume $\sh(S'_1,S'_2)=0$ with $S'_1 = \shft^m(S'_2)$ for all the shifts considered.
	The latter assumption implies $\HAM(S_1,\shft^m(S_2))=\HAM(S_1,S'_1)+\HAM(S_2,S'_2)-|\PM_m|-|\PM'_m|$
	(compare the proof of Fact~\ref{fct:retrieve}).
	Moreover, $\sh(S_1,S_2)\le \HAM(S_1,S'_1)+\HAM(S_2,S'_2)$ holds by the triangle inequality,
	Hence, the returned value is at least $\sh(S_1,S_2)$ with high probability.

	On the other hand, if $\sh(S_1,S_2)=\HAM(S_1,\shft^m(S_2))\le k$ for some shift $m\in [n]$,
	then $S'_1 = \shft^m(S'_2)$ and  $\HAM(S_1,\shft^m(S_2))=\HAM(S_1,S'_1)+\HAM(S_2,S'_2)-|\PM_m|-|\PM'_m|$.
	This either yields $\sh(S_1,S_2)= \HAM(S_1,S'_1)+\HAM(S_2,S'_2)$ (in case of $\PM_m=\emptyset$, which yields $|\PM_m|=|\PM'_m|=0$)
	or that $m$ was among the shifts considered (otherwise).
	In both cases, we conclude that the returned value is at most $\sh(S_1,S_2)$ with high probability.
\end{proof}

A relaxed $(\eps,k)$-ASDS sketch is defined analogously to a relaxed $(\eps,k)$-ACS sketch:
\begin{itemize}
	\item if $\sh(S_1,S_2) < \frac12k$, then $\shdec(\sk(S_1),\sk(S_2)) < \frac{1+\eps}2k$,
	\item if $\frac12k \le \sh(S_1,S_2) \le k$, then $\shdec(\sk(S_1),\sk(S_2))\in (1\pm\eps)\sh(S_1,S_2)$,
	\item otherwise, $\shdec(\sk(S_1),\sk(S_2))>(1-\eps)k$.
\end{itemize}

\begin{proposition}[see Property~\ref{prp:approx-ap-sketch}]\label{prp:approx-ap-shift-sketch}
	There exists a decoding function which, together with the encoding $\circ_{\eps,k}$  of Definition~\ref{def:circepssk-ap}, forms a relaxed $(\eps,k)$-ASDS sketch of $\H'_{n,k}$.
	The decoding algorithm costs $\Ohtilde(\eps^{-2} k)$ time with high probability.
\end{proposition}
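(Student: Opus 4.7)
The decoder adapts Proposition~\ref{prp:exact-ap-shift-sketch} to the sampled-mismatch setting. First, invoke $\shdec_0(\sk_0(S'_1), \sk_0(S'_2))$ (Lemma~\ref{lem:0-shift}); if it returns $\infty$, return $\infty$, since by Corollary~\ref{cor:sameroot} this forces $\sh(S_1, S_2) > k$. Otherwise, iterate once over every pair $(i, a, b) \in \MI_A(S_1, S'_1)$ and $(i', c, d) \in \MI_B(S_2, S'_2)$: compute the candidate shift $m := (i' - i) \cmod n$ and, keyed by $m$ in a hash table, increment a counter $c_m$ and also a counter $c'_m$ whenever $a = c$. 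After this single pass, $c_m = |A \cap \rot^m(B) \cap \PM_m|$ and $c'_m = |A \cap \rot^m(B) \cap \PM'_m|$.

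For every registered $m$, call $\dec_0(\sk_0(S'_1), \sk_0(S'_2), m)$ (Lemma~\ref{lem:0}) and discard $m$ unless this returns zero, certifying $S'_1 = \shft^m(S'_2)$. Letting $h_i := \HAM(S_i, S'_i)$ be appended to each sketch as $O(\log n)$ additional bits, form the unbiased estimator
\[
\hat{H}_m \;:=\; h_1 + h_2 - (c_m + c'_m)/p^2,
\]
for which $\E[\hat{H}_m] = \HAM(S_1, \shft^m(S_2))$, since $\HAM(S_1, \shft^m(S_2)) = h_1 + h_2 - |\PM_m| - |\PM'_m|$ and $\E[c_m + c'_m] = p^2 (|\PM_m| + |\PM'_m|)$. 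Return $\min(h_1 + h_2, \min_m \hat{H}_m)$.

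Correctness reduces to a uniform bound $|\hat{H}_m - \HAM(S_1, \shft^m(S_2))| \le \eps k$ over all $\Ohtilde(\eps^{-2} k)$ registered $m$, obtained by applying the multiplicative Chernoff bound (Proposition~\ref{prp:chernoff}) to $c_m + c'_m$ (a sum of independent indicators, each with success probability $p^2$) followed by a union bound. A case split on the optimal shift $m^*$ then handles the three cases of the relaxed specification, using $h_1 + h_2 \ge \sh(S_1, S_2)$ as a safety net: when $|\PM_{m^*}| + |\PM'_{m^*}| = \Omega(\eps^2 k)$, the pair iteration registers $m^*$ with high probability; otherwise $h_1 + h_2 - \HAM(S_1, \shft^{m^*}(S_2)) = |\PM_{m^*}| + |\PM'_{m^*}| = O(\eps^2 k)$, so $h_1 + h_2$ is already a sufficient approximation of $\sh(S_1, S_2)$. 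The runtime is $\Ohtilde(\eps^{-2} k)$: the pair iteration visits $|\MI_A(S_1, S'_1)| \cdot |\MI_B(S_2, S'_2)| = \Ohtilde(\eps^{-2} k)$ pairs in $\Ohtilde(1)$ time each, and the follow-up loop handles at most that many registered shifts in $\Ohtilde(1)$ time. The main obstacle will be controlling $c_m + c'_m$ uniformly when $|\PM_m| + |\PM'_m|$ falls into the intermediate regime where multiplicative Chernoff weakens; a careful split on the magnitude of $|\PM_m|$, in tandem with the $h_1 + h_2$ fallback, should close this gap.
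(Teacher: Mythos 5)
Your decoder is essentially the paper's: the same $\shdec_0$ pre-check, the same single pass over $\MI_A(S_1,S'_1)\times\MI_B(S_2,S'_2)$ bucketed by $(i'-i)\cmod n$, the same $\dec_0$ filter per registered shift, the same estimator (the paper's $\tfrac{\eps^2k}{\log n}$ factor is exactly your $1/p^2$), and the same $h_1+h_2$ fallback; your explicit remark that $\HAM(S_i,S'_i)$ must be appended to the sketch is a point the paper glosses over. The "intermediate regime" worry in your last sentence is not a real obstacle: since $\Exp[c_m+c'_m]\le 4(\gamma+1)\log n/\eps^2$ while the required deviation is $p^2\eps k/2=\log n/(2\eps)$, Proposition~\ref{prp:chernoff} gives an $n^{-\Omega(1)}$ failure probability in both the $\delta\ge 1$ and $\delta<1$ branches, so the union bound closes as stated.
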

\begin{proof}
	The decoding algorithm is given sketches $\circ_{\eps,k}(S_1)=(\sk_0(S'_1),\MI_{A\cup B}(S_1, S'_1))$ and $\circ_{\eps,k}(S_2)=(\sk_0(S'_2),\MI_{A\cup B}(S_2, S'_2))$.
	First, the algorithm applies $\shdec_0(\sk_0(S'_1),\sk_0(S'_2))$ of Lemma~\ref{lem:0-shift}.
	If this call returns a non-zero result, then $\infty$ is returned.
	Otherwise, for each $m\in [n]$, the algorithm constructs the sets $\PM_m \cap A \cap \rot^m(B)$
	and $\PM'_m \cap A \cap \rot^m(B)$, where 	
	\begin{align*}
		\PM_m &:= \MP(S_1,S'_1)\cap \MP(\shft^m(S_2),\shft^m(S'_2))\\
		\PM'_m &:= \PM_m \sm \MP(S_1,\shft^m(S_2))
	\end{align*}
	are defined as in the proof of Property~\ref{prp:exact-ap-shift-sketch}.
	For this, the algorithm iterates over $(i,a,b)\in \MI_{A}(S_1,S'_1)$ and $(i',c,d)\in \MI_{B}(S_2,S'_2)$, adding $i$ to $\PM_{(i'-i)\cmod n} \cap A \cap \rot^{(i'-i)\cmod n}(B)$ and, provided that $a=c$, also to $\PM'_{(i'-i)\cmod n} \cap A \cap \rot^{(i'-i)\cmod n}(B)$.

	For each shift $m$ with $\PM_m \cap A \cap \rot^m(B) \ne \emptyset$,
	the algorithm uses $\dec_0(\sk_0(S'_1),\sk_0(S'_2),m)$ of Lemma~\ref{lem:0}.
	If this call returns a non-zero result, then $m$ is ignored.
	Otherwise, the algorithm computes \[d_m:=\HAM(S_1,S'_1)+\HAM(S_2,S'_2)-\tfrac{\eps^2 k}{\log n}(|\PM_m \cap A \cap \rot^m(B)|+|\PM'_m \cap A \cap \rot^m(B)|).\]
	Finally, the algorithm returns the minimum of $\HAM(S_1,S'_1)+\HAM(S_2,S'_2)$ and the smallest among the computed values $d_m$.

	\subparagraph*{Correctness.}
	By Corollary~\ref{cor:sameroot}, $\sh(S_1,S_2)\le k$ guarantees $\sh(S'_1,S'_2)=0$,
	so the algorithm correctly returns $\infty$ if $\shdec_0(\sk_0(S'_1),\sk_0(S'_2))$ yields a non-zero result.
	Moreover, $\HAM(S_1,\shft^m(S_2))\le k$ guarantees $S'_1 = \shft^m(S'_2)$,
	so the algorithm correctly ignores $m\in [n]$ if $\dec_0(\sk_0(S'_1),\sk_0(S'_2),m)$ yields a non-zero result.
	In the following, we assume $\sh(S'_1,S'_2)=0$ with $S'_1 = \shft^m(S'_2)$ for all the shifts considered.

	Recall that $\HAM(S_1,\shft^m(S_2))=\HAM(S_1,S'_1)+\HAM(S_2,S'_2)-|\PM_m|-|\PM'_m|$
	holds provided that $S'_1 = \shft^m(S'_2)$.
	Since $A\cap \rot^m(B)$ is a random subset of $[n]$ with elements sampled independently with rate $\frac{\log n}{\eps^2 k}$, the quantity $\frac{\eps^2 k}{\log n}(|\PM_m \cap A \cap \rot^m(B)|+|\PM'_m \cap A \cap \rot^m(B)|)$
	is with high probability a $\pm \frac{\eps k}{2}$-additive approximation of $|\PM_m|+|\PM'_m|$ (which can be argued as in the proof of Lemma~\ref{lem:chernoff}).
	Consequently, the computed value $d_m$ is with high probability a $\pm \frac{\eps k}{2}$-additive approximation of $\HAM(S_1,\shft^m(S_2))$.
	As $\sh(S_1,S_2)\le \HAM(S_1,S'_1)+\HAM(S_2,S'_2)$ holds by the triangle inequality,
	this means that the returned value is at least $(1-\eps)\sh(S_1,S_2)$ with high probability
	provided that $\sh(S_1,S_2)\ge \frac12k$.

	On the other hand, if $\sh(S_1,S_2)=\HAM(S_1,\shft^m(S_2))\le k$ for some shift $m\in [n]$,
	then $S'_1 = \shft^m(S'_2)$ and $d_m$	is a  $\pm \frac{\eps k}{2}$-additive approximation of $\sh(S_2,S_2)$.
	This either yields $\HAM(S_1,S'_1)+\HAM(S_2,S'_2) \le (1+\eps)\sh(S_1,S_2)$ (if $\PM_m \cap A \cap \rot^m(B) = \emptyset$, which yields $|\PM_m \cap A \cap \rot^m(B)|=|\PM'_m \cap A \cap \rot^m(B)|=0$)
	or that $m$ was among the shifts considered (otherwise).
	In both cases, we conclude that the returned value is at most $(1+\eps)\sh(S_1,S_2)$ with high probability.
\end{proof}

\subsection{Shift Distance Sketches for $\Sigma^n$.}
After handling non-pseudo-periodic and pseudo-periodic strings separately,
we derive sketches for the whole $\Sigma^n$.
The following results provide efficient shift distance decoding procedures
for the circular $k$-mismatch sketches described in Section~\ref{sec:summary}.

For the exact case, using Theorem~\ref{thm:exact-nap-shift-sketch} and Proposition~\ref{prp:exact-ap-shift-sketch},
the same construction as in the proof of Theorem~\ref{thm:exact-circular-sketch} yields the following corollary. 
\begin{corollary}[see Theorem~\ref{thm:exact-circular-sketch}]
There exists a $k$-ESDS sketch of size $\Ohtilde(k)$ with
decoding time $\Ohtilde(k^2)$.
\end{corollary}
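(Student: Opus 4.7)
The plan is to mirror the combination argument from the proof of Theorem~\ref{thm:exact-circular-sketch}, but now equipped with the two shift-distance decoders rather than the per-shift decoders. For each $S \in \Sigma^n$, the encoder stores the encoding of Definition~\ref{def:sketchesexactcirc} whenever $S \in \Sigma^n \sm \H_{n,k}$ and the encoding of Definition~\ref{def:circksk-ap} whenever $S \in \H'_{n,k}$; strings that lie in the overlap $\H'_{n,k} \sm \H_{n,k}$ receive both components. Since $\H'_{n,k} \cup (\Sigma^n \sm \H_{n,k}) = \Sigma^n$, every string receives at least one component, and since each component is $\Ohtilde(k)$ in size, the total sketch size is $\Ohtilde(k)$.

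The decoding procedure, given the sketches of $S_1$ and $S_2$, first checks which components are mutually available. If both strings carry the non-pseudo-periodic component, the decoder invokes the procedure of Theorem~\ref{thm:exact-nap-shift-sketch}; otherwise, if both carry the pseudo-periodic component, it invokes the procedure of Proposition~\ref{prp:exact-ap-shift-sketch}. If neither pair of compatible components exists, then, without loss of generality, $S_1 \in \H_{n,k}$ while $S_2 \notin \H'_{n,k}$. In this case, Observation~\ref{obs:rootrot} yields $\shft^m(S_2) \notin \H'_{n,k}$ for every shift $m$, and then Observation~\ref{obs:ap} forces $\HAM(S_1, \shft^m(S_2)) > k$ for all $m$, so $\sh(S_1, S_2) > k$ and the decoder safely returns $\infty$.

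Correctness then follows from the correctness of the two component decoders together with the observation above, which guarantees that whenever $\sh(S_1, S_2) \le k$ at least one compatible pair of components is present. The running time is dominated by the $\Ohtilde(k^2)$ cost of the decoder of Theorem~\ref{thm:exact-nap-shift-sketch}, matched by the $\Ohtilde(k^2)$ cost of Proposition~\ref{prp:exact-ap-shift-sketch}, so the overall decoding time is $\Ohtilde(k^2)$. There is no real obstacle here beyond correctly exploiting the slight definitional overlap between $\H_{n,k}$ and $\H'_{n,k}$; this overlap is precisely what ensures that no pair $(S_1, S_2)$ with $\sh(S_1, S_2) \le k$ falls between the two cases.
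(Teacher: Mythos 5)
Your proposal is correct and follows essentially the same route as the paper, which simply reuses the combination argument of Theorem~\ref{thm:exact-circular-sketch} with the shift-distance decoders of Theorem~\ref{thm:exact-nap-shift-sketch} and Proposition~\ref{prp:exact-ap-shift-sketch}. Your handling of the incompatible-component case, invoking Observation~\ref{obs:rootrot} to rule out \emph{every} rotation before applying Observation~\ref{obs:ap}, is a correct (and slightly more explicit) rendering of the step the paper leaves implicit.
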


For the approximate case, using Theorem~\ref{thm:approx-nap-shift-sketch} and Proposition~\ref{prp:approx-ap-shift-sketch},
the same construction as in Lemma~\ref{lem:approx-circular-sketch-relaxed} yields a relaxed $(\eps,k)$-ASDS sketch of size $\Ohtilde(\eps^{-2}\sqrt{k})$ with decoding time $\Ohtilde(\eps^{-2}k)$.

\begin{proposition}[see Lemma~\ref{lem:approx-circular-sketch-relaxed}]
There exists a relaxed $(\eps,k)$-ASDS sketch of size $\Ohtilde(\eps^{-2}\sqrt{k})$ and decoding time of $\Ohtilde(\eps^{-2}k)$.
\end{proposition}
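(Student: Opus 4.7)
The plan is to mirror the hybrid composition from the proof of Lemma~\ref{lem:approx-circular-sketch-relaxed}, substituting the shift distance decoders of Theorem~\ref{thm:approx-nap-shift-sketch} and Proposition~\ref{prp:approx-ap-shift-sketch} for their fixed-shift counterparts. Concretely, for every $S\in \Sigma^n$, the encoding is the concatenation of the sketch of Definition~\ref{def:circepssk-ap} if $S\in \H'_{n,k}$ and the sketch of Definition~\ref{def:sketchesapproxcirc} if $S\in \Sigma^n\sm \H_{n,k}$; both components are present when $S\in \H'_{n,k}\sm \H_{n,k}$. Since each component has size $\Ohtilde(\eps^{-2}\sqrt k)$ (the pseudo-periodic one is in fact $\Ohtilde(\eps^{-1}\sqrt k)$), the total encoding size meets the claimed bound.

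On input of the sketches of $S_1,S_2\in \Sigma^n$, the decoder identifies which components are present in \emph{both} encodings. If both sketches carry the pseudo-periodic component, it invokes the decoder of Proposition~\ref{prp:approx-ap-shift-sketch}; if both carry the non-pseudo-periodic component, it invokes the decoder of Theorem~\ref{thm:approx-nap-shift-sketch}; if both options are available (i.e., $S_1,S_2 \in \H'_{n,k}\sm \H_{n,k}$), either subroutine is used. Theorem~\ref{thm:approx-nap-shift-sketch} delivers a non-relaxed $(\eps,k)$-ASDS on its domain, and therefore a fortiori a relaxed one (the regime $\sh(S_1,S_2)<\tfrac12 k$ automatically yields an output smaller than $\tfrac{1+\eps}2 k$), while Proposition~\ref{prp:approx-ap-shift-sketch} directly provides a relaxed $(\eps,k)$-ASDS; in either branch the decoding time is $\Ohtilde(\eps^{-2}k)$, matching the stated bound.

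The only case that needs an independent argument is that of \emph{incompatible} sketches, i.e., when no shared component exists. This happens precisely when (without loss of generality) $S_1\in \H_{n,k}$ while $S_2\in \Sigma^n\sm \H'_{n,k}$. In that case the algorithm simply returns $\infty$, and correctness is guaranteed by Observation~\ref{obs:rootrot} combined with Observation~\ref{obs:ap}: pseudo-periodicity is preserved by rotations, so every $\shft^m(S_2)$ lies outside $\H'_{n,k}$; on the other hand, since $S_1$ is $(3\gamma k,\gamma k)$-pseudo-periodic, every string within Hamming distance $k$ of $S_1$ is $(3\gamma k,(\gamma+1)k)$-pseudo-periodic, hence in $\H'_{n,k}$. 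Consequently $\HAM(S_1,\shft^m(S_2))>k$ for every $m$, so $\sh(S_1,S_2)>k$ and the output $\infty$ satisfies the relaxed $(\eps,k)$-ASDS specification.

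I do not anticipate a genuine technical obstacle; the construction is a faithful shift distance analogue of Lemma~\ref{lem:approx-circular-sketch-relaxed}, and the only nontrivial point is the incompatibility case above, which is handled by exactly the same pseudo-periodicity argument used there (applied uniformly across all rotations of $S_2$ via Observation~\ref{obs:rootrot}).
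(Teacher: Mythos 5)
Your proposal is correct and follows exactly the route the paper takes: reuse the encodings and the case analysis of Lemma~\ref{lem:approx-circular-sketch-relaxed}, swap in the shift-distance decoders of Theorem~\ref{thm:approx-nap-shift-sketch} and Proposition~\ref{prp:approx-ap-shift-sketch}, and handle the incompatible case via Observations~\ref{obs:rootrot} and~\ref{obs:ap}. The paper states this in one sentence; your write-up supplies the same argument in full detail, including the (correct) observation that the non-relaxed ASDS of Theorem~\ref{thm:approx-nap-shift-sketch} subsumes the relaxed guarantee.
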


The following  provides an alternative method for constructing $(\eps,k)$-ASDS sketches which improves the sketch size (for $k\ge \eps n$) but degrades the decoding time.

\begin{proposition}[see Property~\ref{prp:approx-circular-sketch-alternative}]\label{prp:approx-shift-sketch-alternative}
	There exists a decoding function which, together with the encoding $\circ_{\eps,k}$ of Definition~\ref{def:alternative},
	forms a relaxed $(\eps,k)$-ASDS sketch of $\Sigma^n$.
	The decoding time is $\Ohtilde(\frac{n^2}{\eps^2 k})$ with high probability.
\end{proposition}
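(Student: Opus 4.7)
The plan is to reuse the estimator already analyzed in Proposition~\ref{prp:approx-circular-sketch-alternative}, but apply it to every shift $m\in[n]$ simultaneously and return the smallest estimate. Concretely, the decoder iterates over every pair $(i,j)\in A\times B$ and, whenever $S_1[i]\ne S_2[j]$, increments a counter $c_m$ indexed by the implied shift $m=(j-i)\cmod n$; afterwards it sets $d_m=\frac{\eps^2 k}{\log n}\,c_m$ and returns $\min_{m\in[n]} d_m$. Since $\{(i,j)\in A\times B : (j-i)\cmod n = m\}$ is exactly the set $\{(i,i+m\cmod n) : i\in A\cap \rot^m(B)\}$, each $c_m$ equals $\HAM_{A\cap \rot^m(B)}(S_1,\shft^m(S_2))$, so $d_m$ is the same estimator used in Proposition~\ref{prp:approx-circular-sketch-alternative}.

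For the running time, observe that the loop touches each pair in $A\times B$ exactly once, so the total work is $\Ohtilde(|A|\cdot|B|)$. With elements of $A$ and $B$ sampled independently at rate $p=\sqrt{\log n/(\eps^2 k)}$, a Chernoff bound gives $|A|,|B|=\Ohtilde(np)$ with high probability, which yields $\Ohtilde(n^2 p^2)=\Ohtilde(n^2/(\eps^2 k))$ as required. Storing and scanning the $c_m$'s adds only $\Oh(n)$ work, which is dominated by the above whenever $k\le n/\eps^2$ (and the complementary regime is trivial since then $\Ohtilde(n^2/(\eps^2 k))=\Ohtilde(n)$ anyway).

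For correctness, fix any $m\in[n]$. The set $A\cap\rot^m(B)$ is a random subset of $[n]$ whose elements are sampled independently at rate $p^2=\log n/(\eps^2 k)$, so Lemma~\ref{lem:chernoff} ensures that $d_m\in(1\pm\eps)\HAM(S_1,\shft^m(S_2))$ whenever $\HAM(S_1,\shft^m(S_2))\ge \tfrac{1}{2}k$, and that $d_m>(1-\eps)k$ whenever $\HAM(S_1,\shft^m(S_2))>k$; in the remaining case $\HAM(S_1,\shft^m(S_2))<\tfrac{1}{2}k$ the same Chernoff bound gives $d_m<\tfrac{1+\eps}{2}k$. Using a slightly inflated constant inside the logarithm defining $p$ boosts the failure probability for each shift to $n^{-\Omega(1)}$ with a large exponent, which a union bound over all $n$ shifts turns into a high-probability guarantee for every $d_m$ simultaneously. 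Taking the minimum over $m$ then yields exactly the three conditions in the definition of a relaxed $(\eps,k)$-ASDS sketch, applied to $\sh(S_1,S_2)$ rather than $\HAM(S_1,\shft^m(S_2))$.

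The only genuine subtlety is the union bound: the concentration bound from Lemma~\ref{lem:chernoff} must be strong enough to survive simultaneous application to all $n$ candidate shifts, so one has to choose the hidden constant in the sampling rate so that the per-shift failure probability is, say, $n^{-c}$ for a sufficiently large $c$; beyond this, the argument is a routine combination of the Chernoff bound, the triangle inequality, and the amortized enumeration trick that makes the total decoding cost $\Ohtilde(|A|\cdot|B|)$ instead of $\Ohtilde(n\cdot |A|)$.
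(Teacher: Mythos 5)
Your proposal is correct and follows essentially the same route as the paper: the decoder enumerates all pairs in $A\times B$ to accumulate the counters $c_m=\HAM_{A\cap\rot^m(B)}(S_1,\shft^m(S_2))$ in $\Ohtilde(|A|\cdot|B|)=\Ohtilde(n^2/(\eps^2 k))$ time, invokes the per-shift guarantee of Property~\ref{prp:approx-circular-sketch-alternative} for every $m$, and returns the minimum rescaled estimate. Your explicit remark about choosing the sampling-rate constant large enough for the union bound over all $n$ shifts is a point the paper leaves implicit, but it does not change the argument.
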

\begin{proof}
	The decoding function, given the sketches of $S_1,S_2\in \Sigma^n$,
	computes a value\\ $d_m:=\HAM_{A \cap \rot^m(B)}(S_1,\shft^m(S_2))$ for each $m\in [n]$.
	For this, the algorithm iterates over $(i,S_1[i])$ with $i\in A$ (retrieved from the sketch of $S_1$)
	and $(i',S_2[i'])$ with $i'\in B$ (retrieved from the sketch of $S_2$),
	and increments  $d_{(i'-i)\cmod n}$ if $S_1[i]\ne S_2[i']$.

	As in the proof of Property~\ref{prp:approx-circular-sketch-alternative},
	$\frac{\eps^2 k}{\log n}\HAM_{A \cap \rot^m(B)}(S_1,\shft^m(S_2))$ is a $(1\pm \eps)$-approximation
	of $\HAM(S_1,\shft^m(S_2))$ with high probability provided that $\HAM(S_1,\shft^m(S_2))=\Omega(k)$ (and\\ $\frac{\eps^2 k}{\log n}\HAM_{A \cap \rot^m(B)}(S_1,\shft^m(S_2))=o(k)$ otherwise).
	Hence, the algorithm returns as an approximation of $\sh(S_1,S_2)$ the smallest value $\frac{\eps^2 k}{\log n}\HAM_{A \cap \rot^m(B)}(S_1,\shft^m(S_2))$ among $m\in [n]$.
\end{proof}

\begin{corollary}[see Theorem~\ref{thm:approx_circular_sketch}]
There exists an $(\eps,k)$-ASDS sketch of size $\Ohtilde(\eps^{-2}\sqrt{k})$ with decoding time $\Ohtilde(\eps^{-2}k)$, and an $(\eps,k)$-ASDS sketch of size $\Ohtilde(\eps^{-1.5}\sqrt{n})$ with decoding time $\Ohtilde(\eps^{-3}n)$.
\end{corollary}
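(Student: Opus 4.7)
The plan is to mirror the construction from the proof of Theorem~\ref{thm:approx_circular_sketch}, replacing the relaxed circular sketches with their shift distance counterparts developed in this section. For both bounds claimed, the sketch consists of $\Oh(\log k)$ relaxed $(\eps,k')$-ASDS sketches where $k'$ ranges over the powers of two in $[1\dd 2k]$. The combined decoder runs all $\Oh(\log k)$ relaxed decoders and returns the result corresponding to the smallest $k'$ for which the decoder outputs a value at most $k'$ (and $\infty$ if no such $k'$ exists). Correctness follows because, whenever $\sh(S_1,S_2)\le k$, the smallest such $k'$ lies in $[\sh(S_1,S_2),2\sh(S_1,S_2)]\cup\{1\}$, so the relaxed guarantee at that level (either $(1\pm\eps)$-approximation or output below $(1+\eps)k'/2$) implies a valid $(1\pm\eps)$-approximation of the shift distance overall.

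For the first bound, each relaxed $(\eps,k')$-ASDS sketch is taken from the proposition immediately preceding this corollary, of size $\Ohtilde(\eps^{-2}\sqrt{k'})$ and decoding time $\Ohtilde(\eps^{-2}k')$. Summing geometrically over $k'=1,2,4,\ldots,2k$, both the total size and the total decoding time are dominated by the top level $k'=\Theta(k)$, giving $\Ohtilde(\eps^{-2}\sqrt{k})$ and $\Ohtilde(\eps^{-2}k)$, respectively.

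For the second bound, we keep the same relaxed $(\eps,k')$-ASDS sketch when $k'\le \eps n$, but switch to the alternative construction of Proposition~\ref{prp:approx-shift-sketch-alternative}, of size $\Ohtilde(\frac{n}{\eps\sqrt{k'}})$ and decoding time $\Ohtilde(\frac{n^2}{\eps^2 k'})$, when $k'>\eps n$. At the threshold $k'=\eps n$, the two sizes coincide at $\Ohtilde(\eps^{-1.5}\sqrt{n})$, and both branches are geometric in $k'$ (one increasing, one decreasing), so the total size remains $\Ohtilde(\eps^{-1.5}\sqrt{n})$. For the decoding time, the small-$k'$ branch sums to $\Ohtilde(\eps^{-2}\cdot \eps n)=\Ohtilde(\eps^{-1}n)$ while the large-$k'$ branch is dominated at $k'=\Theta(\eps n)$, contributing $\Ohtilde(\frac{n^2}{\eps^2\cdot\eps n})=\Ohtilde(\eps^{-3}n)$; the latter dominates, matching the claim.

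The main obstacle, such as it is, is to verify that combining relaxed sketches at different scales actually yields a proper (non-relaxed) $(\eps,k)$-ASDS sketch rather than just yet another relaxed sketch. The key observation is that the relaxed guarantee at level $k'$ allows an arbitrary output below $(1+\eps)k'/2$ only when $\sh(S_1,S_2)<k'/2$, in which case the next smaller level $k'/2$ will itself produce either a true $(1\pm\eps)$-approximation or a value below $(1+\eps)k'/4<k'$, so the selection rule above reliably picks a level whose output is within $(1\pm\eps)$ of $\sh(S_1,S_2)$. This is essentially the same argument underlying Theorem~\ref{thm:approx_circular_sketch}, and everything else reduces to routine bookkeeping of the two geometric sums.
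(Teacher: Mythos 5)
Your construction and the size/time accounting are exactly what the paper intends: the corollary is obtained by running the doubling trick of Theorem~\ref{thm:approx_circular_sketch} with the relaxed $(\eps,k')$-ASDS sketches of this section in place of the relaxed ACS sketches, switching to Proposition~\ref{prp:approx-shift-sketch-alternative} once $k'>\eps n$, and both geometric sums come out as you computed ($\Ohtilde(\eps^{-2}\sqrt{k})$ size with $\Ohtilde(\eps^{-2}k)$ decoding, resp.\ $\Ohtilde(\eps^{-1.5}\sqrt{n})$ size with $\Ohtilde(\eps^{-3}n)$ decoding dominated by the levels near $k'=\eps n$).

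The gap is in the combination step that you (commendably) tried to make explicit. Your rule --- return the output of the smallest $k'$ whose decoder reports a value at most $k'$ --- is not justified by the relaxed guarantee, and the claim that this $k'$ lies in $[\sh(S_1,S_2),2\sh(S_1,S_2)]\cup\{1\}$ fails in both directions. For a level $k'<\sh(S_1,S_2)$, the relaxed definition only promises an output greater than $(1-\eps)k'$; an output in $\bigl((1-\eps)k',\,k'\bigr]$ is permitted, passes your test, and makes the decoder return a value of roughly $k'$, which can be far below $\sh(S_1,S_2)$. Conversely, the level $k^*\in[\sh,2\sh)$ may output $(1+\eps)\sh>k^*$ and fail your test, pushing the selection to $2k^*$, where $\sh$ may be strictly below $k'/2$ and the relaxed guarantee then allows an arbitrarily small output. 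To repair this one should (i) accept a level only when its output is at most $(1-\eps)k'$, so the third bullet of the relaxed guarantee excludes every level with $\sh>k'$, and (ii) invoke a property that the concrete relaxed decoders satisfy but the abstract relaxed definition does not: each finite value they report is (w.h.p.) a $(1\pm\eps)$- or $\pm\tfrac{\eps k'}{2}$-additive approximation of some $\HAM(S_1,\shft^m(S_2))\ge\sh(S_1,S_2)$, so the output never underestimates $\sh(S_1,S_2)$ by more than $O(\eps k')$. The paper's own one-line proof of Theorem~\ref{thm:approx_circular_sketch} glosses over the same point, but the explicit rule you wrote down is the one that needs fixing.
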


\bibliography{bib}

\begin{thebibliography}{10}

\bibitem{Abrahamson87}
Karl~R. Abrahamson.
\newblock Generalized string matching.
\newblock {\em {SIAM} Journal on Computing}, 16(6):1039--1051, 1987.
\newblock \href {http://dx.doi.org/10.1137/0216067}
  {\path{doi:10.1137/0216067}}.

\bibitem{ALON1999137}
Noga Alon, Yossi Matias, and Mario Szegedy.
\newblock The space complexity of approximating the frequency moments.
\newblock {\em Journal of Computer and System Sciences}, 58(1):137--147, 1999.
\newblock \href {http://dx.doi.org/10.1006/jcss.1997.1545}
  {\path{doi:10.1006/jcss.1997.1545}}.

\bibitem{ACIKKRRW18}
Mai Alzamel, Maxime Crochemore, Costas~S. Iliopoulos, Tomasz Kociumaka, Ritu
  Kundu, Jakub Radoszewski, Wojciech Rytter, and Tomasz Waleń.
\newblock How much different are two words with different shortest periods.
\newblock In Lazaros~S. Iliadis, Ilias Maglogiannis, and Vassilis~P.
  Plagianakos, editors, {\em 14th International Conference on Artificial
  Intelligence Applications and Innovations, {AIAI} 2018, Workshops}, volume
  520 of {\em {IFIP} Advances in Information and Communication Technology},
  pages 168--178. Springer, 2018.
\newblock \href {http://dx.doi.org/10.1007/978-3-319-92016-0_16}
  {\path{doi:10.1007/978-3-319-92016-0_16}}.

\bibitem{AmirLP04}
Amihood Amir, Moshe Lewenstein, and Ely Porat.
\newblock Faster algorithms for string matching with $k$ mismatches.
\newblock {\em Journal of Algorithms}, 50(2):257--275, 2004.
\newblock \href {http://dx.doi.org/10.1016/S0196-6774(03)00097-X}
  {\path{doi:10.1016/S0196-6774(03)00097-X}}.

\bibitem{AGMP13}
Alexandr Andoni, Assaf Goldberger, Andrew McGregor, and Ely Porat.
\newblock Homomorphic fingerprints under misalignments: sketching edit and
  shift distances.
\newblock In Dan Boneh, Tim Roughgarden, and Joan Feigenbaum, editors, {\em
  45th Annual {ACM} {SIGACT} Symposium on Theory of Computing, {STOC} 2013},
  pages 931--940. {ACM}, 2013.
\newblock \href {http://dx.doi.org/10.1145/2488608.2488726}
  {\path{doi:10.1145/2488608.2488726}}.

\bibitem{AIK08}
Alexandr Andoni, Piotr Indyk, and Robert Krauthgamer.
\newblock Earth mover distance over high-dimensional spaces.
\newblock In Shang{-}Hua Teng, editor, {\em 19th Annual {ACM-SIAM} Symposium on
  Discrete Algorithms, {SODA} 2008}, pages 343--352. {SIAM}, 2008.
\newblock URL: \url{http://dl.acm.org/citation.cfm?id=1347082.1347120}.

\bibitem{AKO10}
Alexandr Andoni, Robert Krauthgamer, and Krzysztof Onak.
\newblock Polylogarithmic approximation for edit distance and the asymmetric
  query complexity.
\newblock In {\em 51th Annual {IEEE} Symposium on Foundations of Computer
  Science, {FOCS} 2010}, pages 377--386. {IEEE} Computer Society, 2010.
\newblock \href {http://dx.doi.org/10.1109/FOCS.2010.43}
  {\path{doi:10.1109/FOCS.2010.43}}.

\bibitem{DBLP:conf/approx/Bar-YossefJKK04}
Ziv Bar{-}Yossef, T.~S. Jayram, Robert Krauthgamer, and Ravi Kumar.
\newblock The sketching complexity of pattern matching.
\newblock In Klaus Jansen, Sanjeev Khanna, Jos{\'{e}} D.~P. Rolim, and Dana
  Ron, editors, {\em 8th International Workshop on Randomization and
  Computation, {RANDOM} 2004}, volume 3122 of {\em LNCS}, pages 261--272.
  Springer, 2004.
\newblock \href {http://dx.doi.org/10.1007/978-3-540-27821-4_24}
  {\path{doi:10.1007/978-3-540-27821-4_24}}.

\bibitem{BZ16}
Djamal Belazzougui and Qin Zhang.
\newblock Edit distance: Sketching, streaming, and document exchange.
\newblock In Irit Dinur, editor, {\em 57th Annual {IEEE} Symposium on
  Foundations of Computer Science, {FOCS} 2016}, pages 51--60. {IEEE} Computer
  Society, 2016.
\newblock \href {http://dx.doi.org/10.1109/FOCS.2016.15}
  {\path{doi:10.1109/FOCS.2016.15}}.

\bibitem{BWK19}
Karl Bringmann, Marvin K{\"{u}}nnemann, and Philip Wellnitz.
\newblock Few matches or almost periodicity: Faster pattern matching with
  mismatches in compressed texts.
\newblock In Timothy~M. Chan, editor, {\em 30th Annual {ACM-SIAM} Symposium on
  Discrete Algorithms, {SODA} 2019}, pages 1126--1145. {SIAM}, 2019.
\newblock \href {http://dx.doi.org/10.1137/1.9781611975482.69}
  {\path{doi:10.1137/1.9781611975482.69}}.

\bibitem{CGK16}
Diptarka Chakraborty, Elazar Goldenberg, and Michal Kouck{\'{y}}.
\newblock Streaming algorithms for embedding and computing edit distance in the
  low distance regime.
\newblock In Daniel Wichs and Yishay Mansour, editors, {\em 48th Annual {ACM}
  {SIGACT} Symposium on Theory of Computing, {STOC} 2016}, pages 712--725.
  {ACM}, 2016.
\newblock \href {http://dx.doi.org/10.1145/2897518.2897577}
  {\path{doi:10.1145/2897518.2897577}}.

\bibitem{approxk}
Timothy~M. Chan, Shay Golan, Tomasz Kociumaka, Tsvi Kopelowitz, and Ely Porat.
\newblock Approximating text-to-pattern {H}amming distances.
\newblock In Konstantin Makarychev, Yury Makarychev, Madhur Tulsiani, Gautam
  Kamath, and Julia Chuzhoy, editors, {\em 52nd Annual {ACM} {SIGACT} Symposium
  on Theory of Computing, {STOC} 2020}, pages 643--656. {ACM}, 2020.
\newblock \href {http://dx.doi.org/10.1145/3357713.3384266}
  {\path{doi:10.1145/3357713.3384266}}.

\bibitem{CKW20}
Panagiotis Charalampopoulos, Tomasz Kociumaka, and Philip Wellnitz.
\newblock Faster approximate pattern matching: {A} unified approach, 2020.
\newblock \href {http://arxiv.org/abs/2004.08350} {\path{arXiv:2004.08350}}.

\bibitem{k-mismatch}
Rapha{\"{e}}l Clifford, Allyx Fontaine, Ely Porat, Benjamin Sach, and Tatiana
  Starikovskaya.
\newblock The $k$-mismatch problem revisited.
\newblock In Robert Krauthgamer, editor, {\em 27th Annual {ACM-SIAM} Symposium
  on Discrete Algorithms, {SODA} 2016}, pages 2039--2052. {SIAM}, 2016.
\newblock \href {http://dx.doi.org/10.1137/1.9781611974331.ch142}
  {\path{doi:10.1137/1.9781611974331.ch142}}.

\bibitem{CKP19}
Rapha{\"{e}}l Clifford, Tomasz Kociumaka, and Ely Porat.
\newblock The streaming $k$-mismatch problem.
\newblock In Timothy~M. Chan, editor, {\em 30th Annual {ACM-SIAM} Symposium on
  Discrete Algorithms, {SODA} 2019}, pages 1106--1125. {SIAM}, 2019.
\newblock \href {http://dx.doi.org/10.1137/1.9781611975482.68}
  {\path{doi:10.1137/1.9781611975482.68}}.

\bibitem{HDstream}
Rapha{\"{e}}l Clifford and Tatiana Starikovskaya.
\newblock Approximate {H}amming distance in a stream.
\newblock In Ioannis Chatzigiannakis, Michael Mitzenmacher, Yuval Rabani, and
  Davide Sangiorgi, editors, {\em 43rd International Colloquium on Automata,
  Languages, and Programming, {ICALP} 2016}, volume~55 of {\em LIPIcs}, pages
  20:1--20:14. Schloss Dagstuhl--Leibniz-Zentrum f{\"{u}}r Informatik, 2016.
\newblock \href {http://dx.doi.org/10.4230/LIPIcs.ICALP.2016.20}
  {\path{doi:10.4230/LIPIcs.ICALP.2016.20}}.

\bibitem{DBLP:journals/cacm/Cormode17}
Graham Cormode.
\newblock Data sketching.
\newblock {\em Communications of the {ACM}}, 60(9):48--55, 2017.
\newblock \href {http://dx.doi.org/10.1145/3080008}
  {\path{doi:10.1145/3080008}}.

\bibitem{cormode2011sketch}
Graham Cormode, Minos Garofalakis, Peter~J. Haas, and Chris Jermaine.
\newblock Synopses for massive data: Samples, histograms, wavelets, sketches.
\newblock {\em Foundations and Trends in Databases}, 4(1--3):1--294, 2011.
\newblock \href {http://dx.doi.org/10.1561/1900000004}
  {\path{doi:10.1561/1900000004}}.

\bibitem{CM11}
Michael~S. Crouch and Andrew McGregor.
\newblock Periodicity and cyclic shifts via linear sketches.
\newblock In Leslie~Ann Goldberg, Klaus Jansen, R.~Ravi, and Jos{\'{e}} D.~P.
  Rolim, editors, {\em 14th International Workshop on Approximation Algorithms
  for Combinatorial Optimization, {APPROX} 2011}, volume 6845 of {\em LNCS},
  pages 158--170. Springer, 2011.
\newblock \href {http://dx.doi.org/10.1007/978-3-642-22935-0_14}
  {\path{doi:10.1007/978-3-642-22935-0_14}}.

\bibitem{Doerr}
Benjamin Doerr.
\newblock Probabilistic tools for the analysis of randomized optimization
  heuristics.
\newblock In {\em Natural Computing Series}, pages 1--87. Springer
  International Publishing, 2020.
\newblock \href {http://dx.doi.org/10.1007/978-3-030-29414-4_1}
  {\path{doi:10.1007/978-3-030-29414-4_1}}.

\bibitem{FP:1974}
Michael~J. Fischer and Michael~S. Paterson.
\newblock String matching and other products.
\newblock In Richard~M. Karp, editor, {\em Complexity of Computation}, volume~7
  of {\em SIAM--AMS Proceedings}, pages 113--125. AMS, 1974.

\bibitem{GU18}
Paweł Gawrychowski and Przemysław Uznański.
\newblock Towards unified approximate pattern matching for {H}amming and
  {$L_1$} distance.
\newblock In Ioannis Chatzigiannakis, Christos Kaklamanis, D{\'{a}}niel Marx,
  and Donald Sannella, editors, {\em 45th International Colloquium on Automata,
  Languages, and Programming, {ICALP} 2018}, volume 107 of {\em LIPIcs}, pages
  62:1--62:13. Schloss Dagstuhl--Leibniz-Zentrum f{\"{u}}r Informatik, 2018.
\newblock \href {http://dx.doi.org/10.4230/LIPIcs.ICALP.2018.62}
  {\path{doi:10.4230/LIPIcs.ICALP.2018.62}}.

\bibitem{DBLP:conf/icalp/GolanKP18}
Shay Golan, Tsvi Kopelowitz, and Ely Porat.
\newblock Towards optimal approximate streaming pattern matching by matching
  multiple patterns in multiple streams.
\newblock In Ioannis Chatzigiannakis, Christos Kaklamanis, D{\'{a}}niel Marx,
  and Donald Sannella, editors, {\em 45th International Colloquium on Automata,
  Languages, and Programming, {ICALP} 2018}, volume 107 of {\em LIPIcs}, pages
  65:1--65:16. Schloss Dagstuhl--Leibniz-Zentrum f{\"{u}}r Informatik, 2018.
\newblock \href {http://dx.doi.org/10.4230/LIPIcs.ICALP.2018.65}
  {\path{doi:10.4230/LIPIcs.ICALP.2018.65}}.

\bibitem{GKP19}
Shay Golan, Tsvi Kopelowitz, and Ely Porat.
\newblock Streaming pattern matching with $d$ wildcards.
\newblock {\em Algorithmica}, 81(5):1988--2015, 2019.
\newblock \href {http://dx.doi.org/10.1007/s00453-018-0521-7}
  {\path{doi:10.1007/s00453-018-0521-7}}.

\bibitem{H50}
Richard~W. Hamming.
\newblock Error detecting and error correcting codes.
\newblock {\em Bell System Technical Journal}, 29(2):147--160, 1950.
\newblock \href {http://dx.doi.org/10.1002/j.1538-7305.1950.tb00463.x}
  {\path{doi:10.1002/j.1538-7305.1950.tb00463.x}}.

\bibitem{DBLP:journals/ipl/HuangSZZ06}
Wei Huang, Yaoyun Shi, Shengyu Zhang, and Yufan Zhu.
\newblock The communication complexity of the {H}amming distance problem.
\newblock {\em Information Processing Letters}, 99(4):149--153, 2006.
\newblock \href {http://dx.doi.org/10.1016/j.ipl.2006.01.014}
  {\path{doi:10.1016/j.ipl.2006.01.014}}.

\bibitem{Karloff93}
Howard~J. Karloff.
\newblock Fast algorithms for approximately counting mismatches.
\newblock {\em Information Processing Letters}, 48(2):53--60, 1993.
\newblock \href {http://dx.doi.org/10.1016/0020-0190(93)90177-B}
  {\path{doi:10.1016/0020-0190(93)90177-B}}.

\bibitem{KR87}
Richard~M. Karp and Michael~O. Rabin.
\newblock Efficient randomized pattern-matching algorithms.
\newblock {\em {IBM} Journal of Research and Development}, 31(2):249--260,
  1987.
\newblock \href {http://dx.doi.org/10.1147/rd.312.0249}
  {\path{doi:10.1147/rd.312.0249}}.

\bibitem{KN06}
Subhash Khot and Assaf Naor.
\newblock Nonembeddability theorems via fourier analysis.
\newblock {\em Mathematische Annalen}, 334(4):821--852, 2006.
\newblock \href {http://dx.doi.org/10.1007/s00208-005-0745-0}
  {\path{doi:10.1007/s00208-005-0745-0}}.

\bibitem{KP:15}
Tsvi Kopelowitz and Ely Porat.
\newblock Breaking the variance: Approximating the {H}amming distance in
  {$1/\varepsilon$} time per alignment.
\newblock In Venkatesan Guruswami, editor, {\em 56th Annual {IEEE} Symposium on
  Foundations of Computer Science, {FOCS} 2015}, pages 601--613. {IEEE}
  Computer Society, 2015.
\newblock \href {http://dx.doi.org/10.1109/FOCS.2015.43}
  {\path{doi:10.1109/FOCS.2015.43}}.

\bibitem{KopelowitzP18}
Tsvi Kopelowitz and Ely Porat.
\newblock A simple algorithm for approximating the text-to-pattern {H}amming
  distance.
\newblock In Raimund Seidel, editor, {\em 1st Symposium on Simplicity in
  Algorithms, {SOSA} 2018}, volume~61 of {\em {OASICS}}, pages 10:1--10:5.
  Schloss Dagstuhl - Leibniz-Zentrum f{\"{u}}r Informatik, 2018.
\newblock \href {http://dx.doi.org/10.4230/OASIcs.SOSA.2018.10}
  {\path{doi:10.4230/OASIcs.SOSA.2018.10}}.

\bibitem{K:1987}
S.~Rao Kosaraju.
\newblock Efficient string matching.
\newblock Manuscript, 1987.

\bibitem{DBLP:journals/siamcomp/KushilevitzOR00}
Eyal Kushilevitz, Rafail Ostrovsky, and Yuval Rabani.
\newblock Efficient search for approximate nearest neighbor in high dimensional
  spaces.
\newblock {\em {SIAM} Journal on Computing}, 30(2):457--474, 2000.
\newblock \href {http://dx.doi.org/10.1137/S0097539798347177}
  {\path{doi:10.1137/S0097539798347177}}.

\bibitem{nelson2012sketching}
Jelani Nelson.
\newblock Sketching and streaming algorithms for processing massive data.
\newblock {\em {ACM} Crossroads}, 19(1):14--19, 2012.
\newblock \href {http://dx.doi.org/10.1145/2331042.2331049}
  {\path{doi:10.1145/2331042.2331049}}.

\bibitem{OR07}
Rafail Ostrovsky and Yuval Rabani.
\newblock Low distortion embeddings for edit distance.
\newblock {\em Journal of the {ACM}}, 54(5):23, 2007.
\newblock URL: \url{https://doi.org/10.1145/1284320.1284322}, \href
  {http://dx.doi.org/10.1145/1284320.1284322}
  {\path{doi:10.1145/1284320.1284322}}.

\bibitem{Porat:09}
Benny Porat and Ely Porat.
\newblock Exact and approximate pattern matching in the streaming model.
\newblock In {\em 50th Annual {IEEE} Symposium on Foundations of Computer
  Science, {FOCS} 2009}, pages 315--323. {IEEE} Computer Society, 2009.
\newblock \href {http://dx.doi.org/10.1109/FOCS.2009.11}
  {\path{doi:10.1109/FOCS.2009.11}}.

\bibitem{PL07}
Ely Porat and Ohad Lipsky.
\newblock Improved sketching of {H}amming distance with error correcting.
\newblock In Bin Ma and Kaizhong Zhang, editors, {\em 18th Annual Symposium on
  Combinatorial Pattern Matching, {CPM} 2007}, volume 4580 of {\em LNCS}, pages
  173--182. Springer, 2007.
\newblock \href {http://dx.doi.org/10.1007/978-3-540-73437-6_19}
  {\path{doi:10.1007/978-3-540-73437-6_19}}.

\bibitem{RS}
Jakub Radoszewski and Tatiana Starikovskaya.
\newblock Streaming $k$-mismatch with error correcting and applications.
\newblock {\em Information and Computation}, 271:104513, 2020.
\newblock \href {http://dx.doi.org/10.1016/j.ic.2019.104513}
  {\path{doi:10.1016/j.ic.2019.104513}}.

\bibitem{DBLP:journals/corr/abs-1907-04405}
Tatiana Starikovskaya, Michal Svagerka, and Przemysław Uznański.
\newblock {$L_p$} pattern matching in a stream.
\newblock In Jarosław Byrka and Raghu Meka, editors, {\em 23rd International
  Workshop on Approximation Algorithms for Combinatorial Optimization, {APPROX}
  2020}, volume 176 of {\em LIPIcs}, pages 35:1--35:23. Schloss
  Dagstuhl--Leibniz-Zentrum f{\"{u}}r Informatik, 2020.
\newblock \href {http://dx.doi.org/10.4230/LIPIcs.APPROX/RANDOM.2020.35}
  {\path{doi:10.4230/LIPIcs.APPROX/RANDOM.2020.35}}.

\bibitem{DBLP:conf/soda/Woodruff04}
David~P. Woodruff.
\newblock Optimal space lower bounds for all frequency moments.
\newblock In J.~Ian Munro, editor, {\em 15th Annual {ACM-SIAM} Symposium on
  Discrete Algorithms, {SODA} 2004}, pages 167--175. {SIAM}, 2004.
\newblock URL: \url{http://dl.acm.org/citation.cfm?id=982792.982817}.

\end{thebibliography}
\end{document}